\newcommand{\nil}{\mathbf{0}}
\newcommand{\sop}[3][{}]{\mathcal{#2}_{#1}\!\left(#3\right)\!}
\newcommand{\unitary}[3][{}]{\text{#2}_{#1}(#3)}
\newcommand{\meas}[3][M]{#1(#2 \rhd{} #3)}
\newcommand{\varset}{\text{Var}}
\newcommand{\opset}{\text{Op}}
\newcommand{\measset}{\text{Meas}}
\newcommand{\chanset}{\text{Chan}}
\newcommand{\chtype}[1]{\widehat{#1}}
\newcommand{\ntype}{\mathbb{N}}
\newcommand{\btype}{\mathbb{B}}
\newcommand{\qtype}{\mathcal{Q}}
\newcommand{\confset}{\mathit{Conf}}
\newcommand{\hilbert}{\mathcal{H}}
\newcommand{\qubit}{\widehat{\mathcal{H}}}
\newcommand{\qubits}[1]{\widehat{\mathcal{H}}^{\otimes{}#1}}
\newcommand{\ite}[3]{\textbf{if } #1 \textbf{ then } #2 \textbf{ else } #3}
\newcommand{\proc}[1]{\text{\textbf{#1}}}
\newcommand{\iconf}{\mathcal{C}}
\DeclarePairedDelimiter{\ptag}{{\llparenthesis\,}}{{\,\rrparenthesis}}
\newcommand{\barr}{\iota}
\newcommand{\confl}{\mathmbox{\big\langle\hspace{-4.4pt}\big\langle}}
\newcommand{\confr}{\mathmbox{\big\rangle\hspace{-4.4pt}\big\rangle}}
\newcommand{\conf}[1]{\confl{#1}\confr}
\newcommand{\singleton}{\overline}
\newcommand{\distelem}[2]{#1 \bullet #2}
\newcommand{\psum}[1]{\tensor[_{#1}]{\oplus}{}}
\newcommand{\rel}{\mathcal{R}}
\newcommand{\dist}[1]{\mathcal{D}(#1)}
\newcommand{\density}[1]{\mathcal{DO}(#1)}
\newcommand{\pdensity}[1]{\mathcal{D}^{\leq 1}(#1)}
\newcommand{\ptrace}[2]{\tr_{\tilde{#1}}\left(#2\right)}
\newcommand{\soset}[1]{\mathcal{SO}(#1)}
\newcommand{\tsoset}[1]{\mathcal{TS}(#1)}
\newcommand{\cfield}{\mathbb{C}}
\newcommand{\support}[1]{\lceil#1\rceil}
\DeclareMathOperator{\longsquiggly}{--\hspace{-4.15pt}\rightsquigarrow}
\newcommand{\downsquiggly}[1][{}]{\rotatebox[origin=c]{270}{$\longsquiggly$}_{#1}}
\newcommand{\expar}{\parallel}
\newcommand{\kp}{\ket{\psi}}
\newcommand{\kf}{\ket{\phi}}
\newcommand{\kz}{\ket{0}}
\newcommand{\ko}{\ket{1}}
\newcommand{\kpl}{\ket{+}}
\newcommand{\km}{\ket{-}}
\newcommand{\dket}[1]{\ketbra{#1}}
\newcommand{\confbot}{\confset_{\!\!\bot}}
\newcommand{\sconf}[1]{\singleton{\conf{#1}}}
\newcommand{\true}{\mathit{tt}}
\newcommand{\false}{\mathit{ff}}
\newcommand{\rulename}[1]{\mbox{\scriptsize\scshape #1}}
\newcommand{\blank}{{\,\cdot\,}}
\newcommand{\com}[1]{{\color{red} #1}}
\title[Quantum Bisimilarity via Barbs and Contexts]{Quantum Bisimilarity via Barbs and Contexts:\\Curbing the Power of Non-Deterministic Observers}
\author{Lorenzo Ceragioli}
\affiliation{%
  \institution{IMT School for Advanced Studies Lucca}
  \city{Lucca}
  \country{Italy}}
\email{lorenzo.ceragioli@imtlucca.it}
\author{Fabio Gadducci}
\affiliation{%
  \institution{University of Pisa}
  \department{Department of Computer Science}
  \city{Pisa}
  \country{Italy}}
\email{fabio.gadducci@unipi.it}
\author{Giuseppe Lomurno}
\affiliation{%
  \institution{University of Pisa}
  \department{Department of Computer Science}
  \city{Pisa}
  \country{Italy}}
\email{giuseppe.lomurno@phd.unipi.it}
\author{Gabriele Tedeschi}
\affiliation{%
  \institution{University of Pisa}
  \department{Department of Computer Science}
  \city{Pisa}
  \country{Italy}}
\email{gabriele.tedeschi@phd.unipi.it}
\keywords{Quantum Communication, Linear Process Calculi, Behavioural Equivalence, Probabilistic Bisimulation.}
\begin{document}

\begin{abstract}
	Past years have seen the development of a few proposals for quantum
	extensions of process calculi. The rationale is clear: with the development
	of quantum communication protocols, there is a need to abstract and focus on
	the basic features of quantum concurrent systems, like CCS and CSP have done
	for their classical counterparts. So far, though, no accepted standard has
	emerged, neither for the syntax nor for the behavioural semantics.
  Indeed, the various proposals do not agree on what should be the observational
  properties of quantum values, and as a matter of fact,
  the soundness of such properties has never been validated against the prescriptions of quantum theory.

  To this aim, we introduce a new calculus, Linear Quantum CCS (lqCCS), and investigate
  the features of behavioural equivalences based on barbs and contexts.
	Our calculus can be thought of as an asynchronous,
	linear version of qCCS, which is in turn based on value-passing CCS\@. 
	The
	combination of linearity and asynchronous communication fits well with the properties of quantum
	systems (e.g.\ the no-cloning theorem),
	since it ensures that each qubit is sent
	exactly once, precisely specifying
	which qubits of a process interact with the
	context.

  We exploit contexts to examine how bisimilarities relate to quantum theory.
  We show that the observational power of general contexts is incompatible with quantum theory:
  roughly, they can perform non-deterministic moves depending on quantum values
  without measuring (hence perturbing) them.

	Therefore, we refine the operational semantics in order to prevent contexts from performing unfeasible
	non-deterministic choices. This induces a coarser bisimilarity that better
	fits the quantum setting: ($i$) it lifts the indistinguishability of quantum states to the distributions of processes and,
  despite the additional constraints, ($ii$) it preserves the expressivity of non-deterministic choices based on classical information.
  To the best of our knowledge, our semantics is the first one that satisfies the two properties above.
  
  \textbf{
  This is the extended version of the POPL2024 paper \emph{Quantum Bisimilarity via Barbs and Contexts: Curbing the Power of Non-Deterministic Observers}.
  }
\end{abstract}

\maketitle

\section{Introduction}

Quantum computing is a promising emerging technology that
exploits non-classical phenomena described by quantum mechanics, such as
entanglement and superposition. The basic component of quantum algorithms and
protocols is the \emph{qubit}, a system that can be in one of two basis states $\kz$ and $\ko$, as well as in 
any linear combination of them, called a \emph{superposition}. 
The state of a quantum system is modelled as a set of qubits on which the programmer
applies various transformations. Differently from
classical systems, the state of a composite quantum system can be
\emph{entangled}, i.e.\ the subsystems cannot be described separately.
Moreover, reading the state of a qubit (``measuring'' it, in quantum
jargon), causes its state to probabilistically change to one of the
basis states. 
Finally, the \emph{no-cloning theorem} forbids copying qubits and thus poses serious constraints to programmers.

Both theory and implementations of quantum computing attracted considerable research
efforts in the last decades, leading to quantum algorithms with
more than polynomial speedup over classical
counterparts~\cite{shor_algorithms_1994,harrow_quantum_2009} and quantum protocols, e.g.\ for key
distribution~\cite{bb84,poppe_practical_2004} and leader
election~\cite{tani_exact_2012}. Practical applications, though, require quantum computers with large enough
memories. This is a challenging task, as it is difficult to maintain quantum properties among a big number of qubits.
A solution seems to lie on distributed computing,
by suitably linking multiple quantum computers~\cite{kimble_quantum_2008}.

With the recent advances, the need has emerged for verification techniques
applicable to quantum distributed algorithms and protocols. 
Concerning purely probabilistic systems, several models have been proposed so far:
some only target the probabilistic behaviour, like Markov Chains~\cite{SokolovaProbabilistic}, while others also take into account pure non-determinism, like Segala Automata~\cite{Segala95} and Probabilistic Transition Systems~\cite{hennessy_exploring_2012}.
This second approach appears to be more adequate to model protocols where actors can perform their choices freely, 
and not only according to some predefined probability distribution.
Process calculi and bisimilarity have been successful in modelling and verifying classical concurrent systems characterized by probabilistic and non-deterministic behaviours.
We expect the same will hold in the quantum case, where different process calculi and behavioural equivalences have been proposed
that display both quantum and non-deterministic features.
While the features of these calculi are mostly comparable, the proposed bisimilarities greatly vary from one work to the other, and are seldom compared with each other and with quantum theory.
Indeed, they turn out to disagree on several simple cases which naturally occur when modelling real-world protocols. Moreover, such discrepancies are partially due to the fact that some proposed bisimilarities do not fit what is prescribed by quantum theory. 
In fact, \citet{davidson_formal_2012} and \citet{kubota_application_noyear} prove that processes sending indistinguishable quantum values are spuriously discriminated.
This discrepancy is yet to be investigated in depth,
and there are no correctness results relating bisimilarity to indistinguishability in quantum theory.
We exemplify in~\autoref{tab:examples} some cases on which the current proposals diverge, and investigate the underlying causes.
The works we compare are QPAlg by~\citet{lalire_relations_2006}, CQP by~\citet{davidson_formal_2012} and qCCS, which comes with two different bisimilarities: $\sim_{p}$ proposed by~\citet{deng_open_2012}, and $\sim_{d}$ by~\citet{feng_toward_2015-1}.
For each row of the table, we discuss the prescriptions of quantum theory, reporting violations and proposing a possible solution.

\begin{table}
	\caption{Recap of the main differences between the proposed bisimilarities.}
	\label{tab:examples}
	{\small
	\centering

	\renewcommand{\arraystretch}{1.1} %

    \begin{tabular}{l@{\hspace*{0.5em}}c@{\hspace*{0.8em}}c@{\hspace*{0.8em}}c@{\hspace*{0.8em}}c}
			\toprule
			\textbf{PAIR OF PROCESSES (in lqCCS syntax)} & QPAlg & CQP & qCCS & lqCCS \\
			\midrule
      $c?x.\unitary{H}{x}.\nil\ $ \textbf{and} $\ c?x.\unitary{X}{x}.\nil$ & $\sim$ & $\sim$ & $\not\sim_{p},\, \not\sim_{d}$ & illegal                          \\
      $c?x.\unitary{H}{x}.\nil_x$ \textbf{and} $c?x.\unitary{X}{x}.\nil_x$ & $\sim$ & $\sim$ & $\sim_{p},\, \sim_{d}$
			 & $\sim_{s},\, \sim_{cs}$ \\
      $c?x.\unitary{H}{x}.d!x$ \textbf{and} $c?x.\unitary{X}{x}.d!x$ & $\not\sim$ & $\not\sim$ & $\not\sim_{p},\, \not\sim_{d}$ & $\not\sim_{s},\, \not\sim_{cs}$  \\
      $\unitary[\frac{1}{4}I]{Set}{q_1, q_2}. (c!q_1 \parallel c!q_2)$ \textbf{and} $\unitary[\scriptscriptstyle\dket{\Phi^+}]{Set}{q_1, q_2}. (c!q_1 \parallel c!q_2)$ & $\sim$ & $\not\sim$ & $\not\sim_{p},\, \not\sim_{d}$ & $\not\sim_{s},\, \not\sim_{cs}$  \\
      $\unitary[\scriptscriptstyle\dket{+}]{Set}{q}. \meas[M_{01}]{q}{x}. c!q$ \textbf{and}
      $\unitary[\scriptscriptstyle\dket{0}]{Set}{q}. \meas[M_{\pm}]{q}{x}. c!q$ & $\not\sim$    & $\sim$     &
			$\not\sim_{p},\, \sim_{d}$                                                                               & $\not \sim_{s},\, \sim_{cs}$                                                                                                              \\
        $\unitary[\scriptscriptstyle\dket{+}]{Set}{q}. \meas[M_{01}]{q}{x}. (c!q + d!q)$ \textbf{and}
        $\unitary[\scriptscriptstyle\dket{0}]{Set}{q}. \meas[M_{\pm}]{q}{x}. (c!q + d!q)$
			  & $\not\sim$ & $\sim$ & $\not\sim_{p},\, \sim_{d}$ & $\not \sim_{s},\, \not\sim_{cs}$ \\
			\bottomrule
		\end{tabular}}
\end{table}

In fact, to tackle these foundational issues,
we introduce a new process calculus, namely \emph{linear quantum CCS} (lqCCS), 
which offers the main features common to the previous proposals,
and use it as a framework for exploring behavioural equivalence for quantum systems.
Our calculus builds up on qCCS
(in turn inspired by value-passing CCS~\cite{vpCCS}), yet
offers asynchronous communication and a linear type system.
Quantum systems are modelled as \emph{configurations} in a stateful manner,
with the state of the qubits alongside the processes.
Furthermore, the
no-cloning theorem prescribes that once a qubit has been sent, the sender
cannot use it any more.
To comply with this requirement, quantum process calculi
usually enforce \emph{affinity}, guaranteeing that each qubit is sent at most
once.
We go one step further by requiring each qubit to be sent or discarded
(i.e.\ sent on a restricted channel) \emph{exactly} once,
thus forcing the observability of each qubit to be
clearly defined.
This allows us to resolve a superficial discrepancy of the proposed behavioural equivalences, focusing on unambiguous cases only.
Take as example the pair of processes of the first row of~\autoref{tab:examples}.
	Both receive a qubit on channel $c$ (with $c?x$),
	modify it (with either $H(x)$ or $X(x)$), and then terminate ($\nil$). 
	The two
	processes apply different transformations, resulting in different quantum
	states.
		Both CQP and QPAlg assume unsent qubits are not visible and deem $P$ and $Q$ bisimilar,
	 while the bisimilarities for qCCS do the opposite.
	The linear typing of lqCCS forces to specify visibility, and allows matching these different results employing an appropriate discarding discipline 
		(i.e.\ the processes are equivalent if the qubit is discarded using $\nil_x$ as in the second row of~\autoref{tab:examples}, they are distinguishable if it is sent on a visible channel as in the third row).

We define a saturated probabilistic bisimilarity for lqCCS, denoted as $\sim_{s}$,
which relies on contexts for
distinguishing quantum processes~\cite{bonchi_general_2014}.
This seems a fruitful choice, because the notion of observable property of a visible quantum
value is not straightforward, as witnessed by the variety of labelled
bisimulations proposed so far.
On the contrary, available operations over quantum values are well known (unitaries and measurements), and thus contexts are easily defined and uniquely determined (at least as far as quantum properties are concerned).
Take the fourth row of~\autoref{tab:examples} as an example, where both processes prepare and send a pair of qubits, but only the latter is entangled.
Remarkably, in both cases, the sent qubits are in the same state when taken separately.
QPAlg uses the value of each sent qubit as labels, 
thus incorrectly equating the two processes.
Indeed, detecting the entanglement requires considering the pair of qubits as a whole.

We also suffer from the problem highlighted by \citeauthor{davidson_formal_2012} and \citeauthor{kubota_application_noyear}, as our saturated bisimilarity spuriously discriminates between equivalent quantum values.
More in detail,
quantum theory prescribes that certain
probability distributions of quantum states should be indistinguishable
(namely if they are represented by the same density operator).
Consider the processes of the fifth row of~\autoref{tab:examples}. 
After setting and measuring the qubits, the state of the sent qubit of both processes is in a fair distribution, respectively of $\kz$ and $\ko$, and of $\kpl$ and $\km$.
Such distributions are prescribed to be indistinguishable.
Nonetheless, the two processes are not bisimilar according to QPAlg and $\sim_p$ of qCCS. Also $\sim_{s}$ suffers from the same issue, but the use of
contexts allow us to precisely pinpoint
the cause of the problem in the interaction between non-determinism and
quantum features.
Indeed, unconstrained non-determinism
allows contexts to choose a move based on the (in principle unknown)
current quantum state, without performing a measurement (and thus, without perturbing the state, violating a defining feature of quantum objects).
Non-determinism must be constrained for contexts to fit the limitations of quantum theory.

We give a new enhanced semantics and proper contexts, forbidding ill-formed moves
where the non-deterministic choices depend on unknown quantum values.
Moreover, we define \emph{constrained} saturated bisimilarity, denoted as $\sim_{cs}$, which
is strictly coarser than $\sim_s$.
We prove two main properties: $\sim_{cs}$ recovers the indistinguishability of quantum values prescribed by quantum theory, and, even if constrained, 
non-deterministic sum can still simulate boolean conditional statements.
\autoref{thm:propertyA} shows that our constraints suffice to equate lqCCS configurations with indistinguishable quantum states.
Notice indeed that the processes of the fifth row of~\autoref{tab:examples} are correctly deemed bisimilar by $\sim_{cs}$.
We argue that our constraints are not overly restrictive.
To confirm this, \autoref{thm:nondetVSite} shows
that non-deterministic choices can always perform different moves according to known classical values, thus replicating the behaviour of boolean guards.
This is an expected property that was missing in previous bisimilarities like the one of CQP and $\sim_d$ of qCCS, which also indirectly constrain non-determinism.
Consider the last row of~\autoref{tab:examples} where both processes send a qubit, choosing non-deterministically over two possible channels.
Remarkably, for each channel, the state of the sent qubits is represented by the same density operator, but the two processes could be distinguished if they 
had chosen the channel according to the outcome of the measurement (e.g.\ by using a boolean guard).
Since non-deterministic sum simulates such behaviour in our enhanced semantics, the two processes are correctly distinguished.
On the contrary, all the previous works that correctly equates the processes of the fifth row of the table fails in distinguishing the ones in the last row: 
indeed, they overly constrain non-determinism.

As a final contribution, we provide a few proof techniques
and employ them to analyse 
three real-world protocols: quantum teleportation, super-dense coding and quantum coin-flipping.

\paragraph{Synopsis} In~\autoref{sec:bg} we give some background about
probability distributions and quantum theory. In~\autoref{sec:lqCCS} we present
lqCCS, we discuss probabilistic bisimilarity and its
interaction with non-determinism. In~\autoref{sec:db} we
propose our novel semantics and bisimilarity equivalence.
In~\autoref{sec:real-world} we describe the capabilities of the novel bisimilarity through the lenses of well-known quantum protocols.
Finally, we compare with related works in~\autoref{sec:rw}, and we conclude in~\autoref{sec:conc}.
To favour readability, the full proofs are postponed to the appendices.

\section{Background}\label{sec:bg}
We recall some background on probability distributions 
and introduce quantum computing.
Finally, we present density operators that model probability distributions of quantum systems and their evolution.
We refer to~\citet{nielsen_quantum_2010} for further reading on quantum computing.

\subsection{Probability Distributions}\label{probabilisticBackground}

A \emph{probability distribution} over a set $S$ is a function $\Delta : S \to
	[0,1]$ such that $\sum_{s \in S}\Delta(s) = 1$. 
We call the \emph{support} of a distribution $\Delta$, written $\support{\Delta}$, the set $\{ s \in
	S\;|\;\Delta(s) > 0 \}$. 
We write $\dist{S}$ for the set of distributions over $S$, and restrict ourselves to distributions with finite support.

For each $s \in S$, we let $\singleton{s}$ be the \emph{point distribution} that assigns $1$ to $s$.
Given a finite set of non-negatives reals $\{p_i\}_{i \in I}$ such that $\sum_{i \in I} p_i = 1$, we write $\sum_{i \in I} \distelem{p_i}{\Delta_i}$ for the distribution
determined by $(\sum_{i \in I} \distelem{p_i}{\Delta_i})(s) = \sum_{i \in
		I}p_i\Delta_i(s)$.
Sometimes, we will use the notation above to write the distributions ``explicitly'' by listing the elements of the support with their probability as in $\Delta = \sum_{s \in \support{S}} p_s \bullet \singleton{s}$.
Finally, the notation $\Delta_1 \psum{p} \Delta_2$ is a shorthand for
$\distelem{p}{\Delta_1} + \distelem{(1 - p)}{\Delta_2}$.

A relation $\mathcal{R} \subseteq \dist{S} \times \dist{S}$ is said to be
\emph{linear} if $(\Delta_1 \psum{p} \Delta_2)\;\mathcal{R}\;(\Theta_1 \psum{p}
	\Theta_2)$ for any $p \in [0, 1]$ whenever $\Delta_i\;\mathcal{R}\;\Theta_i$
for $i = 1, 2$. $\rel$ is said to be \emph{left-decomposable} if $(\Delta_1 \psum{p}
	\Delta_2)\;\mathcal{R}\; \Theta$ implies $\Theta = (\Theta_1 \psum{p}
	\Theta_2)$ for some $\Theta_1, \Theta_2$ with $\Delta_i\;\mathcal{R}\;\Theta_i$ for $i = 1,2$ and for any $p \in [0, 1]$. Right-decomposability is defined symmetrically, and a relation is \emph{decomposable} when it is both left- and right-decomposable.

Given $\mathcal{R} \subseteq S \times \dist{S}$, its \emph{lifting}
$\text{lift}(\mathcal{R}) \subseteq \dist{S} \times \dist{S}$ is the smallest
linear relation such that $\overline{s}\;\text{lift}(\mathcal{R})\;\Theta$ when
$s\;\mathcal{R}\;\Theta$.

\subsection{State space}

A (finite-dimensional) \emph{Hilbert space}, denoted as $\hilbert$, is a
complex vector space equipped with a binary operator $\braket\blank: \hilbert
	\times \hilbert \rightarrow \cfield$ called \emph{inner product}, defined as
$\braket{\psi}{\phi} = \sum_i \alpha_i^*\beta_i$, where $\kp =
	(\alpha_1,\ldots,\alpha_i)^T$ and $\kf = (\beta_1,\ldots,\beta_i)^T$. We
indicate column vectors as $\kp$ and their conjugate transpose as
$\bra\psi = \kp^\dagger$.
The state of an isolated physical system is represented as a \emph{unit vector}
$\kp$ (called \emph{state vector}), i.e.\ a vector such that $\braket\psi =
	1$. 
The simplest example of a quantum physical system is a \emph{qubit}, which
is associated with the two-dimensional Hilbert Space $\qubit = \cfield^2$. 
The
vectors $\{\ket0 = (1,0)^T, \ket1 = (0,1)^T\}$ form an orthonormal basis of
$\qubit$, called the \emph{computational basis}. Other important vectors in
$\qubit$ are $\ket+ = \frac{1}{\sqrt{2}}(\ket0 + \ket1)$ and $\ket-
	= \frac{1}{\sqrt{2}}(\ket0 - \ket1)$, which form the \emph{diagonal basis}, or
Hadamard basis.

Intuitively, different bases represent different observable properties of a quantum system. 
Note that $\kpl$ and $\km$ are non-trivial linear combinations of $\kz$ and $\ko$, roughly meaning that
the property associated with the computational basis is undetermined in $\kpl$ and $\km$.
In the quantum jargon, the states in the diagonal basis are \emph{superpositions} with respect to the standard basis.
Symmetrically, $\kz$ and $\ko$ are themselves superpositions with respect to the diagonal basis.

\subsection{Unitary Transformations}

For each linear operator $A$ on a Hilbert space $\hilbert$, there is a
linear operator $A^\dag$, the \emph{adjoint} of $A$, which is given by the
conjugate transpose of $A$ and is the unique operator such that
$\mel{\psi}{A}{\phi} = \braket{A^\dag\psi}{\phi}$. A linear operator $U$ is
said to be \emph{unitary} when $UU^\dag = U^\dag U = I$. 
In quantum physics, the evolution of a
closed system is described by a unitary transformation: the state
$\kp$ at time $t_0$ is related to $\ket{\psi^\prime}$ at time $t_1$ by a unitary operator $U$, which only depends
on $t_0$ and $t_1$, i.e.\ $\ket{\psi^\prime} = U\kp$.

In quantum computing, the programmer manipulates the state of qubits by
applying unitary transformations.
Some of the most common transformations on single qubits are:
$X$ that transforms the qubit $\kz$ into $\ko$ and vice-versa (corresponding to the classical logical not);
$Z$ that given $\kp = \alpha\kz + \beta\ko$ returns $\alpha\kz - \beta\ko$;
and $H$ that maps $\kz$ and $\ko$ into $\kpl$ and $\km$ respectively.
\begin{gather*}
	\small
	X = \begin{bmatrix}
		0 & 1 \\
		1 & 0
	\end{bmatrix}\quad
	Z = \begin{bmatrix}
		1 & 0  \\
		0 & -1
	\end{bmatrix}\quad
	H = \frac{1}{\sqrt{2}}\begin{bmatrix}
		1 & 1  \\
		1 & -1
	\end{bmatrix}
\end{gather*}

\subsection{Measurement}

\emph{Quantum measurements} are needed for describing systems that exchange
information with the environment. 
Performing a measurement on a quantum state returns a classical result and causes the quantum state to change (i.e.\ to \emph{decay}).
Thus, measurements alter the state of the qubits.
Moreover, the result of a measurement is intrinsically probabilistic.

A \emph{measurement operator} is a linear transformation 
that associates each input quantum state with a probability and a resulting quantum state.
A \emph{measurement} is then a set of possible classical
outcomes: each of them is
associated with a \emph{measurement operator} that encodes how the probability of the outcome and the resulting quantum state depends on the current state $\kp$.

Formally, a measurement is a set $\{M_m\}_m$ of measurement operators, where $m$ refers
to the classical outcomes, such that the \emph{completeness} equation $\sum_m
	M_m^\dag M_m = I$ holds. If the state of the system is $\kp$ before the
measurement, then the probability of $m$ occurring is $p_m = \bra\psi M_m^\dag
	M_m\kp$. If $m$ is the outcome, then the state after the measurement will be
$\frac{1}{\sqrt{p_m}}M_m\kp$.

The simplest measurements project a state into one of the basis of $\hilbert$,
e.g. $M_{01} = \{M_0, M_1\}$ with $M_0 = \ketbra{0},M_1 = \ketbra{1}$ in the
computational basis of $\qubit$, and $M_\pm = \{M_+, M_-\}$ with $M_+ =
	\ketbra{+}, M_- = \ketbra{-}$ in the Hadamard basis.
	
As expected, applying the measurement $M_{01}$ on $\kz$ returns the classical outcome $0$ and the state $\kz$ with probability $1$.
When applying the same measurement on $\kpl$, instead, the result may be $0$ and $\kz$, or $1$ and $\ko$ with equal probability.
Note also that measuring $\kz$ with $M_\pm$ leads to either $0$ and $\kpl$, or $1$ and $\km$, also with equal probability.

\subsection{Composite quantum systems}

We represent the state space of a composite physical system as the \emph{tensor
	product} of the state spaces of its components. Let $\hilbert_A$ and
$\hilbert_B$ be $n$ and $m$-dimensional Hilbert spaces: their tensor product
$\hilbert_A \otimes \hilbert_B$ is an $n\cdot m$ Hilbert space. Moreover, if
$\{\ket{\psi_1}, \ldots, \ket{\psi_n}\}$ and $\{\ket{\phi_1}, \ldots,
	\ket{\phi_m}\}$ are bases of respectively $\hilbert_A$ and $\hilbert_B$, then $
	\{\ket{\psi_i}\otimes\ket{\phi_j} \mid i = 1, \ldots, n, j = 1, \ldots, m\}$ is
a basis of $\hilbert_A \otimes \hilbert_B$, where $\kp \otimes \kf$
is the Kronecker product, defined as
\begin{align*}
	\begin{bmatrix} x_{1,1} & \cdots & x_{1,n} \\ \vdots & \ddots & \vdots\\ x_{m,1} & \cdots & x_{m,n} \end{bmatrix} \otimes A =
	\begin{bmatrix} x_{1,1} A & \cdots & x_{1,n} A \\ \vdots & \ddots & \vdots \\ x_{m,1} A & \cdots &  x_{m,n} A \end{bmatrix}
\end{align*}

We often omit the tensor product and write $\kp\kf$ or $\ket{\psi\phi}$.
We write $\qubits{n}$ for the $2^n$-dimensional Hilbert space defined as the tensor product of $n$
copies of $\qubit$ (i.e.\ the possible states of $n$ qubits).

What is said above about unitary transformations and measurements also applies to composite systems.
Given a list of single-qubits unitaries $U_1, U_2 \dots U_n$, their tensor product $U_1 \otimes U_2 \dots \otimes U_n$ is a unitary transformation over $n$ qubits.
Not all unitaries on $n$ qubits can be obtained in this way:
the most used that cannot be obtained via tensor product are
the SWAP operator exchanging the state of two qubits, i.e. $\text{SWAP} \ket{\psi\phi} = \ket{\phi\psi}$, and
the controlled not (CNOT) over two qubits, defined as
\[
  \text{CNOT} \ket{00} = \ket{00},\quad \text{CNOT} \ket{01} = \ket{01},\quad \text{CNOT} \ket{10} = \ket{11}, \quad \text{CNOT} \ket{11} = \ket{10}.
\]

A measurement for a composite system may measure only some of the qubits and leave others unaltered, e.g. $\{M_0 \otimes I, M_1 \otimes I\}$ measures (in the computational basis) the first qubit of a pair.

A quantum state in $\hilbert_A \otimes \hilbert_B$ is  \emph{separable}
when it can be expressed as the Kronecker product of two vectors of
$\hilbert_A$ and $\hilbert_B$. Otherwise, it is \emph{entangled},
like the Bell state $\ket{\Phi^+} = \frac{1}{\sqrt{2}}(\ket{00} +
	\ket{11})$. When two qubits are entangled, the evolution of the one depends on
the transformations applied to the other.
Measuring e.g.\ the first qubit of $\ket{\Phi^+}$ in the computational basis causes the state to decay into either $\ket{00}$ or $\ket{11}$ with equal probability, where also the state of the second qubit is updated.

The Bell states $\{\ket{\Phi^+}, \ket{\Phi^-}, \ket{\Psi^+}, \ket{\Psi^-}\}$ form the Bell basis for $\qubits{2}$, with
\begin{align*}
  \ket{\Phi^-} = \frac{1}{\sqrt{2}}(\ket{00} - \ket{11})\qquad
  \ket{\Psi^+} = \frac{1}{\sqrt{2}}(\ket{01} + \ket{10})\qquad
  \ket{\Psi^-} = \frac{1}{\sqrt{2}}(\ket{01} - \ket{10})
\end{align*}

A defining feature of quantum computing is that qubits cannot be
duplicated.

\begin{proposition}[No-Cloning Theorem]\label{noCloning}
	There is no unitary operation $U$ on $\hilbert_A \otimes \hilbert_B$ such
	that $U(\kp \otimes \kf) = \kp \otimes \kp$ for all states $\kp \in
		\hilbert_A$ and $\kf \in \hilbert_B$.
\end{proposition}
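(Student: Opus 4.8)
The plan is to prove the No-Cloning Theorem by contradiction, exploiting the linearity of unitary operators together with the fact that non-orthogonal states cannot be simultaneously cloned while preserving inner products. Suppose toward a contradiction that such a $U$ exists, so that $U(\kp \otimes \kf) = \kp \otimes \kp$ holds for every $\kp \in \hilbert_A$ and the fixed ancilla $\kf \in \hilbert_B$. The strategy is to instantiate this hypothetical cloning device on two \emph{distinct, non-orthogonal} input states and derive an impossible constraint on their inner product.

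Concretely, I would pick two state vectors $\ket{\psi_1}, \ket{\psi_2} \in \hilbert_A$ with $0 < |\braket{\psi_1}{\psi_2}| < 1$; the computational basis states $\kz$ and $\kpl$ are a convenient choice since $\braket{0}{+} = \tfrac{1}{\sqrt 2}$. By hypothesis we have $U(\ket{\psi_1}\otimes\kf) = \ket{\psi_1}\otimes\ket{\psi_1}$ and $U(\ket{\psi_2}\otimes\kf) = \ket{\psi_2}\otimes\ket{\psi_2}$. The next step is to compute the inner product of the two output states in two different ways. On one hand, because $U$ is unitary it preserves inner products, so
\begin{align*}
  \bra{\psi_1}\bra{\phi}\, U^\dag U\, \ket{\psi_2}\ket{\phi}
  = \braket{\psi_1}{\psi_2}\cdot\braket{\phi}{\phi}
  = \braket{\psi_1}{\psi_2},
\end{align*}
using $U^\dag U = I$ and $\braket{\phi}{\phi} = 1$. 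On the other hand, applying the cloning identity to both sides first and then taking the inner product of the cloned outputs gives
\begin{align*}
  \bra{\psi_1}\bra{\psi_1}\,\ket{\psi_2}\ket{\psi_2}
  = \braket{\psi_1}{\psi_2}\cdot\braket{\psi_1}{\psi_2}
  = \braket{\psi_1}{\psi_2}^2,
\end{align*}
where the factorization of the inner product over the tensor product follows from the definition of the Kronecker product and the inner product on $\hilbert_A \otimes \hilbert_B$.

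Equating the two computations yields $\braket{\psi_1}{\psi_2} = \braket{\psi_1}{\psi_2}^2$, i.e.\ $z = z^2$ for $z = \braket{\psi_1}{\psi_2}$. This forces $z \in \{0, 1\}$: either the states are orthogonal or (by the Cauchy--Schwarz bound for unit vectors) identical. Since we deliberately chose $\ket{\psi_1}$ and $\ket{\psi_2}$ with $0 < |z| < 1$, we reach a contradiction, and no such cloning unitary $U$ can exist.

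The argument is essentially a short calculation once the setup is in place, so I do not expect a serious obstacle; the only point requiring care is making the \emph{universal quantifier} in the hypothesis do the work. The theorem states that $U$ clones \emph{all} states $\kp$ with a single fixed ancilla $\kf$, and the proof crucially uses this universality by applying the same $U$ and same $\kf$ to two incompatible inputs. A subtlety worth flagging is that one must insist the two chosen states be genuinely non-orthogonal (not merely distinct), since orthogonal states \emph{can} be cloned; picking a concrete pair such as $\kz$ and $\kpl$ sidesteps any ambiguity and makes the contradiction fully explicit.
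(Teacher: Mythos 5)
Your proof is correct: it is the standard unitarity argument, computing the inner product of the cloned outputs in two ways to force $\braket{\psi_1}{\psi_2} = \braket{\psi_1}{\psi_2}^2$ and contradicting the existence of a non-orthogonal, non-identical pair such as $\kz$ and $\kpl$. The paper states this proposition as background without proving it (deferring to standard references such as Nielsen and Chuang), and your argument is precisely the textbook proof it relies on, so there is nothing to reconcile.
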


As a result, qubits cannot be stored in multiple locations or broadcast to
multiple receivers.

\subsection{Density operator formalism}

The density operator formalism puts together quantum systems and probability distributions by considering
mixed states, i.e.\ \emph{probabilistic mixture of quantum states}.
A point distribution $\singleton{\kp}$ (called a pure state) is represented by the matrix $\ketbra{\psi}$.
In general, a mixed state $\Delta \in \dist{\qubits{n}}$ for $n$ qubits is represented as the matrix $\rho \in \cfield^{2^n\times 2^n}$, known as its \emph{density operator}, with $\rho = \sum_i \Delta(\psi_i)
	\dket{\psi_i}$.
We write $\density{\hilbert}$ for the set of density operators of $\hilbert$.

For example, the mixed state $\singleton{\kz} \psum{1/3} \singleton{\kpl}$ being $\kz$ with probability $1/3$ and in $\kpl$ with probability $2/3$ is represented as 
\[
\frac{1}{3} \ketbra{0} + \frac{2}{3} \ketbra{+} = \frac{1}{3}
\begin{bmatrix} 2 & 1 \\ 1 & 1 \end{bmatrix}
\]

Note that the encoding of probabilistic mixtures of quantum states as density operators is not injective.
For example, $\frac{1}{2}I$ is called the \emph{maximally mixed state} and
represents both the distribution $\Delta_{C} = \singleton{\ket0} \psum{1/2}
	\singleton{\ket1}$ and $\Delta_{H} = \singleton{\ket+} \psum{1/2}
	\singleton{\ket-}$.
This is a desired feature, as the laws of quantum mechanics deem indistinguishable all the distributions that result in the same density operator.

The evolution of mixed states is given as a \emph{trace-preserving
  superoperator} $\mathcal{E}: \density{\hilbert} \rightarrow \density{\hilbert}$, a function defined by its \emph{Kraus operator sum decomposition} $\{E_i\}_i$ for $i = 1, \ldots,
	\dim(\hilbert)^2$, satisfying that
$\sop{E}{\rho} = \sum_i E_i\rho E_i^\dag$ and $\sum_i E_i^\dag E_i = I_\hilbert$,
where $I_{\hilbert}$ is the identity operator on $\hilbert$. 
Notice that the operators $E_i$ are not unitaries in general, see \citet[Section 8.2.3]{nielsen_quantum_2010}.
We write
$\tsoset{\hilbert}$ for the set of trace-preserving superoperators on $\hilbert$.

The tensor product of density operators $\rho \otimes \sigma$ is defined as their Kronecker product, and of superoperators $\mathcal{E} \otimes
	\mathcal{F}$ as the superoperator having Kraus decomposition $\{ E_i \otimes
	F_j \}_{i,j}$ with $\{ E_i \}_i$ and $\{ F_j \}_j$ Kraus decompositions of
$\mathcal{E}$ and $\mathcal{F}$.
Superoperators represent unitary
transformations $U$ as $\mathcal{E}_U$ with $\{ U \}$ its Kraus
decomposition.
Other transformations are possible, like 
the constant $\text{Set}_{\rho}$ superoperators, transforming any input state in the given state $\rho$; and the probabilistic combination of 
unitaries having Kraus decomposition $\{\sqrt{p_i} U_i\}_i$ with $\sum_i p_i = 1$ and $U_i$ a unitary transformation
(each $U_i$ is applied with a given probability, which is useful for modelling noisy channels and gates).

Density operators can be used to describe the state of a subsystem of a
composite quantum system. Let $\hilbert_{AB} = \hilbert_A \otimes \hilbert_B$
represents a composite system, with subsystems $A$ and $B$. Given a (not
necessarily separable) $\rho^{AB} \in \hilbert_{AB}$, the \emph{reduced density
	operator} of system $A$, $\rho^A = \tr_B(\rho^{AB})$, describes the state of
the subsystem $A$, with $\tr_B$ the \emph{partial trace over $B$}, defined as
the linear transformation such that
$\tr_B(\ketbra{\psi}{\psi'} \otimes \ketbra{\phi}{\phi'}) =
	\ketbra{\psi}{\psi'}\tr(\ketbra{\phi}{\phi'})$.
When applied to pure separable states, the partial trace returns the actual
state of the subsystem. When applied to an entangled state, instead, it
produces a probability distribution of states, because ``forgetting'' the
information on the subsystem $B$ leaves us with only partial
information on subsystem $A$. As an example, the partial trace over the first
qubit of $\rho = \dket{\Phi^+}$ is the maximally mixed state.

\section{A Linear Process Algebra}\label{sec:lqCCS}
In the following sections we describe the syntax and the type system of lqCCS
processes, as well as a reduction-style semantics and a first notion of bisimilarity. Our process calculus is enriched with a linear type system both for reflecting the no-cloning theorem (see~\autoref{noCloning}) and for resolving the minor discrepancy on qubit visibility summarized in the first three rows of~\autoref{tab:examples}.

\subsection{Syntax and Type System}\label{Syntax}

\begin{figure}[!t]
	{
    \small
		\begin{gather*}
			\begin{matrix}
				\infer[\rulename{Nil}]{\Sigma \vdash \nil_{\tilde{e}}}{\tilde{e} \in \tilde{\Sigma}} \qquad
				\infer[\rulename{Tau}]{\Sigma \vdash \tau . P}{\Sigma \vdash P} \qquad
				\infer[\rulename{QMeas}]{\Sigma \vdash \meas{\tilde{e}}{y} . P}{M : \measset(n)                            & |E| = n                           &
				\tilde{e} \in \tilde{E}                                                                                    & E \subseteq \Sigma                & y : \ntype                        & \Sigma \vdash P}                      \\
				\infer[\rulename{Restrict}]{\Sigma \vdash P \setminus c}{\Sigma \vdash P} \qquad
				\infer[\rulename{Sum}]{\Sigma \vdash P + Q}{\Sigma \vdash P                                                & \Sigma \vdash Q} \qquad
				\infer[\rulename{QOp}]{\Sigma \vdash \sop{E}{\tilde{e}} . P}{\mathcal{E} : \opset(n)                       & |E| = n                           & \tilde{e} \in \tilde{E}           & E \subseteq \Sigma & \Sigma \vdash P} \\
				\infer[\rulename{CRecv}]{\Sigma \vdash c?x . P}{c : \chtype{T}                                             & x : T \in\{\btype,\ntype\}        & \Sigma \vdash P} \qquad
				\infer[\rulename{QRecv}]{\Sigma \vdash c?x . P}{c : \chtype{\qtype}                                        & x : \qtype                        & \Sigma \cup \{x\} \vdash P} \qquad
				\infer[\rulename{QSend}]{\{e\} \vdash c!e}{c : \chtype{\qtype}                                             & e : \qtype}                                                                                                   \\
				\infer[\rulename{CSend}]{\emptyset \vdash c!e}{c : \chtype{T}                                              & e : T \in\{\btype,\ntype\}} \qquad
				\infer[\rulename{ITE}]{\Sigma \vdash \ite{e}{P_1}{P_2}}{e : \btype                                         & \Sigma \vdash P_1                 & \Sigma \vdash P_2} \qquad
				\infer[\rulename{Par}]{\Sigma_1 \cup \Sigma_2 \vdash P_1 \parallel P_2}{\Sigma_1 \cap \Sigma_2 = \emptyset & \Sigma_1 \vdash P_1               & \Sigma_2 \vdash P_2}
			\end{matrix}
		\end{gather*}
	}
	\caption{Typing rules for lqCCS}
	\label{typing_rules}
\end{figure}

The syntax of lqCCS is defined as follows
\begin{align*}
	P \Coloneqq & \ K \mid P \parallel P \mid P \setminus c \mid \ite{e}{P}{P}                                                           \\
	K \Coloneqq & \ \nil_{\tilde{e}} \mid \tau . P \mid \sop{E}{\tilde{e}}.P \mid \meas{\tilde{e}}{x}.P \mid c?x . P \mid c!e \mid K + K \\
	e \Coloneqq & \ x \mid b \mid n \mid q \mid \neg e \mid e \lor e \mid e \leq e
\end{align*}
where $b \in \btype$, $n \in \ntype$, $q \in \qtype$, $x \in \varset$, $c \in
	\chanset$ with $\qtype$, $\varset$, $\chanset$ denumerable
sets of respectively qubit names, variables and channels, each typed.
We use $\tilde{e}$ to denote a (possibly empty)
tuple $e_1, \ldots, e_n$ of expressions. 
The process $\nil_{\tilde{e}}$ \emph{discards} the qubits in ${\tilde{e}}$.
It behaves as a deadlock process that maintains ownership of the qubits in ${\tilde{e}}$ and makes them inaccessible to other processes.
As we will see, discard processes are semantically equivalent to any deadlock process using the same qubits.
The process $\nil_{q_1,q_2}$ is e.g.\ equivalent to $\nil_{q_2,q_1}$ and to $c?x.(c!q_1 \parallel c!q_2 \parallel c!x) \setminus c$, since $q_1$ and $q_2$ will never be available.
When $\tilde{e}$ is the empty sequence, we simply write $\nil$ to stress the equivalence with the nil process of standard CCS\@.
	This feature of lqCCS allows to clearly mark which qubits are hidden to the environment, thus relieving bisimilar processes to agree on them.
	Note that ``discard-like'' processes can be written also in qCCS, but are never used in the literature.
A symbol $\mathcal{E}$ denotes a trace-preserving superoperator on
$\qubits{n}$ for some $n > 0$, and we write $\mathcal{E} : \opset(n)$ to
indicate that $\mathcal{E}$ is a superoperator with arity $n$. A symbol $M$
denotes a measurement $\{M_0,\ldots,M_{k-1}\}$ with such operators acting
on $n$ qubits and with $k$ different outcomes: we write $M : \measset(n)$ to
indicate that $M$ is a measurement operator with arity $n$, and denote as $|M|$ the cardinality $k$ of $M$.
We let $M_{01}$ and $M_\pm$ be the projective measurement over the computational and Hadamard basis respectively.
We say that the channel
name $c$ is \emph{bound} in $P \setminus c$ and \emph{free} otherwise. We
denote with $\text{fc}(P)$ the set of free channels and with $\text{fv}(P)$ the
set of free classical variables of $P$, defined as usual.

Affinity is often enforced in quantum process calculi for preventing qubits from being broadcast, which is forbidden by the no-cloning theorem.
We decided to go one step further and to impose a linear type system,
which forces the processes to either send or explicitly discard each qubit that they own.
Therefore, the visibility of qubits is explicitly stated, relieving us from performing an arbitrary choice about the visibility of those qubits that are neither sent nor discarded.

Typing judgments are of the form $\Sigma \vdash P$. The use of quantum names is
subject to linearity and those in use are collected in $\Sigma \subseteq
	\qtype$. The set of types is $\{\qtype, \ntype, \btype\}$, respectively the
type of quantum names, naturals and booleans. The set of channel types is
$\{\chtype{\qtype}, \chtype{\ntype}, \chtype{\btype}\}$. From now on we will
assume that channels and variables are typed, and expressions involving natural
and booleans are typed as standard. The typing system is in~\autoref{typing_rules}, where for a set $A$ we use $\tilde{A}$
to denote the set of tuples $\tilde{a}$ such that any element of $A$ occurs exactly once in $\tilde{a}$.
Linearity is enforced by
the combination of rules \textsc{Nil}, \textsc{QSend} and \textsc{Par}.
In particular, the
former two are the only rules that introduce new qubits into the quantum context,
therefore each quantum name must be sent along some channel or discarded; while the latter ensures
that each qubit is not shared between parallel processes.

It is easy to show that the typing of processes is unique, therefore in the following we will simply write $\Sigma_P$ for the unique set of quantum names such that $\Sigma_P \vdash P$.
	\begin{restatable}[Unique Type]{proposition}{uniqueness}\label{thm:uniqueness}
		If $\,\Sigma \vdash P$ and $\Sigma' \vdash P$ then $\Sigma = \Sigma'$.
	\end{restatable}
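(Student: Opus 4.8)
The plan is to proceed by structural induction on the process $P$, exploiting the fact that the typing system is essentially syntax-directed: the outermost constructor of $P$ singles out a unique applicable rule, and each rule fixes $\Sigma$ in terms of the types of the immediate subprocesses, for which the induction hypothesis grants uniqueness. Equivalently, this can be phrased as an inversion argument on the two derivations of $\Sigma \vdash P$ and $\Sigma' \vdash P$.

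First I would handle the axioms. For $\nil_{\tilde{e}}$ only \textsc{Nil} applies, and the side condition $\tilde{e} \in \tilde{\Sigma}$ forces $\Sigma$ to be exactly the set of names occurring in $\tilde{e}$, so $\Sigma = \Sigma'$. For $c!e$, the type of the channel $c$ (which is fixed, since channels are typed a priori) selects between \textsc{QSend} and \textsc{CSend}: in the first case $\Sigma = \{e\} = \Sigma'$, in the second $\Sigma = \emptyset = \Sigma'$.

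For the inductive cases I would simply read off $\Sigma$ from the premises. The unary prefixes $\tau.P$, $\sop{E}{\tilde{e}}.P$, $\meas{\tilde{e}}{y}.P$ and the restriction $P \setminus c$ all propagate the type of the continuation unchanged, so uniqueness follows immediately from the induction hypothesis. For $P + Q$ and $\ite{e}{P_1}{P_2}$ the rules require both branches to carry the very same context $\Sigma$; since by induction each branch has a unique type, these must coincide and equal $\Sigma$. For $P_1 \parallel P_2$, inversion of \textsc{Par} writes $\Sigma = \Sigma_1 \cup \Sigma_2$ with $\Sigma_i \vdash P_i$; the induction hypothesis pins down each $\Sigma_i$ uniquely, hence so is their union.

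The only case that is not literally syntax-directed is the input prefix $c?x.P$, which matches both \textsc{CRecv} and \textsc{QRecv}. This is where I would be most careful, although it is resolved cleanly by the fixed typing of channels: if $c : \chtype{\qtype}$ only \textsc{QRecv} applies, giving $\Sigma \cup \{x\} \vdash P$, so by the induction hypothesis the (unique) type of $P$ contains $x$ and $\Sigma$ is obtained by removing it; if instead $c$ carries a classical type only \textsc{CRecv} applies and $\Sigma$ is just the type of $P$. In either subcase $\Sigma$ is determined, completing the induction. I do not expect any real obstacle beyond bookkeeping; the linearity side conditions ($E \subseteq \Sigma$, disjointness in \textsc{Par}) play no role in uniqueness, as they merely restrict which $\Sigma$ are derivable rather than allowing several.
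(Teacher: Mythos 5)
Your proof is correct and follows essentially the same route as the paper's: induction over the structure of $P$ (equivalently, over the derivations), observing that the rules are syntax-directed --- up to the channel-type disambiguation of $c!e$ and $c?x$ --- so that each constructor determines $\Sigma$ from the uniquely typed subprocesses, with \textsc{Par} handled as the union of the two contexts given by the induction hypothesis. Your treatment is in fact slightly more thorough than the paper's sketch, which only discusses \textsc{QSend} and \textsc{Par} explicitly, whereas you also spell out the \textsc{QRecv} case where $\Sigma$ is recovered by removing the bound name $x$ from the unique type of the continuation.
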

	\begin{proofsketch}
		By induction on the derivations of $\Sigma \vdash P$ and $\Sigma' \vdash P$.
	\end{proofsketch}

As a simple example of lqCCS, we present the following quantum lottery
protocol \proc{QL}, which uses a qubit to randomly select a winner between two
competitors.
\begin{example}\label{ex:quantumlottery}
	Let $\proc{QL} = H(q).\meas[M_{01}]{q}{x}.((\ite{x = 0}{a!1}{b!1}) \parallel \nil_{q})$.
	The qubit $q$ is firstly transformed with $H$ and then measured in the computational basis.
	Depending on the outcome, stored in $x$, either Alice or Bob is announced as the winner (through $a!1$ and $b!1$ respectively).
	Finally, the qubit is discarded as it is no longer needed.
	The unique typing of $\proc{QL}$ is given by $\{q\} \vdash
		\proc{QL}$, with $a : \chtype{\ntype}$, $b : \chtype{\ntype}$, $x : \ntype$,
	and $q : \qtype$.
\end{example}

\subsection{Operational Semantics}\label{Semantics}

\begin{figure}[!t]
  \small
	\begin{gather*}
		\infer[\rulename{SCParNil}]{P \parallel \nil \equiv P}{} \qquad
		\infer[\rulename{SCParComm}]{P \parallel Q \equiv Q \parallel P}{} \qquad
		\infer[\rulename{SCParAssoc}]{P \parallel (Q \parallel R) \equiv (P \parallel Q) \parallel R}{} \\[0.2cm]
		\infer[\rulename{SCSumNil}]{P + \nil_{\tilde{e}} \equiv P}{} \qquad
		\infer[\rulename{SCSumComm}]{P + Q \equiv Q + P}{} \qquad
		\infer[\rulename{SCSumAssoc}]{P + (Q + R) \equiv (P + Q) + R}{} \\
		\infer[\rulename{SCIteF}]{\ite{\false}{P}{Q} \equiv Q}{} \qquad
		\infer[\rulename{SCIteT}]{\ite{\true}{P}{Q} \equiv P}{}  \qquad
		\infer[\rulename{SCValExpr}]{P \equiv P[\sfrac{v}{e}]}{e \Downarrow v} \\
		\infer[\rulename{SCRestrNil}]{\nil_{\tilde{e}} \setminus c \equiv \nil_{\tilde{e}}}{} \qquad
		\infer[\rulename{SCRestrOrd}]{P \setminus c \setminus d \equiv P \setminus d \setminus c}{} \qquad
		\infer[\rulename{SCRestrPar}]{(P \parallel Q) \setminus c \equiv P \parallel (Q \setminus c)}{c \not\in \text{fc}(P)}
	\end{gather*}
	\caption{Structural congruence}
	\label{struct_cong}
\end{figure}

The operational semantics of lqCCS is defined as a probabilistic reduction system
$(\confbot, \to)$ over closed processes (i.e., processes $P$ such that $\text{fv}(P) = \emptyset$), where
\begin{itemize}
	\item $\confbot$ is $\confset \cup \{\bot\}$, with $\confset$ the set of configurations of the form $\conf{\rho, P}$,
	      and $\bot$ the ``deadlock'' configuration that always evolves in $\singleton{\bot}$;
	\item $\to\ \subseteq \confbot \times \dist{\confbot}$ is the probabilistic transition relation.
\end{itemize}

Given a set $\Sigma = \{q_1,\ldots,q_n\} \subseteq \qtype$, a global
quantum state $\rho$ is a density operator over $\hilbert_\Sigma = \qubits{n}$, where $q_i$ refers to the $i$-th qubit in $\rho$. Expressions $e$ are
evaluated through a big step semantics $e \Downarrow v$ with $v$ a value, i.e.\
either $n \in \ntype$, $b \in \btype$, or $x \in \varset$.
We restrict
ourselves to standard boolean and arithmetic operations, and therefore
omit the rules and assume free variables are not
evaluated.

The type system is extended to configurations by considering the qubits of the
underlying quantum state.
In the following, we denote as $\Sigma_\rho$ the set of qubits appearing in $\rho$.
\begin{definition}
	Let $\conf{\rho, P} \in \confset$ and $\Delta \in \dist{\confbot}$.
	We let
	$(\Sigma_\rho, \Sigma_P) \vdash \conf{\rho, P}$ if $\Sigma_P \subseteq \Sigma_\rho$.
	We let $(\Sigma, \Sigma') \vdash \bot$ for any $\Sigma$ and $\Sigma'$, and $(\Sigma, \Sigma') \vdash \Delta$ if $(\Sigma, \Sigma') \vdash
		\iconf$ for any $\iconf$ in $\support{\Delta}$.
\end{definition}
Hereafter, we restrict ourselves to well-typed distributions.
We extend the standard structural congruence relation for CCS~\cite{milner_functions_1992} as presented in
\autoref{struct_cong}, and we impose congruent processes to be typed by the
same $\Sigma$. The new rules
allow the evaluation of expressions and reduction of
$\ite{\!\blank\!}{\!\blank\!}{\!\blank\!}$ occurrences. We lift the congruence
to distributions of configurations by linearity and imposing $\overline{\bot} \equiv \overline{\bot}$ and $\singleton{\conf{\rho, P}}
	\equiv \singleton{\conf{\rho, P'}}$ whenever $P \equiv P'$.

	The transition relation $\to$ is the smallest relation that satisfies the rules
	in \autoref{semantics}, augmented with $\mathcal{C} \to \singleton{\bot}$ if there is no $\Delta$ such that $\mathcal{C}
		\to \Delta$~\cite{deng_bisimulations_2018}.
We have the standard rules for CCS operators, so for example 
a process $\tau.P$ performs a silent action that does not affect the quantum state, and then continues its evolution as $P$,
while $P \setminus c$ behaves as $P$ but the channel $c$ is \emph{restricted}, i.e.\ $P$ cannot synchronize with other processes on that channel.
Along them we introduce rules for superoperators and measurements.
Since the arity of $\mathcal{E}$ can be smaller than the number of qubits in the quantum state $\rho$,
we define $\mathcal{E}_{\tilde{q}}$ as
the superoperator obtained by composing
$(i)$ a suitable set of SWAP unitaries to bring the qubits $\tilde{q}$ in the first positions;
$(ii)$ the tensor product of the superoperator $\mathcal{E}$ with the identity on
untouched qubits on the right; and
$(iii)$ the inverse of the SWAP operators of point $(i)$ to recover the original order of qubits~\cite{lalire_relations_2006}.
The same mechanism is applied to measurements
to obtain ${(M_m)}_{\tilde{q}}$ from $M_m$.
If $\Delta = \sum_{i} \distelem{p_i} \singleton{\conf{\rho_i, P_i}}$, we let $\Delta
	\setminus c$ and $\Delta \parallel Q$ denote distributions $\sum_{i}
	\distelem{p_i} \singleton{\conf{\rho_i, P_i\setminus c}}$ and $\sum_{i} \distelem{p_i}
	\singleton{\conf{\rho_i, P_i\parallel Q}}$.
In the following, we lift $\to$ to distributions, writing $\to$ for $\text{lift}(\to) \in \dist{\confbot} \times \dist{\confbot}$.

Noteworthy, the typing is preserved
 by the transition relation.
\begin{restatable}[Typing Preservation]{theorem}{typingpreservation}
  If $(\Sigma_\rho, \Sigma_P) \vdash \conf{\rho, P}$ and $\conf{\rho, P} \longrightarrow \Delta$ then
  $(\Sigma_\rho, \Sigma_P) \vdash \Delta$.
\end{restatable}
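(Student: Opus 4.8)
The plan is to prove a slightly stronger invariant than the one stated, namely that \emph{every} configuration $\conf{\rho_i, P_i}$ in $\support{\Delta}$ satisfies both $\Sigma_{\rho_i} = \Sigma_\rho$ and $\Sigma_{P_i} = \Sigma_P$; the claimed judgment $(\Sigma_\rho, \Sigma_P) \vdash \conf{\rho_i, P_i}$ then follows, since by hypothesis $\Sigma_P \subseteq \Sigma_\rho$ and \autoref{thm:uniqueness} guarantees these qubit sets are uniquely determined by each configuration. I would proceed by rule induction on the derivation of $\conf{\rho, P} \to \Delta$. Two observations dispatch the boundary cases at once: the configuration $\bot$ is well-typed under every pair, so the default rule $\mathcal C \to \singleton{\bot}$ (and $\bot \to \singleton{\bot}$) is immediate; and since congruent processes are required to share the same $\Sigma$, closing $\to$ under structural congruence preserves the invariant.

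For the axioms acting on the quantum store the key point is that the qubit index set of the global state never changes. In rule \textsc{QOp} the resulting state is $\sop[\tilde q]{E}{\rho}$, and in rule \textsc{QMeas} each outcome produces a normalised post-measurement state obtained by applying $(M_m)_{\tilde q}$; by construction both $\mathcal{E}_{\tilde q}$ and $(M_m)_{\tilde q}$ act on the whole of $\hilbert_{\Sigma_\rho}$, hence map $\density{\hilbert_{\Sigma_\rho}}$ into itself and leave $\Sigma_{\rho_i} = \Sigma_\rho$. On the process side the continuation is $P$ itself (for \textsc{Tau} and \textsc{QOp}) or $P[\sfrac{m}{y}]$ (for \textsc{QMeas}); since a classical substitution touches no qubit names, $\Sigma_{P_i} = \Sigma_P$ in each case. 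The cases for $\setminus c$ and $+$ are then immediate from the induction hypothesis together with the fact that, by rules \textsc{Restrict} and \textsc{Sum}, neither restriction nor summation alters the set of quantum names.

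The interesting case is parallel composition $P_1 \parallel P_2$, typed by $\Sigma_1 \cup \Sigma_2$ with $\Sigma_1 \cap \Sigma_2 = \emptyset$. For a pure interleaving step of, say, $P_1$, I would first note that $\Sigma_1 \subseteq \Sigma_P \subseteq \Sigma_\rho$, so $\conf{\rho, P_1}$ is itself well-typed; applying the induction hypothesis to the sub-derivation $\conf{\rho, P_1} \to \Delta'$ yields $\Sigma_{\rho_i} = \Sigma_\rho$ and $\Sigma_{P_1'} = \Sigma_1$ for each residual, and re-attaching $P_2$ recovers $\Sigma_1 \cup \Sigma_2 = \Sigma_P$. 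The genuine obstacle is the synchronisation step in which a qubit is communicated: the receiver $c?x.P_1'$ (with $\Sigma_1 \cup \{x\} \vdash P_1'$ by \textsc{QRecv}) consumes the qubit $q$ offered by a sender $c!q$ (with $\{q\} \subseteq \Sigma_2$ by \textsc{QSend}), and its continuation becomes $P_1'[\sfrac{q}{x}]$, while $\rho$ is untouched.

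To close this case I need a substitution lemma: \emph{if $\Sigma \cup \{x\} \vdash P$ with $x : \qtype$ and $q \in \qtype \setminus \Sigma$, then $\Sigma \cup \{q\} \vdash P[\sfrac{q}{x}]$}, proved by a routine induction on the typing derivation of $P$ (the analogous statement for classical values leaves $\Sigma$ unchanged and is simpler, covering classical synchronisation). The disjointness $\Sigma_1 \cap \Sigma_2 = \emptyset$ gives $q \notin \Sigma_1$, so the lemma applies and yields $\Sigma_1 \cup \{q\} \vdash P_1'[\sfrac{q}{x}]$; since the consumed output removes exactly $\{q\}$ from $\Sigma_2$, the residual is again typed by $(\Sigma_1 \cup \{q\}) \cup (\Sigma_2 \setminus \{q\}) = \Sigma_1 \cup \Sigma_2 = \Sigma_P$. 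I expect the substitution lemma — and the bookkeeping that the transferred qubit leaves the sender's context precisely as it enters the receiver's — to be the main technical effort; every other case reduces to a direct appeal to the induction hypothesis.
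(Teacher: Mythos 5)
Your proposal is correct and follows essentially the same route as the paper: rule induction on the transition derivation, with all cases except communication immediate (noting $\Sigma_\rho$ is never altered and classical substitution is trivial), and the synchronisation case closed by exactly the paper's quantum substitution lemma — if $\Sigma \cup \{x\} \vdash P$ and $v \notin \Sigma$ then $\Sigma \cup \{v\} \vdash P[\sfrac{v}{x}]$ — with freshness of the transferred qubit supplied by the disjointness condition of \textsc{Par}. Your strengthened invariant (equality of the intrinsic qubit sets on every element of the support) is in fact already forced by the paper's definition of configuration typing, so it is a faithful restatement rather than a genuinely different argument.
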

\begin{proofsketch}
		By induction on the derivation of $\conf{\rho, P} \longrightarrow \Delta$.
		The only interesting case is for \textsc{Reduce}, for which we prove that substitution works
		if the new name is not in the typing context of the process, i.e.\ that if $\Sigma \cup \{x\} \vdash P$ and $v \not\in \Sigma$ then $\Sigma \cup \{v\} \vdash P[\sfrac{v}{x}]$.
    Then it suffices to note that $v \not\in \Sigma$ is guaranteed by the typing of parallel processes. Note also that $\Sigma_\rho$ is not impacted by any rule, therefore it is trivially preserved.
\end{proofsketch}

\begin{example}
	The semantics of $\proc{QL}$ from~\autoref{ex:quantumlottery} on quantum state $\dket{0}$ is as follows
	\begin{align*}
		 & \singleton{\conf{\dket{0}, \proc{QL}}}
		\to \singleton{\confl\dket{+},\meas[M_{01}]{q}{x}.((\ite{x = 0}{a!1}{b!1}) \parallel \nil_q)\confr}                                            \\
		 & \ \to \Big(\singleton{\conf{\dket{0}, \ite{0 = 0}{a!1}{b!1}}} \psum{1/2} \singleton{\conf{\dket{1}, \ite{1 = 0}{a!1}{b!1}}}\Big) \parallel \nil_q \\
		 & \ \equiv \Big(\singleton{\conf{\dket{0}, a!1}} \psum{1/2} \singleton{\conf{\dket{1}, b!1}}\Big) \parallel \nil_q
	\end{align*}
\end{example}

\begin{figure}[!t]
  \small
	\begin{gather*}
		\infer[\rulename{Tau}]{\conf{\rho, \tau . P + Q} \longrightarrow \singleton{\conf{\rho, P}}}{} \quad
		\infer[\rulename{Restrict}]{\conf{\rho, P \setminus c} \longrightarrow \Delta \setminus c}{\conf{\rho, P} \longrightarrow \Delta} \quad
		\infer[\rulename{QOp}]{\conf{\rho, \sop{E}{\tilde{q}} . P + Q} \longrightarrow \singleton{\conf{\sop[\tilde{q}]{E}{\rho}, P}}}{} \\
		\infer[\rulename{QMeas}]{\conf{\rho, \meas{\tilde{q}}{y} . P + Q} \longrightarrow \sum_{m=0}^{|M| - 1} \distelem{p_m}{\singleton{\conf{\frac{\rho_m}{p_m}, P[\sfrac{m}{y}]}}}}{\rho_m = (M_m)_{\tilde{q}} (\rho) & p_m = \tr(\rho_m)} \qquad
		\infer[\rulename{Par}]{\conf{\rho, P \parallel Q} \longrightarrow \Delta \parallel Q}{\conf{\rho, P} \longrightarrow \Delta} \\
		\infer[\rulename{Reduce}]{\conf{\rho, (c!v + R) \parallel ((c?x . P) + Q)} \longrightarrow \singleton{\conf{\rho, P[\sfrac{v}{x}]}}}{} \qquad
		\infer[\rulename{Congr}]{\conf{\rho, P} \longrightarrow \Delta'}{P \equiv Q & \conf{\rho, Q} \longrightarrow \Delta & \Delta \equiv \Delta'}
	\end{gather*}

	\caption{lqCCS Semantics}
	\label{semantics}
\end{figure}

\subsection{A First Notion of Behavioural Equivalence}

Since quantum mechanics is intrinsically probabilistic, quantum processes are
commonly compared by using some probabilistic version of
bisimilarity~\cite{lalire_process_2004,lalire_relations_2006,feng_probabilistic_2007,feng_bisimulation_2012,deng_open_2012}.
We follow the approach of~\citet{hennessy_exploring_2012}, defining
bisimulations directly on distributions.
Differently from the previous proposals, we do not use labels but rely instead on contexts and barbs, i.e.\ defining a saturated bisimilarity à la~\citet{bonchi_general_2014}.

We start by defining barbs, i.e.\ atomic observable properties of the lqCCS
processes.
	\begin{definition}
		A \emph{process barb} is a predicate $\downarrow_{c}$ on processes satisfied
		by $P$ (written $P\!\!\downarrow_{c}$) if $P \equiv (c!e + R) \parallel Q$ for
		some $Q, R$.
		A \emph{distribution barb} is a predicate $\downarrow_{b}^p$ on distributions
		such that
		\begin{itemize}
			\item $\Delta$ satisfies $\downarrow_{c}^p$, written $\Delta\!\downarrow_{c}^p$, if $\sum_{P \downarrow_c} \Delta\big(\conf{\rho, P}\big) = p$;
			\item  $\Delta$ satisfies $\downarrow_{\bot}^p$, written $\Delta\!\downarrow_{\bot}^p$, if $\Delta(\bot) = p$.
		\end{itemize}
	\end{definition}
Intuitively, the barbs of a process are the visible channels on which a value is ready
to be sent, while the barbs of a distribution are defined as the probability of
having a process capable to send on a given channel, or as the probability of
having a deadlocked process. Notice that if $\Delta\!\downarrow_{\bot}^p$, then
it must be $\Delta = \singleton\bot \psum{p} \Delta'$ for some $\Delta'$ such that $\Delta'(\bot) = 0$.
Note also that barbs are purely classical.

	In saturated bisimilarities,
	contexts $B[\blank]$ play the
	role of observers that are used for discriminating processes.
	In this first version of bisimilarity, they are defined as lqCCS processes with a typed hole.

\begin{definition}
	A context $B[\blank]_{\Sigma}$ is generated by the production
	$B[\blank]_{\Sigma} \Coloneqq [\blank]_{\Sigma} \parallel P$, up to structural congruence and typed
  according to the rules in~\autoref{typing_rules}
	and to the following one
	\[
		\infer[\rulename{Hole}]{\Sigma' \vdash [\blank]_{\Sigma} \parallel P}
		{\Sigma' \setminus \Sigma \vdash P  & \Sigma \subseteq \Sigma'}
	\]
\end{definition}
A process $P$ is
applied to contexts by replacing the hole with $P$.
Intuitively, a context $\Sigma' \vdash B[\blank]_{\Sigma}$ is a function that
given a process $P$ returns a process $B[P]$ obtained by replacing $P$ for $[\blank]$,
where $\Sigma$ is the typing context of the valid inputs and $\Sigma'$ the one
of the outputs.
Note that a context can own some qubits and each qubit cannot be referred to in both $P$ and $B[\blank]$.
We apply $\Sigma' \vdash B[\blank]_{\Sigma}$ to configurations $(\Sigma_\rho, \Sigma_P) \vdash \conf{\rho, P}$ obtaining $(\Sigma_\rho, \Sigma') \vdash \conf{\rho, B[P]}$ when
	$\Sigma' \subseteq \Sigma_\rho$ and $\Sigma = \Sigma_P$, i.e.\ when the qubits referred by $B[\blank]$ are defined in $\rho$ and the process $P$ is as prescribed by $B[\blank]$.
	We write $B[\conf{\rho, P}]$ for $\conf{\rho, B[P]}$, $B[\bot]$ for $\bot$, and
	$B[\Delta]$ for the distribution obtained by applying $B[\blank]$ to the support of $\Delta$.
  It is trivial to show that if $\Delta$ and $\Theta$ are typed by the same typing context, then $B[\Delta]$ is defined if and only if $B[\Theta]$ is defined, and that their typing is unique. %

In the following, we only consider well-typed distributions and contexts, and we impose bisimulations to be over distributions of the same type (thus we avoid specifying types).

\begin{definition}[s-bisimilarity]
	A relation $\rel \subseteq \dist{\confbot} \times \dist{\confbot}$ is a
	\emph{saturated bisimulation} if $\Delta\,\rel\,\Theta$ implies
	that for any context $B[\blank]$ it holds
	\begin{itemize}
		\item $\Delta\!\downarrow_{b}^p$ if and only if
		      $\Theta\!\downarrow_{b}^p$;
		\item whenever $B[\Delta] \rightarrow \Delta'$, there exists $\Theta'$
		      such that $B[\Theta] \rightarrow \Theta'$ and $\Delta'\;\rel\;\Theta'$;
		\item whenever $B[\Theta] \rightarrow \Theta'$, there exists $\Delta'$
		      such that $B[\Delta] \rightarrow \Theta'$ and $\Delta'\;\rel\;\Theta'$.
	\end{itemize}
	Let \emph{saturated bisimilarity} $\sim_{s}$ be the largest
	saturated bisimulation.

	We say that two processes $P, Q$ are
	saturated bisimilar if $\singleton{\conf{\rho, P}} \sim_{s}
		\singleton{\conf{\rho, Q}}$ for any $\rho$.
\end{definition}
When encoding a protocol or its specification in lqCCS, each qubit must be sent on a visible channel if its value is a relevant aspect of the given protocol, discarded otherwise.
Notice that, thanks to linearity, the visibility of qubits cannot be ambiguous, and that $\sim_{s}$ replicates the results of the previous proposals in the unambiguous cases (see the first three rows of~\autoref{tab:examples}).

\begin{example}
	Consider \proc{QL} of~\autoref{ex:quantumlottery}. It is not
	difficult to show that $\singleton{\conf{\dket{0}, \proc{QL}}}$ is
	bisimilar to $\Delta_{\proc{QL}} = \Big(\overline{\conf{\rho,
				\tau.\tau.a!1}} \psum{1/2} \overline{\conf{\rho, \tau.\tau.b!1}}\Big) \parallel
		\nil_q$ for any $\rho \in \qubit$.
	Note that $\Delta_{\proc{QL}}
		\rightarrow \Delta_{\proc{QL}}' \rightarrow
		\Delta_{\proc{QL}}''$ with
		$\Delta_{\proc{QL}}' = \Big(\overline{\conf{\rho, \tau.a!0}} \psum{1/2} \overline{\conf{\rho, \tau.b!1}}\Big) \parallel \nil_q$ and
		$\Delta_{\proc{QL}}'' = \Big(\overline{\conf{\rho, a!0}} \psum{1/2} \overline{\conf{\rho, b!1}}\Big) \parallel \nil_q$.
  	It suffices then to give the relation $\rel$ below, which is a bisimulation once closed for congruence and contexts
    \[
      \rel = \bigg\{\Big(\overline{\conf{\dket{0},\proc{QL}}},\Delta_{\proc{QL}}\Big), \Big(\overline{\conf{\dket{+}, \meas[M_{01}]{q}{x}.(c!x \parallel \nil_q)}},\Delta_{\proc{QL}}'\Big), \Big(\Delta_{\proc{QL}}'',\Delta_{\proc{QL}}''\Big),\Big(\overline{\bot},\overline{\bot}\Big)\bigg\}
    \]

	Finally, note that this result depends on the use of discard, meaning that the messages over $a$ and $b$ are the outcome of the protocol, and not the resulting quantum state.
		For example, the distribution $\singleton{\conf{\ketbra{0}, H(q).\meas[M_{01}]{q}{x}.((\ite{x = 0}{a!1}{b!1}) \parallel c!q)}}$ is not bisimilar to
    $\overline{\conf{\ketbra{0}, \tau.\tau.a!1 \parallel c!q}} \psum{1/2} \overline{\conf{\ketbra{0}, \tau.\tau.b!1 \parallel c!q}}$. 
    To prove that, it suffices to consider the context $B[\blank] = [\blank] \parallel c?x.\meas[M_{01}]{x}{y}.\ite{y = 0}{\nil_x}{\mathit{fail}!x}$, which will eventually exhibit the barb $\mathit{fail}$ with the former distribution only.
\end{example}

As a result of our saturated approach, entangled pairs are correctly distinguished from separable states with the same partial trace (see the fourth row of the same table).
\begin{example}\label{ex:entanglement}
	The processes in the fourth row of~\autoref{tab:examples} are distinguished by the context $B[\blank] = [\blank] \parallel c?x.c?y.\meas{x, y}{z}.(\ite{z = (01)_2}{d!0}{\nil} \parallel \nil_{x, y})$,
	because it will eventually exhibit the barb $d$ if the state decays in $\ket{01}$ (which is only possible for $\frac{1}{4} I$).
\end{example}

Finally, notice that saturated bisimilarity equates the discard process $\nil_{q_1,q_2}$ with both $\nil_{q_2,q_1}$ and $c?x.(c!q_1 \parallel c!q_2 \parallel c!x) \setminus c$.
To prove that, it suffices to give the relation associating $\Delta$ with $\Delta[\sfrac{P}{Q}]$ for any choice of $P$ and $Q$ among the processes above.
This is clearly a bisimulation, once closed for contexts and congruence: for any context $B[\blank]$, if $B[P] \rightarrow \Delta$ then $B[Q] \rightarrow \Delta[\sfrac{P}{Q}]$.

\section{Curbing the Power of Non-Deterministic Contexts}\label{sec:db}
The example in the fifth row of~\autoref{tab:examples}, shows that saturated bisimilarity is inadequate for the quantum case
(similarly to QPAlg~\cite{lalire_relations_2006} and qCCS~\cite{deng_open_2012}).
Moreover, we trace the cause of this problem 
to the interaction between probabilistic and non-deterministic behaviour.
We define a new semantics where non-determinism in contexts is constrained.
These constraints solve the problem above, matching a defining feature of quantum systems,
i.e.\ that states cannot be observed without being affected.
Nonetheless, processes can still perform different non-deterministic choices upon known classical
values.
As a result of that, processes in the sixth row of~\autoref{tab:examples} are distinguished.

\subsection{Non-Deterministic Issues}\label{nonDeterministicIssues}

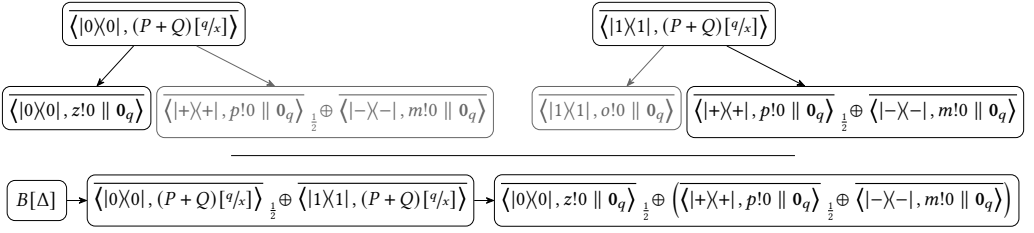
\begin{figure}[!t]
	\resizebox{\textwidth}{!}{
		\begin{tikzpicture}[>={Stealth[round]}]
			\tikzstyle{longsquiggly} = [->,line join=round,decorate,decoration={zigzag,segment length=4,amplitude=.9,pre=lineto,pre length=5pt,post=lineto,post length=2pt}]
			\tikzstyle{dim} = [minimum width=30pt, minimum height=20pt]
			\tikzstyle{subelem} = [dim, draw, rectangle, rounded corners]
			\tikzstyle{leftpat} = [pattern={Hatch[angle=45,distance=3pt]},pattern color=black!17]
			\tikzstyle{rightpat} = [pattern={Dots[angle=45,distance=3pt]},pattern color=black!22]
			\node (zpq) at (0,0) [subelem] {$\sconf{\ketbra{0}, (P+Q)[\sfrac{q}{x}]}$};
			\node (zp) [subelem,below left=20pt and -45pt of zpq] {$\sconf{\ketbra{0}, z!0 \parallel \nil_{q}}$};
			\node (zq) [subelem,below right=20pt and -45pt of zpq,black!60] {$\sconf{\ketbra{+}, p!0 \parallel \nil_{q}} \psum{\frac{1}{2}} \sconf{\ketbra{-}, m!0 \parallel \nil_{q}}$};
			\draw[->] (zpq) -- (zp);
			\draw[->,black!60] (zpq) -- (zq);
			\node (opq) [subelem,right=175pt of zpq] {$\sconf{\ketbra{1}, (P+Q)[\sfrac{q}{x}]}$};
			\node (op) [subelem,below left=20pt and -45pt of opq,black!60] {$\sconf{\ketbra{1}, o!0 \parallel \nil_{q}}$};
			\node (oq) [subelem,below right=20pt and -45pt of opq] {$\sconf{\ketbra{+}, p!0 \parallel \nil_{q}} \psum{\frac{1}{2}} \sconf{\ketbra{-}, m!0 \parallel \nil_{q}}$};
			\draw[->,black!60] (opq) -- (op);
			\draw[->] (opq) -- (oq);
			\node[fit=(zpq) (zp) (zq) (opq) (op) (oq)] (bound1) {};
			\draw ($(bound1) + (-5, -1.5)$) -- ($(bound1) + (5, -1.5)$);
			\begin{scope}[xshift=-59pt,yshift=-88pt]
				\node (step0) at (0,0) [subelem] {$B[\Delta]$};
				\node (step1) [subelem, right=10pt of step0] {$\sconf{\ketbra{0}, (P+Q)[\sfrac{q}{x}]} \psum{\frac{1}{2}} \sconf{\ketbra{1}, (P+Q)[\sfrac{q}{x}]}$};
				\node (step2) [subelem, right=10pt of step1] {$\sconf{\ketbra{0}, z!0 \parallel \nil_{q}} \psum{\frac{1}{2}} \Big(\sconf{\ketbra{+}, p!0 \parallel \nil_{q}} \psum{\frac{1}{2}} \sconf{\ketbra{-}, m!0 \parallel \nil_{q}}\Big)$};
				\draw[->] (step0) -- (step1);
				\draw[->] (step1) -- (step2);
			\end{scope}
		\end{tikzpicture}
	}
	\caption{
		On the bottom, the evolution of $B[\Delta]$ into $\Delta'$ and then $\Delta''$. The last step is built by convex combination
		of two freely chosen moves of the subdistributions, which are displayed above.
		It is clear that observers can make different
		non-deterministic choices for each subdistribution, also based on quantum states.
	}
	\label{fig:nondetissue2}
\end{figure}

In quantum theory, the encoding of probability distributions of quantum states as
density operators is not injective.
Formally, a density operator
represents an equivalence class of distributions of quantum states that behave the same according
to quantum theory.
They are defined by the relation $\rel \subseteq
\dist{\hilbert} \times \dist{\hilbert}$ such that $\Delta\,\rel\,\Theta$ whenever $\sum_{\ket{\psi}}
\Delta(\ket{\psi}) \ketbra{\psi}{\psi} = \sum_{\ket{\psi}} \Theta(\ket{\psi})
\ketbra{\psi}{\psi}$.

Consider a pair of non-biased random qubit sources, the first sending a qubit in state $\kz$ or
$\ko$, the second in state $\kpl$ or $\km$.
Quantum theory prescribes that such two sources cannot be
distinguished by any observer, as the received qubit behaves the same (see~\citet[Section 2.4.2]{nielsen_quantum_2010}).
Indeed, the
(mixed) states of the qubits sent by the two sources are represented by the same density operator $\frac{1}{2}I$.
One expects the lqCCS encoding of these
sources to be bisimilar. Somewhat surprisingly, this is not the case.
\begin{example}\label{ex:broken-nondet}
  Take the fifth row of~\autoref{tab:examples}. After two steps with empty contexts, the two distributions evolve into $\Delta = \singleton{\conf{\dket{0}, c!q}} \psum{1/2} \singleton{\conf{\dket{1}, c!q}}$ and $\Theta = \singleton{\conf{\dket{+}, c!q}} \psum{1/2} \singleton{\conf{\dket{-}, c!q}}$, encoding the qubit sources above.
	To see that $\Delta \not\sim_{s} \Theta$, take $B[\blank] = [\blank] \parallel
		c?x.(P+Q)$ where
	\begin{align*}
    P\ =\  & \meas[M_{01}]{x}{y} . ((\ite{y=0}{z!0}{o!0}) \parallel \nil_{x})\text{, and} \\
    Q\ =\  & \meas[M_{\pm}]{x}{y} . ((\ite{y=0}{p!0}{m!0}) \parallel \nil_{x}).
	\end{align*}
	$B[\Delta]$ reduces to $\Delta' = \Big(\singleton{\conf{\dket{0}, P+Q}} \psum{1/2}
		\singleton{\conf{\dket{1}, P+Q}}\Big)[\sfrac{q}{x}]$, and $B[\Theta]$ can only reduce to $\Theta' = \Big(\singleton{\conf{\dket{+}, P+Q}}
		\psum{1/2} \singleton{\conf{\dket{-}, P+Q}}\Big)[\sfrac{q}{x}]$ to match this move.
  By choosing $P$ and $Q$ respectively in the left and right part, $\Delta'$ reduces to $\Delta'' = \Big(\singleton{\conf{\dket{0}, z!0}} \psum{1/2} \big(\singleton{\conf{\dket{+}, p!0}} \psum{1/2} \singleton{\conf{\dket{-}, m!0}}\big)\Big) \parallel \nil_{q}$
	that exhibits the barb $z$ but not $o$.
	It is easy to check that $\Theta$ cannot replicate this behaviour: for any
	choice of $P$ and $Q$ it will either express both $z$ and $o$, or none of
	them.
\end{example}
This result is paradigmatic, where different mixtures of quantum states are discriminated by a non-deterministic context that chooses how to reduce
based on the value of the received qubit, which in theory should be unknown (see~\autoref{fig:nondetissue2}).
Note also that the two sources above have a deterministic behaviour, while non-determinism is only introduced by the
context.

We argue that unconstrained non-deterministic contexts are too strong for
representing the real capacity of discriminating quantum processes.
Therefore, in the following, we give a new semantics that constrains non-deterministic
contexts so that they cannot apply the strategy above to discriminate between
processes.  Note that removing $+$ from the contexts is not sufficient,
as we can replicate the same non-deterministic behaviour
of~\autoref{ex:broken-nondet} with just the parallel composition, e.g.\ with
$B'[\blank] = [\blank] \parallel c?x.P \parallel c?x.Q$.

\subsection{Constrained Bisimilarity}

\begin{figure}[!t]
  \small
	\centering
	\begin{gather*}
		\infer[\rulename{OQOp}]{\conf{\rho, P, \sop{E}{\tilde{q}} . R} \longsquiggly_\varepsilon \singleton{\conf{\sop[\tilde{q}]{E}{\rho}, P, R}}}{} \qquad
		\infer[\rulename{OQMeas}]{\conf{\rho, P, \meas{\tilde{q}}{y} . R} \longsquiggly_\varepsilon \sum_{m=0}^{|M| -1} \distelem{p_m}{\singleton{\conf{\frac{\rho_m}{p_m}, P, R[\sfrac{m}{y}]}}}}{\rho_m = (M_m)_{\tilde{q}}(\rho) & p_m = \tr(\rho_m)} \\
		\infer[\rulename{Input}]{\conf{\rho, ((c!v + P) \parallel Q) \setminus D, c?x.R + S} \longsquiggly_\varepsilon \singleton{\conf{\rho, Q \setminus D, R[\sfrac{v}{x}]}}}{c \not\in D} \\
		\infer[\rulename{Output}]{\conf{\rho, (((c?x . P) + P') \parallel Q) \setminus D, c!v} \longsquiggly_\varepsilon \singleton{\conf{\rho, (P[\sfrac{v}{x}] \parallel Q) \setminus D, \nil}}}{c \not\in D} \\
		\infer[\rulename{Process}]{O[\conf{\rho, P}] \longsquiggly_\diamond O[\Delta]}{\conf{\rho, P} \longrightarrow \Delta} \qquad
		\infer[\rulename{ParL}]{\conf{\rho, P,  R \expar S} \longsquiggly_{\ell\cdot\lambda} \Delta\expar S }{\conf{\rho, P, R} \longsquiggly_\lambda \Delta} \qquad
		\infer[\rulename{ParR}]{\conf{\rho, P,  R \expar S} \longsquiggly_{r\cdot \lambda} R\expar \Delta}{\conf{\rho, P, S} \longsquiggly_\lambda \Delta} \\
		\infer[\rulename{Congr}]{\conf{\rho, P, R} \longsquiggly_\pi \Delta}{P \equiv P' & R \equiv_o R' & \conf{\rho, P', R'} \longsquiggly_\pi \Delta' & \Delta' \equiv_o \Delta} \qquad
	\end{gather*}
	
	\caption{lqCCS enhanced semantics}
	\label{ext_semantics}
\end{figure}

We consider a special set of processes, called \emph{observers}, that are used as constrained contexts for lqCCS\@.
An observer $R$ is a process without silent action, restrictions and with
non-deterministic choice limited to sums of receptions. 
We constrain non-deterministic choices for matching the observational limitations prescribed by quantum theory, while silent actions and restrictions are safely omitted for convenience, as they do not increase the discriminating capabilities of contexts (as proved by \citet{bonchi_general_2014}).

Formally, an observer is defined by taking the pre-terms generated by the following grammar and by imposing additional constraints over parallel composition
\begin{align*}
	R \Coloneqq & \ \nil_{\tilde{e}} \mid c!e \mid T \mid R \expar R \mid \ite{e}{R}{R} \mid
	\sop{E}{\tilde{x}}.R \mid \meas{\tilde{x}}{y}.R\\
	T \Coloneqq & \ c?x.R \mid T + T
\end{align*}
A pre-term $R$ is an observer if $T \not\equiv c?x.R' + c?y.R'' + T'$ for any $T$ appearing in $R$.
This additional constraint forbids parallel processes in $R$ from performing non-deterministic choices upon reception, 
and it could be decided by a suitable extension of our type system.

Despite the syntactical constraints, the major difference between a process $P$ and an observer $R$ is their treatment in our enhanced semantics, which is given on extended configurations: triples with an observer as the third element.
In the following, we sometimes write $\conf{\rho, P}$
for $\conf{\rho, P, \nil}$.
We say that $(\Sigma_\rho, \Sigma) \vdash \conf{\rho, P, R}$ if
$\Sigma = \Sigma_P \cup \Sigma_R$, $\Sigma_P \cap \Sigma_R = \emptyset$, and $\Sigma \subseteq \Sigma_\rho$.
Distributions of configurations
are typed as usual.
Contexts $O[\blank]_\Sigma$ are defined from observers $R$ up to
structural congruence as $O[\blank]_{\Sigma} \Coloneqq [\blank]_{\Sigma} \expar
	R$.
Context application $O\big[\conf{\rho, P, R}\big]$ is defined over configurations as $\conf{\rho, P, O[R]}$.

\begin{figure}
	{
		\small
		\begin{gather*}
			\infer[\rulename{SCCSumNil}]{P + \nil_{\tilde{e}} \equiv_o P}{} \qquad
			\infer[\rulename{SCCSumComm}]{P + Q \equiv_o Q + P}{} \qquad
			\infer[\rulename{SCCSumAssoc}]{P + (Q + R) \equiv_o (P + Q) + R}{} \\
			\infer[\rulename{SCCIteF}]{\ite{\false}{P}{Q} \equiv_o Q}{} \qquad
			\infer[\rulename{SCCIteT}]{\ite{\true}{P}{Q} \equiv_o P}{}  \qquad
			\infer[\rulename{SCCValExpr}]{P \equiv_o P[\sfrac{v}{e}]}{e \Downarrow v}
		\end{gather*}
	}
	\caption{Structural congruence of lqCCS observers}
	\label{fig:obscongr}
\end{figure}

We define a new congruence for observers, named $\equiv_o$, based on $\equiv$ but
lacking rules for parallel  and restriction operators (see~\autoref{fig:obscongr}).
We lift $\equiv_o$ to distributions by linearity and imposing
$\overline{\bot} \equiv_o \overline{\bot}$ and $\singleton{\conf{\rho, P, R}}
	\equiv_o \singleton{\conf{\rho, P', R'}}$
if $P \equiv P'$ and $R \equiv_o R'$.
Note that the commutativity of the parallel operator is preserved for processes, while it does not hold for observers:
we want to distinguish the locus of the reductions for properly constraining the non-deterministic evolution of distributions.
We give an \emph{enhanced semantics} for lqCCS in~\autoref{ext_semantics}, adopting
the style of~\citet{DeganoP01}:
an arrow $\longsquiggly_\pi \subseteq \confbot \times \dist{\confbot}$ for any
index $\pi$, which encodes the non-deterministic choice of the observer (or $\diamond$ if only the process evolves).
Indices are strings in $\{\ell, r \}^* \cup \{\diamond\}$, and we write
$\varepsilon$ for the empty string, $\cdot$ for string concatenation, and
$\lambda$ for indices different from $\diamond$. 

In the rules \textsc{In} and
\textsc{Out}, we write $\setminus D$ for any sequence of restrictions. We
also write $\Delta \parallel R$ and $R \parallel \Delta$ meaning that $R$ is composed with the observers
of $\Delta$.
As for the
probabilistic semantics, the $\longsquiggly_\pi$ is extended by imposing that
$\iconf \longsquiggly_\pi \singleton\bot$ if there is no $\Delta$ such that $\mathcal{C}  \longsquiggly_\pi  \Delta$
In the following, we lift semantic transitions, and write
$\longsquiggly_\pi$ instead of $\text{lift}(\longsquiggly_\pi) \in
	\dist{\confbot} \times \dist{\confbot}$.
The observational power is limited by the indexing as the lifting of $\longsquiggly_\pi$ allows a
distribution to evolve only when all its configurations reduce by performing the same choice. 
The absence of rules for observer synchronization does not limit the observational power~\cite{bonchi_general_2014}.
Note that the transition system is still non-deterministic,
but different non-deterministic choices produce different indices. Nevertheless,
the observer is still capable to behave differently in different configurations
of the same distribution, but only through the
$\ite{\!\blank\!}{\!\blank\!}{\!\blank\!}$ construct or because of a sum of receptions
(where the choice is performed by the sending process and not by the observer).
This ensures that the observer choices are based
on \textit{classical information} obtained from the process.

We say that a configuration satisfies the barb $c$, written $\conf{\rho, P,
		R}\!\downarrow_c$, if $P \equiv (c!v + S) \parallel Q$ or $R \equiv c!v
	\parallel S$ (notice that we use $\equiv$ instead of $\equiv_o$ for $R$, thus
considering also commutativity, associativity, and identity). The extension of
barbs to distributions is as usual.

We now define constrained saturated bisimulation with the usual assumption that it is defined over distributions of the same type, and that contexts are taken accordingly.
\begin{definition}[cs-bisimilarity]
	A relation $\rel \subseteq \dist{\confbot} \times \dist{\confbot}$ is a
	\emph{constrained saturated bisimulation} if $\Delta\,\rel\,\Theta$ implies that
	for any context $O[\blank]$ it holds
	\begin{itemize}
		\item $\Delta\!\downarrow_{b}^p$ if and only if $\Theta\!\downarrow_{b}^p$;
		\item whenever $O[\Delta] \longsquiggly_\pi \Delta'$, there exists
		      $\Theta'$ such that $O[\Theta] \longsquiggly_\pi \Theta'$ and $\Delta'\;
			      \rel\;\Theta'$;
		\item whenever $O[\Theta] \longsquiggly_\pi \Theta'$, there exists $\Delta'$
		      such that $O[\Delta] \longsquiggly_\pi \Theta'$ and $\Delta'\;\rel\;\Theta'$.
	\end{itemize}
	Let \emph{constrained saturated bisimilarity} $\sim_{cs}$ be the largest constrained saturated
	bisimulation. We say that two processes $P, Q$ are constrained bisimilar if
	$\singleton{\conf{\rho, P, \nil}} \sim_{cs} \singleton{\conf{\rho, Q, \nil}}$ for each
	$\rho$.
\end{definition}

\begin{example}
	Consider $\Delta, P, Q$ from~\autoref{ex:broken-nondet} and take
	$O[\blank] = [\blank] \parallel c?x.P \parallel c?x.Q$ (notice that it emulates the context distinguishing $\Delta$ from $\Theta$ in~\autoref{ex:broken-nondet}).
	In the enhanced semantics, $O[\Delta]$ only performs
	$O[\Delta] \longsquiggly_{\ell} (\singleton{\conf{\dket{0}, \nil, P[\sfrac{q}{x}]}}
		\psum{1/2} \singleton{\conf{\dket{1}, \nil, P[\sfrac{q}{x}]}})\parallel c?x.Q
	$ or $O[\Delta] \longsquiggly_{r} (\singleton{\conf{\dket{0}, \nil, Q[\sfrac{q}{x}]}}
		\psum{1/2} \singleton{\conf{\dket{1}, \nil, Q[\sfrac{q}{x}]}}) \parallel c?x.P
	$, as the indices must coincide for any configuration in the support of $\Delta$.
	Indeed, we will later prove that $\Delta \sim_{cs} \Theta$.

\end{example}

\subsection{Behavioural Assessment of Constrained Bisimilarity}\label{behaviouralAssessment}

In the following, we state some important properties of $\sim_{cs}$.
A first result recovers the linearity of the relation.
We then prove that $\sim_{cs}$ is strictly coarser than the previously defined $\sim_s$.
Furthermore, we show that our constraints are strong enough to agree with the limitations of the observational power prescribed by quantum theory.
More precisely, we consider the equivalence classes over distributions of quantum states implicitly represented by density operators, and we generalize them
as classes of bisimilar distributions of lqCCS configurations.
An undue curbing of contexts is not better than a deficient one, 
therefore, we finally prove that the expressivity of the non-deterministic sum of processes is preserved 
when justified by classical information.
 
\begin{restatable}{theorem}{linearity}\label{thm:linearity}
	If $\Delta_i \sim_{cs} \Theta_i$ for $i = 1, 2$, then $\Delta_1 \psum{p} \Delta_2 \sim_{cs}  \Theta_1 \psum{p} \Theta_2$
	for any $p \in [0, 1]$.
\end{restatable}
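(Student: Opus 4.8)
The plan is to prove the stronger statement that the relation
\[
  \rel = \big\{\, (\Delta_1 \psum{p} \Delta_2,\ \Theta_1 \psum{p} \Theta_2) \;\mid\; \Delta_1 \sim_{cs} \Theta_1,\ \Delta_2 \sim_{cs} \Theta_2,\ p \in [0,1] \,\big\}
\]
is itself a constrained saturated bisimulation; since $\sim_{cs}$ is the largest one, this yields $\rel \subseteq \sim_{cs}$ and in particular the claim. Observe that $\sim_{cs} \subseteq \rel$ (take $p = 1$), which is convenient when closing the bisimulation game, since the matched pairs produced below will always be of the form required by $\rel$.

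Two preliminary observations drive the argument. First, context application is linear on distributions: as $O[\Xi] = \sum_{\iconf} \Xi(\iconf)\,\singleton{O[\iconf]}$ is the pushforward of $\Xi$ along $O[\blank]$, we get $O[\Delta_1 \psum{p} \Delta_2] = O[\Delta_1] \psum{p} O[\Delta_2]$. Second, each barb probability is affine: if $\Delta_i\!\downarrow_{b}^{q_i}$ then $(\Delta_1 \psum{p} \Delta_2)\!\downarrow_{b}^{\,p q_1 + (1-p) q_2}$. Combining this with the barb equivalence granted by $\Delta_i \sim_{cs} \Theta_i$ (whence $\Theta_i\!\downarrow_{b}^{q_i}$) discharges the barb clause for the pair $(\Delta_1 \psum{p} \Delta_2, \Theta_1 \psum{p} \Theta_2)$.

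For the transition clauses, fix a context $O[\blank]$ and suppose $O[\Delta_1 \psum{p} \Delta_2] \longsquiggly_\pi \Xi'$, i.e.\ $(O[\Delta_1] \psum{p} O[\Delta_2]) \longsquiggly_\pi \Xi'$ by the linearity above. Here I invoke the left-decomposability of the lifted relation $\longsquiggly_\pi$ (liftings are decomposable, a standard property of the lifting construction, cf.\ \citet{deng_bisimulations_2018}): hence $\Xi' = \Xi_1' \psum{p} \Xi_2'$ with $O[\Delta_i] \longsquiggly_\pi \Xi_i'$ for $i = 1,2$, \emph{crucially with the same index $\pi$}. This is precisely where the enhanced semantics helps: the lifting of $\longsquiggly_\pi$ forces every configuration in the support to make the same non-deterministic choice, and since $\support{O[\Delta_1 \psum{p} \Delta_2]} = \support{O[\Delta_1]} \cup \support{O[\Delta_2]}$, both summands evolve under $\pi$. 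Applying the hypotheses $\Delta_i \sim_{cs} \Theta_i$ to each move $O[\Delta_i] \longsquiggly_\pi \Xi_i'$ yields $\Xi_i''$ with $O[\Theta_i] \longsquiggly_\pi \Xi_i''$ and $\Xi_i' \sim_{cs} \Xi_i''$. By linearity of $\longsquiggly_\pi$ the matching move $O[\Theta_1 \psum{p} \Theta_2] \longsquiggly_\pi \Xi_1'' \psum{p} \Xi_2''$ is available, and $(\Xi_1' \psum{p} \Xi_2',\ \Xi_1'' \psum{p} \Xi_2'') \in \rel$ by definition. The symmetric clause is handled identically, decomposing the $\Theta$ side instead.

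I expect the only delicate point to be the appeal to decomposability of $\text{lift}(\longsquiggly_\pi)$: one must verify that a lifted $\pi$-move out of a convex combination splits into $\pi$-moves of the summands with the \emph{same} index. This is exactly where the indexing discipline of the enhanced semantics is essential — without it the two summands could make unrelated non-deterministic choices, the decomposition would fail, and linearity would break, mirroring the pathology of \autoref{ex:broken-nondet}. The remainder is bookkeeping: checking that $\rel$ is well-defined on same-typed distributions and that the chosen decompositions respect typing, which follow from typing preservation.
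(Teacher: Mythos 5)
Your proof is correct, and every step checks out: context application is indeed linear on distributions, barb probabilities are affine under $\psum{p}$, and the left-decomposability of $\text{lift}(\longsquiggly_\pi)$ for each fixed index $\pi$ is exactly the standard property of the lifting construction that the paper also invokes (via \cite{hennessy_exploring_2012}); since the lifting is taken separately for each $\pi$, the decomposed summand-moves automatically carry the same index, as you rightly emphasise. The difference from the paper is one of packaging rather than substance. The paper does not prove linearity directly: it first shows that the convex-hull function $Cv$ is compatible with the bisimulation functional $b$ (\autoref{thm:cvcomp}) --- a computation essentially identical to your transition-clause argument, but carried out for arbitrary finite convex combinations --- and then obtains \autoref{thm:linearity} as a one-line corollary, since compatibility forces $Cv(\sim_{cs})$ to be a bisimulation and hence $Cv(\sim_{cs}) \subseteq\ \sim_{cs}$. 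Your relation $\rel$ is precisely the binary fragment of $Cv(\sim_{cs})$, and your direct coinductive argument is the unfolding of the compatibility proof at the instance $\sim_{cs}$. What the paper's route buys is reuse: the same compatibility lemma also yields the soundness of bisimulation up to convex hull (\autoref{thm:uptocssound}), which is in turn needed for \autoref{thm:propertyA}; your route buys a self-contained, more elementary proof of linearity alone. One cosmetic remark: your side observation about supports ($\support{O[\Delta_1 \psum{p} \Delta_2]} = \support{O[\Delta_1]} \cup \support{O[\Delta_2]}$, which anyway holds only for $p \in (0,1)$) is not actually needed --- decomposability of the lifting does the work by itself.
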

\begin{proofsketch}
	It follows from the linearity of barbs and the decomposability of $\longsquiggly_\pi$, which hold by definition.
\end{proofsketch}

Regarding quantum properties, the enhanced semantics limits the capability of
the observer to perform different choices in different configurations of the
same distribution. 
As a result, cs-bisimilarity is strictly broader than s-bisimilarity, as observers are less powerful.

\begin{restatable}{theorem}{probsmallerthanconst}\label{thm:probsmallerthanconst}
	$\sim_{s}\ \subsetneq\ \sim_{cs}$.
\end{restatable}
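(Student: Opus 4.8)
The plan is to prove the two facts separately: the containment $\sim_{s}\subseteq\sim_{cs}$, which I read at the process level as $P\sim_{s}Q\Rightarrow P\sim_{cs}Q$, and the strictness, witnessed by the pair of \autoref{ex:broken-nondet}. The guiding intuition for the containment is that a constrained observer is just a general context whose lifted moves have been pruned to the \emph{uniform} (single-index) ones; hence every attack available in the enhanced semantics is, after forgetting the process/observer split, already available in the original one, so surviving all general contexts implies surviving all observers.

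I would make this precise through a \emph{flattening} map $\flat$ sending an extended configuration $\conf{\rho,P,R}$ to the plain configuration $\conf{\rho,P\parallel R}$, lifted pointwise to distributions and to contexts, so that the observer context $O[\blank]=[\blank]\parallel R_0$ is sent to the general context $B[\blank]=[\blank]\parallel R_0$ with $\flat(O[\Delta])\equiv B[\flat\Delta]$. Two lemmas carry the correspondence: a \emph{barb} lemma, $\conf{\rho,P,R}\!\downarrow_c \iff \flat\conf{\rho,P,R}\!\downarrow_c$, hence $\Delta\!\downarrow_b^p\iff\flat\Delta\!\downarrow_b^p$; and a \emph{single-step simulation} lemma, proved by case analysis on the rules of \autoref{ext_semantics}, showing that each enhanced rule mirrors a rule of \autoref{semantics} under $\flat$ (\textsc{OQOp}/\textsc{OQMeas} mirror \textsc{QOp}/\textsc{QMeas} composed with \textsc{Par}, \textsc{Input}/\textsc{Output} mirror \textsc{Reduce} with restrictions handled via structural congruence, and \textsc{Process} is literally an original move). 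Since the lifted $\longsquiggly_\pi$ forces all configurations of the support to pick the same index while the lifted $\rightarrow$ allows an arbitrary convex combination of freely chosen moves, every lifted enhanced transition $O[\Delta]\longsquiggly_\pi\Delta'$ flattens to a legal original transition $B[\flat\Delta]\rightarrow\flat\Delta'$. I would then exhibit $\rel=\{(\Delta,\Theta)\mid\flat\Delta\sim_{s}\flat\Theta\}$ as a cs-bisimulation; together with $\flat\singleton{\conf{\rho,P,\nil}}\equiv\singleton{\conf{\rho,P}}$ this yields the containment. The barb clause is the barb lemma; for the move clause I flatten $O[\Delta]\longsquiggly_\pi\Delta'$ to $B[\flat\Delta]\rightarrow\flat\Delta'$ and invoke $\flat\Delta\sim_{s}\flat\Theta$ to get a match $B[\flat\Theta]\rightarrow\Xi$ with $\flat\Delta'\sim_{s}\Xi$.

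The hard part is exactly this last step. The relation $\sim_{s}$ returns \emph{some} original move $\Xi$, but because the original semantics is more permissive a priori $\Xi$ need not be the flattening of a \emph{uniform}, single-index $\longsquiggly_\pi$ transition of $O[\Theta]$, hence need not correspond to a legal constrained move. My plan to overcome this is to match not with the bare $B$ but with an \emph{index-forcing} refinement $B_\pi$: augment the observer so that the leaf selected by $\pi$ emits a fresh, otherwise-unused barb $\mathit{ok}_\pi$ on a restricted-then-exposed channel immediately after acting. Then $B_\pi[\flat\Delta]$ reaches a distribution exhibiting $\mathit{ok}_\pi$ with probability $1$ (every configuration performed the $\pi$-action), so by the barb clause any $\sim_{s}$-matching move of $B_\pi[\flat\Theta]$ must also exhibit $\mathit{ok}_\pi$ with probability $1$; since only the $\pi$-tagged action produces that barb, the match is forced to be uniform in the index, i.e.\ it is the flattening of a genuine $O[\Theta]\longsquiggly_\pi\Theta'$. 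Removing the tag (using that $\sim_{s}$ is a congruence and choosing the tagging context to be $\sim_{s}$-reflecting, since it sits in parallel and fires exactly once) returns $\flat\Delta'\sim_{s}\flat\Theta'$, that is $\Delta'\,\rel\,\Theta'$; the symmetric clause is dual. I expect the delicate bookkeeping of the tagging context—ensuring it perturbs neither the quantum state nor the residual process-side and sync-partner non-determinism, and that tag-stripping preserves $\sim_{s}$—to be the main obstacle, underpinned by the invariant that, since cs-bisimilarity applies the \emph{same} context and the \emph{same} index on both sides starting from trivial observers $\nil$, the observer skeletons of $\Delta$ and $\Theta$ stay in correspondence along the game.

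For strictness I take $\Delta=\singleton{\conf{\dket{0},c!q}}\psum{1/2}\singleton{\conf{\dket{1},c!q}}$ and $\Theta=\singleton{\conf{\dket{+},c!q}}\psum{1/2}\singleton{\conf{\dket{-},c!q}}$ from \autoref{ex:broken-nondet}, where it is already shown that $\Delta\not\sim_{s}\Theta$ via the state-dependent non-deterministic context. Since $\Delta$ and $\Theta$ denote the same density operator $\tfrac12 I$, the indistinguishability result \autoref{thm:propertyA} gives $\Delta\sim_{cs}\Theta$. Hence the inclusion is strict.
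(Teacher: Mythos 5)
Your strictness half is sound and coincides with the paper's: the pair from \autoref{ex:broken-nondet} (equivalently \autoref{ex:zopm}) is separated by $\sim_s$, while \autoref{thm:propertyA} plus transitivity through $\sconf{\frac{1}{2}I, c!q}$ gives $\Delta \sim_{cs} \Theta$. Your containment strategy also starts from the same intuition as the paper --- flatten $\conf{\rho,P,R}$ to $\conf{\rho,P\parallel R}$ and force index-uniformity of the $\sim_s$-match via fresh barbs --- but the way you deploy the tags has a genuine gap. You keep the relation untagged, $\rel=\{(\Delta,\Theta)\mid \flat\Delta \sim_s \flat\Theta\}$, and insert the tag \emph{per move}, by refining the newly applied context $O$ to an index-forcing $B_\pi$. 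This only works when $\pi$ addresses a leaf of the context applied in the current round. In the cs-game the observer accumulates: from the second round on, the index $\pi$ of a move $O[\Delta]\longsquiggly_\pi\Delta'$ can point inside the observer residue installed in earlier rounds, which under $\flat$ has already been merged into the process component of the configurations your $\sim_s$-hypothesis talks about. Saturated contexts have the shape $[\blank]\parallel P$; they can add parallel components but cannot rewrite a prefix buried inside $\flat\Delta$ to make it emit $\mathit{ok}_\pi$ after firing. So for those moves the uniformity-forcing step is simply unavailable, and your relation cannot be shown to be a cs-bisimulation by this method. Your closing "invariant" about observer skeletons staying in correspondence gestures at the fix but is not one: the hypothesis you carry is about \emph{untagged} flattenings, and no amount of context-closure or congruence lets you retro-insert tags into them.

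The paper's proof resolves exactly this by making the tagging persistent rather than per-move: the translation $\ptag{\blank}_k$ decorates \emph{every} prefix of the \emph{whole} observer with a position-string barb (a summand $k!0$, consumed when the prefix at position $k$ fires), and the bisimulation candidate is $\rel=\{(\Delta,\Theta)\mid \ptag{\Delta}\sim_s\ptag{\Theta}\}$, so the tag infrastructure survives into later rounds instead of being stripped. Note two further divergences this choice avoids. First, your tag is emitted \emph{after} the action, so the matched state carries residual garbage barbs; removing them is a \emph{cancellation} property ($\Delta'\parallel \mathit{ok}!0 \sim_s \Theta'\parallel \mathit{ok}!0 \Rightarrow \Delta'\sim_s\Theta'$ for fresh $\mathit{ok}$) --- provable via the barb clause for a fresh send, but it is the converse of congruence, not a consequence of it as your sketch suggests, and in any case stripping destroys precisely the information later rounds need. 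The paper instead never strips: after a move the residual is again a fully tagged configuration up to a consistent relocation of tags (positions $k$ become $k\barr$, etc.), handled by a channel-renaming bijection argument showing $\ptag{\Delta'}^\pi \sim_s \ptag{\Theta''}^\pi$ implies $\ptag{\Delta'}\sim_s\ptag{\Theta''}$. Second, the paper's tagged-transition definition also demands that no tag \emph{other} than $\pi$ is consumed ($T(\ptag{\iconf})\setminus\{\pi\}\subseteq T(\ptag{\iconf'})$), which is what makes the correspondence with $\longsquiggly_\pi$ exact in both directions (Lemmas~\ref{interpret_fw} and~\ref{interpret_bw}); with a single on-the-fly tag you get only the forward forcing. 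To repair your proof you would essentially have to strengthen your relation to quantify over tagged flattenings --- at which point it becomes the paper's proof.
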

\begin{proofsketch}
	\autoref{ex:zopm} shows that $\sim_{cs}\ \not\subseteq\ \sim_{s}$. 
	For $\sim_{s}\ \subseteq\ \sim_{cs}$ we provide a translation $\ptag{\blank}$ that annotates a given $R$ with fresh barbs encoding the non-deterministic choices.
	We prove by induction that the enhanced semantics of $\conf{\rho, P, R}$  
	corresponds to the standard semantics of $\conf{\rho, P || \ptag{R}}$, where $\ptag{R}$ is composed with the parallel operator (as required by $\sim_s$).
	In particular, $\conf{\rho, P, R} \longsquiggly_\pi \Delta$ if and only if $\ptag{R} \downarrow_\pi$ and $\conf{\rho, P || \ptag{R}} \rightarrow \Delta'$ with $\Delta'\!\downarrow_\pi^{0}$ 
	(roughly, a $\longsquiggly_\pi$ move corresponds to a $\rightarrow$ one where the barb $\pi$ is ``consumed'').
	This allows us to prove that $\sim_{s}$ is a cs-bisimulation.
\end{proofsketch}
From the proof above it follows that the enhanced semantics does not give cs-bisimilarity any additional discriminating power with respect to the standard semantics. In other words, we could have defined $\sim_{cs}$ without changing the semantics, just requiring instead that contexts express their non-deterministic choices as barbs. The enhanced semantics --- as well as the absence of congruence rules for parallel observers --- is then just a convenient way to assign a name to each possible non-deterministic choice, which fits well with the SOS-style rules.

We discuss now how cs-bisimilarity deals with the issue presented in~\autoref{nonDeterministicIssues}.
\begin{example}\label{ex:zopm}
	Let $\Delta = \singleton{\conf{\ketbra{+}, \meas[M_{01}]{q}{x}.c!q, \nil}}$, and $\Theta = \singleton{\conf{\ketbra{0}, \meas[M_{\pm}]{q}{x}.c!q, \nil}}$.
	For each $O[\blank]$, $O[\Delta]$ evolves in
	$O[\Delta']$ with $\Delta'= \singleton{\conf{\dket{0}, c!q, \nil}} \psum{1/2}  \singleton{\conf{\dket{1}, c!q, \nil}}$, and $O[\Theta]$ in
	$O[\Theta']$ with $\Theta'=  \singleton{\conf{\dket{+}, c!q, \nil}} \psum{1/2}  \singleton{\conf{\dket{-}, c!q, \nil}}$.
	As detailed in \autoref{nonDeterministicIssues}, one would expect $\Delta' \sim_{cs} \Theta'$, as they send indistinguishable quantum states.
	We prove that this is the case, in particular, because the process $c!q$ is deterministic.
\end{example}
A distribution $\Delta$ is \emph{deterministic} if its evolution is fully probabilistic, i.e.\ if it evolves in a single distribution up to bisimilarity. For example, distributions without parallel operators and non-deterministic sums are trivially deterministic.
\begin{definition}[Deterministic processes]
		A set of distributions $\mathcal{A}$ is deterministic if $\Delta \in \mathcal{A}$ implies that for any $O[\blank], \Delta', \Delta''$, if $O[\Delta] \longsquiggly_\pi \Delta'$ and $O[\Delta] \longsquiggly_\pi \Delta''$ then $\Delta' \sim_{cs} \Delta''$ and $\Delta', \Delta'' \in \mathcal{A}$.
		A distribution $\Delta$ is called deterministic if it is contained in a deterministic set.
		A process $P$ is deterministic if $\singleton{\conf{\rho, P, \nil}}$ is deterministic for any state $\rho$.
	\end{definition}

As previously stated, mixed states represented by the same density operator are indistinguishable.
We recover an analogous result for 
lqCCS.
Roughly, quantum states can be combined into a point distribution when paired with identical deterministic processes.

\begin{restatable}{theorem}{propertyA}\label{thm:propertyA}
	If $P$ is deterministic, then for any $\rho, \sigma, p, R$, $\singleton{\conf{\rho, P, R}} \psum{p} \singleton{\conf{\sigma, P, R}} \sim_{cs} \singleton{\conf{\rho \psum{p} \sigma, P, R}}$
	where $\rho \psum{p} \sigma$ is defined as the density operator $p\rho + (1-p)\sigma$.
\end{restatable}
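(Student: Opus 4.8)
The plan is to exhibit a constrained saturated bisimulation relating the ``split'' distribution to the ``collapsed'' one. The guiding intuition is that superoperators and measurements act \emph{linearly} on density operators, so a single configuration carrying the mixed state $\rho\psum{p}\sigma$ behaves under any observer interaction exactly like the two‑point distribution over $\rho$ and $\sigma$; the hypothesis that $P$ is deterministic is what guarantees that the \emph{process} never resolves a non‑deterministic choice so as to tell the two quantum branches apart. Concretely, I would consider the relation $\mathcal{S}$ collecting all pairs $(\Delta,\Theta)$ of the form
\[
  \Delta = \sum_{k}\sum_{i}\distelem{q_k w_{k,i}}{\singleton{\conf{\rho_{k,i},P_k,R_k}}},\qquad
  \Theta = \sum_{k}\distelem{q_k}{\singleton{\conf{\xi_k,P_k,R_k}}},
\]
where, for each group $k$, $\sum_i w_{k,i}=1$, the state $\xi_k=\sum_i w_{k,i}\rho_{k,i}$ is the density operator of the branch mixture, and the group distribution $\Delta_k=\sum_i\distelem{w_{k,i}}{\singleton{\conf{\rho_{k,i},P_k,R_k}}}$ is deterministic (with $\overline{\bot}$ handled dually). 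The initial pair lies in $\mathcal{S}$ with one group, $q_1=1$, $P_1=P$, $R_1=R$, branches $(\rho,p),(\sigma,1-p)$, since $p\rho+(1-p)\sigma=\rho\psum{p}\sigma$. To absorb the fact that a deterministic process may fire \emph{bisimilar but syntactically different} continuations in different branches, I would actually show that $\rel={\sim_{cs}}\circ\mathcal{S}\circ{\sim_{cs}}$ is a constrained saturated bisimulation; since $\sim_{cs}$ is reflexive and transitive this gives $\mathcal{S}\subseteq\rel\subseteq{\sim_{cs}}$ and hence the theorem.

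The barb clause is immediate: barbs are purely classical, so they depend only on the common $P_k,R_k$ of each group, which carries total weight $q_k$ on both sides. For the transition clauses I fix a context $O[\blank]=[\blank]\expar S$, write $R_k'=R_k\expar S$, and respond to a move $O[\Delta]\longsquiggly_\pi\Delta'$ by letting each representative $\conf{\xi_k,P_k,R_k'}$ fire one \emph{canonical} shape $\mu_k$ of the $\pi$‑indexed move available to the group (move availability depends only on $P_k$, $R_k'$ and the classical values they carry, never on the quantum state, so the representative can always match, and deadlock to $\overline{\bot}$ happens uniformly). The quantum heart is that $\mu_k$ preserves the invariant by linearity: a superoperator step gives $\sop[\tilde q]{E}{\xi_k}=\sum_i w_{k,i}\sop[\tilde q]{E}{\rho_{k,i}}$; a measurement step has outcome $m$ with probability $\tr((M_m)_{\tilde q}(\xi_k))=\sum_i w_{k,i}\tr((M_m)_{\tilde q}(\rho_{k,i}))$, so the branch weights recombine exactly and the post‑measurement representative state $(M_m)_{\tilde q}(\xi_k)/p_m$ is, again by linearity, the average of the post‑measurement branch states; \textsc{Input}/\textsc{Output} leave the state untouched and merely relay a classical value or qubit name common to the group. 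Hence, had every branch fired the same $\mu_k$, the reduct $\Delta''$ would again be an $\mathcal{S}$‑pair with $\Theta'$, after regrouping measurement results by outcome $m$.

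The remaining difficulty is non‑determinism. Observer non‑determinism is neutralised by the index discipline: for fixed $\pi$ the enhanced semantics forces all configurations of the support to reduce at the same locus, and the observer grammar forbids non‑deterministic choices on reception, so the observer performs a single action across the group. What a branch may still do differently is resolve \emph{process} non‑determinism --- a different summand in a $\diamond$ move, or a different value in a synchronisation. Here determinism does the work: the adversary's reduct and my all‑canonical reduct, restricted to group $k$, are two $\pi$‑indexed moves of the \emph{same} deterministic distribution $O[\Delta_k]$, so they are $\sim_{cs}$‑related; summing over $k$ and invoking linearity of $\sim_{cs}$ (\autoref{thm:linearity}) yields $\Delta'\sim_{cs}\Delta''$, and since $\Delta''\mathrel{\mathcal{S}}\Theta'$ we obtain $\Delta'\mathrel{\rel}\Theta'$. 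The symmetric clause is analogous, exchanging the roles of the chosen and the canonical moves.

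The step I expect to be the main obstacle is showing that the determinism hypothesis is \emph{preserved} along the construction, i.e.\ that every new group $\Gamma_{k,m}$ arising as the outcome‑$m$ branch of a measurement of a deterministic $\Delta_k$ is itself deterministic. Determinism of the whole reduct $\Delta_k^B=\sum_m\distelem{p_m}{\Gamma_{k,m}}$ is granted by closure of deterministic sets, but extracting determinism of a single probabilistic branch requires a decomposability/cancellation property of $\sim_{cs}$ --- the same ingredient that underlies \autoref{thm:linearity}. I would discharge this by taking the deterministic set to be generated from the configurations $\singleton{\conf{\rho,P,\nil}}$ (deterministic for \emph{every} $\rho$ by hypothesis) under contexts, moves \emph{and} probabilistic decomposition, and then checking that this closure is still a deterministic set by combining decomposability of $\longsquiggly_\pi$ with linearity of $\sim_{cs}$; the ``for every $\rho$'' strength of process determinism is precisely what makes each branch state reachable as a genuine, individually deterministic configuration.
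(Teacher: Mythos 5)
Your proposal is correct and is essentially the paper's own proof in different packaging: the paper takes the minimal relation $\{(\singleton{\conf{\rho \psum{p} \sigma, P, R}},\, \singleton{\conf{\rho, P, R}} \psum{p} \singleton{\conf{\sigma, P, R}})\} \cup \{(\bot,\bot)\}$ and shows it is a bisimulation up to convex hull and up to bisimilarity (sound by compatibility of $Cv$, \autoref{thm:uptocssound}), which is exactly your grouped relation $\mathcal{S}$ sandwiched as ${\sim_{cs}}\circ\mathcal{S}\circ{\sim_{cs}}$, and it uses the same three ingredients you identify --- linearity of superoperators and measurements for the quantum steps, the index discipline plus the fact that move availability is purely classical for observer non-determinism, and, for process non-determinism, a canonical-move construction (by induction on the syntax of $P$ or $R$) combined with the lemma that a convex sum of deterministic distributions is deterministic, so any two $\pi$-moves of the split side are bisimilar. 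Even the obstacle you flag at the end (determinism of the individual continuations and measurement branches along the game) is not resolved differently in the paper: its relation quantifies over all $\rho,\sigma,p,R$ and silently assumes the continuation processes stay deterministic, rather than carrying out the closure argument you sketch.
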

\begin{proofsketch}
	For any deterministic $P$, we prove by structural induction that
	\[
	\rel = \{\,(\bot, \bot)\,\} \cup \left\{ \left(\singleton{\conf{\rho \psum{p} \sigma, P, R}} \ , \ \singleton{\conf{\rho, P, R}} \psum{p} \singleton{\conf{\sigma, P, R}}\right) \mid \rho, \sigma, p, R \ \right\}
	\]
	is a bisimulation up to convex hull~\cite{bonchi_power_2017} and up to bisimilarity (the soundness of which is given by~\autoref{thm:uptocssound} and according to~\citet{sangiorgi_enhancements_2011}).
	
	The result mainly follows from the fact that the classical components of the distributions are identical, while the quantum components are indistinguishable, as superoperators and measurements are convex, i.e. $\mathcal{F}(\rho) \psum{p} \mathcal{F}(\sigma) = \mathcal{F}(\rho \psum{p} \sigma)$ for any superoperator or measurement $\mathcal{F}$. 
	The hypothesis of determinism is required for the cases of non-deterministic sums and parallel compositions.
	Indeed, $\Delta = \singleton{\conf{\rho, P \parallel Q, \nil}} \psum{p} \singleton{\conf{\sigma, P \parallel Q, \nil}}$ may evolve differently with the left and right component, e.g.\ choosing $P$ in the left and $Q$ in the right.
	The hypothesis of the process being deterministic ensures that no information is leaked about $\rho$ and $\sigma$, 
	and ensures that the choice of $\Delta$ is irrelevant, allowing $\singleton{\conf{\rho \psum{p} \sigma, P, \nil}}$ to replicate its move.
	This is not possible in general, as shown in \autoref{ex:non-detproc}.
\end{proofsketch}

	Remarkably, transitivity of $\sim_{cs}$ suffices for proving the bisimilarity of deterministic processes paired with distributions of quantum states that are represented by the same density operator.
Note for example that $\sconf{\ketbra{0}, c!q}\psum{{1}/{2}}\sconf{\ketbra{1}, c!q}$ and $\sconf{\ketbra{+}, c!q}\psum{{1}/{2}}\sconf{\ketbra{-}, c!q}$ of~\autoref{ex:zopm} and~\ref{ex:broken-nondet}
 are both bisimilar to $\sconf{\frac{1}{2}{I}, c!q}$.
More in general, the equivalence classes represented by density operators are lifted to lqCCS distributions, yielding the equivalence $\rel \subseteq\ \sim_{cs}$ 
relating $\Delta$ and $\Theta$ deterministic whenever
 $\sum_{\rho} \Delta\big(\conf{\rho, P, R}\big) \rho = \sum_{\rho} \Theta\big(\conf{\rho, P, R}\big) \rho$ for all $P, R$.
Note also that, thanks to the linearity of $\sim_{cs}$, the property above is not limited to syntactically identical processes.

A consequence of~\autoref{thm:propertyA} is that $\sim_{cs}$ is not decomposable, similarly to other proposals addressing the limitations of probabilistic bisimilarity~\cite{feng_toward_2015-1,deng_bisimulations_2018}.
\begin{corollary}
	cs-bisimilarity is not a decomposable relation.
\end{corollary}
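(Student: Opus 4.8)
The plan is to turn \autoref{thm:propertyA} against itself: that theorem equates a genuine binary convex combination with a point distribution, and this is precisely the configuration that a left-decomposable relation cannot tolerate. Since $c!q$ is deterministic (\autoref{ex:zopm}), instantiating \autoref{thm:propertyA} with $P = c!q$, $R = \nil$, $p = 1/2$, $\rho = \dket{0}$, $\sigma = \dket{1}$, and using $\dket{0} \psum{1/2} \dket{1} = \frac{1}{2}I$, gives
\[
  \sconf{\dket{0}, c!q} \psum{1/2} \sconf{\dket{1}, c!q} \ \sim_{cs}\ \sconf{\frac{1}{2}I, c!q}.
\]
Writing $\Delta_1 = \sconf{\dket{0}, c!q}$, $\Delta_2 = \sconf{\dket{1}, c!q}$ and $\Theta = \sconf{\frac{1}{2}I, c!q}$, I then have $(\Delta_1 \psum{1/2} \Delta_2) \sim_{cs} \Theta$ with $\Theta$ a \emph{point} distribution. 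Recalling that a decomposable relation is by definition both left- and right-decomposable, it suffices to refute left-decomposability, and I would take this instance as the witness.

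Next I would observe that $\Theta$ admits no non-trivial convex splitting: if $\Theta = \Theta_1 \psum{1/2} \Theta_2$, then for every $\iconf$ distinct from $\conf{\frac{1}{2}I, c!q}$ we have $0 = \Theta(\iconf) = \frac{1}{2}\Theta_1(\iconf) + \frac{1}{2}\Theta_2(\iconf)$, and non-negativity forces $\Theta_1 = \Theta_2 = \Theta$. Hence left-decomposability would require in particular $\Delta_1 \sim_{cs} \Theta$, that is $\sconf{\dket{0}, c!q} \sim_{cs} \sconf{\frac{1}{2}I, c!q}$. The crux is then to disprove this last bisimilarity. The two configurations carry the same process barb on $c$, so they cannot be separated by barbs directly; instead I would play the bisimulation game through the observer context $O[\blank] = [\blank] \parallel c?x.\meas[M_{01}]{x}{y}.((\ite{y=0}{d!0}{\nil}) \parallel \nil_x)$, which receives the qubit and measures it in the computational basis, signalling barb $d$ exactly when the outcome is $0$. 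This observer is deterministic, so its reduction is forced up to the probabilistic branching of the measurement: on $\dket{0}$ the outcome is $0$ with probability $1$, reaching a distribution with $\downarrow_d^1$, whereas on $\frac{1}{2}I$ the outcome is $0$ only with probability $1/2$, reaching $\downarrow_d^{1/2}$. Since the reached distributions must again be cs-bisimilar and hence agree on all distribution barbs, the discrepancy $1 \neq 1/2$ contradicts $\sconf{\dket{0}, c!q} \sim_{cs} \sconf{\frac{1}{2}I, c!q}$. Thus no valid decomposition of $\Theta$ exists, $\sim_{cs}$ fails to be left-decomposable, and \emph{a fortiori} it is not decomposable.

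The main obstacle is conceptual rather than computational: one must recognise that \autoref{thm:propertyA} already delivers the pathological shape — a binary mixture on one side and a collapsed point mass on the other — and that decomposability is exactly the structural property this shape violates. The only genuine verification is the final step, where the non-injectivity of the density-operator encoding ($\dket{0}$ versus $\frac{1}{2}I$) must be exposed; this is unproblematic, since a context is free to perform a measurement and a computational-basis measurement separates the two states with distinct outcome probabilities. No subtlety involving the constrained, index-labelled semantics arises here, because the distinguishing observer is purely deterministic.
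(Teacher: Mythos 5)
Your proof is correct, and it rests on the same two pillars as the paper's own argument --- \autoref{thm:propertyA} instantiated with the deterministic process $c!q$, followed by a refutation of the forced component-wise bisimilarity via a measuring context --- but your choice of witness is genuinely different and slightly more economical. The paper takes the two proper mixtures $\sconf{\ketbra{0}, c!q}\psum{1/2}\sconf{\ketbra{1}, c!q}$ and $\sconf{\ketbra{+}, c!q}\psum{1/2}\sconf{\ketbra{-}, c!q}$ (bisimilar since both represent $\frac{1}{2}I$, via \autoref{thm:propertyA} and transitivity through the point distribution), and must then rule out \emph{every} candidate component $\Theta_1$, i.e.\ every convex combination of $\sconf{\ketbra{+}, c!q}$ and $\sconf{\ketbra{-}, c!q}$ --- a one-parameter family of splittings, though a computational-basis measurement handles them uniformly because every mixture $p\ketbra{+} + (1-p)\ketbra{-}$ yields outcome $0$ with probability $1/2$. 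You instead pair the $0/1$-mixture directly with the point distribution $\sconf{\frac{1}{2}I, c!q}$, which is a single direct instance of \autoref{thm:propertyA} with no transitivity step, and which trivialises the splitting analysis: a point mass admits only the trivial decomposition, so left-decomposability collapses to the single claim $\sconf{\dket{0}, c!q} \sim_{cs} \sconf{\frac{1}{2}I, c!q}$, which your observer refutes by the barb discrepancy $\downarrow_d^1$ versus $\downarrow_d^{1/2}$ after two forced, index-matching moves. Your version is also more careful than the paper's one-line ``they send observably different quantum values'': you exhibit a concrete observer that is legal in the constrained grammar and note why determinism makes the game moves forced. What each approach buys: the paper's witness simultaneously illustrates the physically salient equation of two distinct proper mixtures, whereas yours isolates the structural failure of decomposability with minimal case analysis.
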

\begin{proofsketch}
	Take $\Delta = \sconf{\ketbra{0}, c!q}\psum{{1}/{2}}\sconf{\ketbra{1}, c!q}$ and  
	$\Theta = \sconf{\ketbra{+}, c!q}\psum{{1}/{2}}\sconf{\ketbra{-}, c!q}$.
	Notice that $\Delta \sim_{cs} \Theta$ by~\autoref{thm:propertyA}.
	Then, for $\sim_{cs}$ to be decomposable, $\Theta$ should be equal to $\Theta_1\psum{{1}/{2}}\Theta_2$ for some $\Theta_1 \sim_{cs} \sconf{\ketbra{0}, c!q}$.
	But since $\Theta_1$ can only be either $\sconf{\ketbra{+}, c!q}$, $\sconf{\ketbra{-}, c!q}$ or a combination of them, $\Theta_1 \not\sim_{cs} \sconf{\ketbra{0}, c!q}$ as they send observably different quantum values.
\end{proofsketch}

\autoref{thm:propertyA} targets deterministic processes because they represent fully defined physical processes, e.g., where all the choices are performed by boolean conditions.
Indeed, there is no reason to expect the property to hold for processes expressing non-determinism \`a la CCS, as it 
does not encode any physical behaviour considered in quantum theory.
More in detail, an extension of this theorem for general processes can only hold with overly constrained non-deterministic sums, and it would contradict~\autoref{thm:nondetVSite} below, attesting to the preservation of the expressiveness of non-determinism in processes.
We show an example of a non-deterministic process, derived from the sixth row of~\autoref{tab:examples}, for which~\autoref{thm:propertyA} does not apply.
\begin{example}\label{ex:non-detproc}
	Consider the following pair of processes of the last line of~\autoref{tab:examples}
	\begin{align*}
		\unitary[\scriptscriptstyle\dket{+}]{Set}{q}. \meas[M_{01}]{q}{x}. (c!q + d!q) \text{ and }
		\unitary[\scriptscriptstyle\dket{0}]{Set}{q}. \meas[M_{\pm}]{q}{x}. (c!q + d!q)
	\end{align*}
	We will later show that the distributions $\Delta = \singleton{\conf{\dket{+}, \meas[M_{01}]{q}{x}. \ite{x=0}{P}{Q}, \nil}}$ and $\Theta = \singleton{\conf{\dket{0}, \meas[M_{\pm}]{q}{x}. \ite{x=0}{P}{Q}, \nil}}$ to which the two processes reduce are not bisimilar.
\end{example}

As discussed previously, this is expected as we want to restrict the non-determinism of observers only.
Non-deterministic sum is typically used in processes to model unspecified behaviour, to be instantiated in future refinements.
Thus, we do not want to constrain non-determinism to the point that $+$ cannot replicate the behaviour of its refinements like boolean conditions. 

We say that $P'$ refines $P$, if $P'$ can be obtained from $P$ by substituting some occurrences of $Q + Q'$ with either $Q$, $Q'$, or $\ite{e}{Q}{Q'}$ for an arbitrary $e$.

\begin{figure}
  \small
	\begin{gather*}
	\infer[\square \in \{\parallel, +\}]{P'\ \square\ Q' \preceq P\ \square\ Q}{P' \preceq P & Q' \preceq Q} \qquad
	\infer[\mu \in \{ \tau, c?x, \sop{E}{\tilde{e}}, \meas{\tilde{e}}{x} \}]{\mu.P' \preceq \mu.P}{P' \preceq P} \qquad
	\infer{P' \preceq P + Q}{P' \preceq P} \qquad
	\infer{Q' \preceq P + Q}{Q' \preceq Q} \\
	\infer{P' \setminus c \preceq P \setminus c}{P' \preceq P} \qquad
	\infer{\ite{e}{P'}{Q'} \preceq P + Q}{P' \preceq P & Q' \preceq Q} \qquad
	\infer{\ite{e}{P'}{Q'} \preceq \ite{e}{P}{Q}}{P' \preceq P & Q' \preceq Q}
	\end{gather*}
	\caption{Refinement relation over lqCCS processes.}
	\label{fig:refinement}
\end{figure}

\begin{definition}
	The refinement relation $P' \preceq P$ is the smallest reflexive relation satisfying the rules in~\autoref{fig:refinement}.
	We say that $P'$ refines $P$, and that a configuration $\conf{\rho, P'}$ refines $\conf{\rho, P}$, if $P' \preceq P$.
	We let $\bot$ refine all the configurations, and define distribution refinement by linearity.
\end{definition}

A process is expected to be capable of matching any move of its refinements, thus, when considering $P = \meas[M_{01}]{q}{x}. (Q + Q')$, the moves of all $P' \preceq P$ should be available for $P$, included the ones of $\meas[M_{01}]{q}{x}. \ite{x = 0}{Q}{Q'}$ where the choice between $Q$ and $Q'$ depends on the outcome of the measurement.

We prove in the following that our constraints on non-determinism are not too restrictive, namely, that a distribution can simulate all its refinements.
\begin{restatable}{theorem}{nondetVSite}\label{thm:nondetVSite}
	Let $\Delta' \preceq \Delta$. If $\Delta' \longsquiggly_\pi \Theta'$ then $\Delta \longsquiggly_\pi \Theta$ for some $\Theta$ such that $\Theta' \preceq \Theta$.
\end{restatable}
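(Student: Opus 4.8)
The plan is to prove the statement first for point distributions, i.e.\ at the level of single configurations, and then recover the general case by exploiting the linearity of both refinement and the lifted relation $\longsquiggly_\pi$. Since configuration refinement $\conf{\rho, P', R} \preceq \conf{\rho, P, R}$ acts only on the process component ($P' \preceq P$, with $\rho$ and the observer $R$ fixed), the heart of the argument is a \emph{simulation lemma}: if $P' \preceq P$ and $\conf{\rho, P', R} \longsquiggly_\pi \Theta'$, then $\conf{\rho, P, R} \longsquiggly_\pi \Theta$ for some $\Theta$ with $\Theta' \preceq \Theta$. Crucially, the index $\pi$ is preserved exactly, which is what will make the lifting step go through. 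I would prove this lemma by induction on the derivation of the move $\conf{\rho, P', R} \longsquiggly_\pi \Theta'$, with a case analysis on the last applied rule of the enhanced semantics of~\autoref{ext_semantics}, using a nested analysis of the last rule deriving $P' \preceq P$ (from~\autoref{fig:refinement}).

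The observer-driven rules are largely routine: in \textsc{OQOp} and \textsc{OQMeas} the process is carried along untouched, so $P' \preceq P$ is preserved verbatim after the superoperator or measurement acts on $\rho$; in \textsc{ParL} and \textsc{ParR} the sub-move concerns a smaller observer and the inductive hypothesis applies directly, the index gaining the same $\ell$ or $r$ prefix on both sides; and \textsc{Congr} is handled by checking that $\preceq$ is compatible with $\equiv$ and $\equiv_o$. The genuinely interesting content is concentrated in the process move (index $\diamond$, via \textsc{Process}) and in the synchronisations \textsc{Input} and \textsc{Output}. For the $\diamond$ case I would isolate an auxiliary lemma on the standard semantics of~\autoref{semantics}: if $P' \preceq P$ and $\conf{\rho, P'} \to \Delta'$ then $\conf{\rho, P} \to \Delta$ with $\Delta' \preceq \Delta$. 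The paradigmatic case is $\ite{e}{P_1'}{P_2'} \preceq Q_1 + Q_2$ with $P_i' \preceq Q_i$: since the configuration is closed, $e$ evaluates to a concrete value, say $\true$, so by \textsc{SCIteT} the refined process is congruent to $P_1'$; applying the inductive hypothesis gives a matching move of $Q_1$, and the summand rules of~\autoref{semantics} --- after rearranging $Q_1 + Q_2$ with \textsc{SCSumComm}/\textsc{SCSumAssoc} so that the active summand of $Q_1$ is exposed --- let $Q_1 + Q_2$ perform exactly that move while discarding the other branch. This is precisely the sense in which non-deterministic sum simulates a boolean guard.

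For \textsc{Input} and \textsc{Output} the refined process must already be in the shape required by the rule, e.g.\ $P' \equiv ((c!v + P_0') \parallel Q') \setminus D$ for an input. Inverting the refinement rules on parallel composition, restriction and sum (and resolving any top-level conditional via \textsc{SCIteT} or \textsc{SCIteF} of~\autoref{struct_cong}), I would show that $P$ can be put, up to $\equiv$, in the matching form $((c!v + P_0) \parallel Q) \setminus D$ with $Q' \preceq Q$: the key observation is that refinement never creates a sending or receiving capability that was absent in $P$, it only prunes branches or makes them conditional, so any top-level offer of $c!v$ (respectively of a reception $c?x$) in $P'$ is also offered by $P$. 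The synchronisation then discards the competing summand on both sides, and the residuals satisfy $\conf{\rho, Q' \setminus D, \blank} \preceq \conf{\rho, Q \setminus D, \blank}$ by the restriction rule for $\preceq$. Finally, to lift to distributions I decompose $\Delta' \preceq \Delta$ into aligned point distributions $\overline{\mathcal{C}_i'} \preceq \overline{\mathcal{C}_i}$; the lifted move $\Delta' \longsquiggly_\pi \Theta'$ decomposes so that each $\mathcal{C}_i'$ performs $\mathcal{C}_i' \longsquiggly_\pi \Xi_i'$ with the \emph{same} index $\pi$, the simulation lemma yields matching moves $\mathcal{C}_i \longsquiggly_\pi \Xi_i$ with $\Xi_i' \preceq \Xi_i$, and since all these share the index $\pi$ they reassemble into a valid lifted move $\Delta \longsquiggly_\pi \Theta$ with $\Theta' \preceq \Theta$. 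I expect the main obstacle to be the bookkeeping in the \textsc{Input}/\textsc{Output} cases: matching the rigid syntactic shape demanded by those rules against an arbitrary $P$ related by $\preceq$ requires simultaneously inverting the refinement rules through parallel, restriction and sum and massaging $P$ with structural congruence, while ensuring the discarded branches line up on the two sides.
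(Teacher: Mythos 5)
Your proposal is correct and follows essentially the same route as the paper: the paper likewise proves a congruence-compatibility lemma for $\preceq$ (handling \textsc{Congr} and, through it, the resolution of boolean guards against sums), a simulation lemma for the standard semantics $\to$ (your $\diamond$/\textsc{Process} auxiliary lemma), and a configuration-level simulation lemma for $\longsquiggly_\pi$ by rule induction, concluding by decomposability of the lifted $\longsquiggly_\pi$. The only cosmetic difference is that in the conditional case the paper avoids your explicit \textsc{SCSumComm}/\textsc{SCSumAssoc} rearrangement by using the refinement rule $P' \preceq P + Q$ inside the \textsc{Congr} case, which achieves the same effect.
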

\begin{proofsketch}
We prove by rule induction that 
	whenever $P' \preceq P$ and $\conf{\rho, P', R} \longsquiggly_\pi \Delta'$ then $\conf{\rho, P, R} \longsquiggly_\pi \Delta$ for some $\Delta$ such that $\Delta' \preceq \Delta$.
	The proof for \textsc{Congr} relies on the fact that refinement and structural congruence works well together.
	In detail, we show that given $P' \preceq P$ with $P' \equiv Q'$ we can find some $Q$ such that $Q' \preceq Q$ and $P \equiv Q$.
	In the other cases it suffices to use the induction hypothesis. 
	The theorem then holds by decomposability of $\longsquiggly_{\pi}$.
\end{proofsketch}
As a result of this property, the distributions $\Delta$ and $\Theta$ of~\autoref{ex:non-detproc} are distinguishable.
\begin{example}\label{ex:non-detproc2}
	Consider $\Delta$ and $\Theta$ of~\autoref{ex:non-detproc}, and notice
    that $\Delta' \preceq \Delta$, where $\Delta'$ sends on $c$ if and only if the qubit is in $\kz$. 
    Formally $\Delta' = \singleton{\conf{\dket{+}, \meas[M_{01}]{q}{x}. \ite{x=0}{c!q}{d!q}, \nil}}$ (see~\autoref{fig:exrefinement}).
	By performing this choice, $\Delta'$ is implicitly communicating the outcome of the measurement to the observer (through a side-channel, we could say).
	Consider the context
	\begin{gather*}
    O[\blank] = [\blank] \parallel (c?x.\meas[M_{01}]{x}{y}.(\ite{y = 0}{z!0}{o!0} \parallel \nil_{x})) + (d?x.\tau.\nil_x)
	\end{gather*}
	and note that, after two steps, $O[\Delta']$ reduces to a distribution expressing barb $z$ but not $o$, which is impossible for $O[\Theta]$.
	As a result of~\autoref{thm:nondetVSite}, $\Delta$ can replicate this move of $\Delta'$, hence $\Delta \not\sim_{cs} \Theta$. 
\end{example}

Our constrained bisimilarity is the first one to verify both~\autoref{thm:propertyA} and~\ref{thm:nondetVSite}, 
while all the previously proposed ones either fail in equating indistinguishable quantum states, or overly constrain non-determinism
(see~\autoref{tab:examples} for an in-depth comparison).

\begin{figure}[!t]
  \resizebox{0.8\textwidth}{!}
	  {\begin{tikzpicture}
    \tikzstyle{longsquiggly} = [-{Stealth[round]},line width=1pt,line join=round,decorate,decoration={zigzag,segment length=4,amplitude=.9,pre=lineto,pre length=5pt,post=lineto,post length=5pt}]
    \tikzstyle{box} = [draw, rectangle, rounded corners,inner sep=6pt]
    \node (d) [box]{$\sconf{\ketbra{+},\meas[M_{01}]{q}{x}.\ite{x = 0}{c!q}{d!q}, \nil}$};
    \node (rd) [box,right=of d]{$\sconf{\ketbra{+},\meas[M_{01}]{q}{x}.(c!q + d!q), \nil}$};
    \node (t) [box,below=20pt of d]{$\sconf{\ketbra{0}, c!q, \nil} \psum{\frac{1}{2}} \sconf{\ketbra{1}, d!q, \nil}$};
    \node (rt) [box,below=20pt of rd]{$\sconf{\ketbra{0}, c!q, \nil} \psum{\frac{1}{2}} \sconf{\ketbra{1}, d!q, \nil}$};
    \node at ($(d.east)!0.5!(rd.west)$) {$\preceq$};
    \node at ($(t.east)!0.5!(rt.west)$) {$=$};
    \draw[longsquiggly] (d) -- (t) node [above right=17pt and 2pt, inner sep=0pt] {$\scriptstyle \diamond$};
    \draw[dash pattern=on 3.45pt off 1.06pt, longsquiggly] (rd) -- (rt) node [above right=17pt and 2pt, inner sep=0pt] {$\scriptstyle \diamond$};
  \end{tikzpicture}}
  \caption{
    The existence of the dashed arrow on the right is guaranteed by the solid one on the left by~\autoref{thm:nondetVSite}.
    }
  \label{fig:exrefinement}
\end{figure}

\subsection{Properties of Constrained Bisimilarity}
\label{propertiesOfBisimilarity}

We now investigate our cs-bisimilarity.
We first recover two defining properties of~\cite{deng_open_2012}, 
namely that $\sim_{cs}$ is closed for superoperator application on qubits not appearing in the processes,
and that the state of such qubits is required to match in bisimilar distributions.
Then we show that discarded qubits can be ``traced out'' from the quantum state without affecting bisimilarity.
Finally, we discuss up-to techniques for proving constrained bisimilarity.

We start by recovering trace-identity and closure over superoperator application.
\begin{restatable}{proposition}{thmchinese}\label{thm:qccs}
	Let $\singleton{\conf{\rho, P, \nil}} \sim_{cs} \singleton{\conf{\sigma, Q, \nil}}$,
	then
	\begin{enumerate}
		\item $\singleton{\conf{\sop[\tilde{q}]{E}{\rho}, P, \nil}} \sim_{cs} \singleton{\conf{\sop[\tilde{q}]{E}{\sigma}, Q, \nil}}$, for any $\mathcal{E}_{\tilde{q}}$ and $\tilde{q}$ not in $\tilde{\Sigma}_P$;
		\item $\tr_{\Sigma_P}(\rho) = \tr_{\Sigma_P}(\sigma)$.
	\end{enumerate}
\end{restatable}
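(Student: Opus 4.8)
The two items are logically independent, so the plan is to prove them separately, starting with the trace identity (2), which is more elementary, and then the closure property (1). Throughout I use the standing convention that $\sim_{cs}$ relates configurations of the same type, so that $\Sigma_\rho = \Sigma_\sigma$ and $\Sigma_P = \Sigma_Q$; write $\tilde r$ for a tuple enumerating the \emph{environment} qubits $\Sigma_\rho \setminus \Sigma_P$, i.e.\ those owned by neither the process nor the observer.

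For (2) the plan is to let a context probe $\tilde r$ directly and read its reduced state off the barb probabilities. Fix an arbitrary measurement $M = \{M_m\}_m$ on $\tilde r$ and fresh channels $c_m$ not occurring in $P$ or $Q$, and take the context $O[\blank] = [\blank] \parallel R$ with
\[
R = \meas[M]{\tilde r}{y}.\big(\big(\ite{y = 0}{c_0!0}{(\ite{y = 1}{c_1!0}{\cdots})}\big) \parallel \nil_{\tilde r}\big),
\]
a valid observer owning exactly $\tilde r$. Firing the \textsc{OQMeas} step, $O[\singleton{\conf{\rho, P, \nil}}]$ reduces to a distribution that exhibits the fresh barb $c_m$ with probability $p_m = \tr\big((M_m)_{\tilde r}(\rho)\big) = \tr\big(M_m^\dagger M_m\, \tr_{\Sigma_P}(\rho)\big)$, and since the $c_m$ are fresh this is the only source of that barb. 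By cs-bisimilarity the matching move of $O[\singleton{\conf{\sigma, Q, \nil}}]$ must exhibit $c_m$ with the same probability, giving $\tr\big(M_m^\dagger M_m\, \tr_{\Sigma_P}(\rho)\big) = \tr\big(M_m^\dagger M_m\, \tr_{\Sigma_P}(\sigma)\big)$. Instantiating $M$ by the projective measurements $\{\ketbra{\psi}, I - \ketbra{\psi}\}$ yields $\mel{\psi}{\tr_{\Sigma_P}(\rho)}{\psi} = \mel{\psi}{\tr_{\Sigma_P}(\sigma)}{\psi}$ for every $\ket{\psi}$, whence $\tr_{\Sigma_P}(\rho) = \tr_{\Sigma_P}(\sigma)$, since a density operator is determined by its quadratic form.

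For (1) the plan is to prove the slightly more general closure statement — that pointwise application of $\mathcal{E}_{\tilde q}$ preserves $\sim_{cs}$ whenever $\tilde q \cap \Sigma_P = \emptyset$ — by exhibiting the candidate
\[
\rel \;=\; \sim_{cs}\ \cup\ \big\{\,(\mathcal{E}_{\tilde q}[\Delta],\, \mathcal{E}_{\tilde q}[\Theta]) \;\mid\; \Delta \sim_{cs} \Theta,\ \tilde q \text{ owned by no process or observer in } \Delta, \Theta\,\big\},
\]
where $\mathcal{E}_{\tilde q}[\,\cdot\,]$ applies $\mathcal{E}_{\tilde q}$ to the state of each configuration in the support. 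The barb clause is immediate, since $\mathcal{E}_{\tilde q}$ alters neither the syntactic components nor the probabilistic weights. For the transfer clause, given a context $O[\blank] = [\blank] \parallel R$, I would split on whether $R$ owns $\tilde q$. If it does not, no rule can ever act on $\tilde q$ — only owned qubits are manipulated — so $\mathcal{E}_{\tilde q}$ commutes with every transition and $O[\mathcal{E}_{\tilde q}[\Delta]] \longsquiggly_\pi \mathcal{E}_{\tilde q}[\Delta']$ exactly when $O[\Delta] \longsquiggly_\pi \Delta'$; the move of $\Theta$ supplied by $\Delta \sim_{cs} \Theta$ then lifts, with successors remaining in the second component of $\rel$.

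The crux is the opposite case, where $R$ owns $\tilde q$ and the move acts on them: the worst case is a \emph{measurement} of $\tilde q$, because measuring $\mathcal{E}_{\tilde q}(\rho)$ produces different outcome probabilities and post-states than measuring $\rho$, so no single superoperator relates the two resulting distributions and the naive relation would break. The device that resolves this is a pulled-back context: I observe that $O[\mathcal{E}_{\tilde q}[\Delta]]$ is precisely the \textsc{OQOp}-successor of $O''[\Delta]$ for $O''[\blank] = [\blank] \parallel \sop{E}{\tilde q}.R$, which is well typed exactly because $\tilde q \cap \Sigma_P = \emptyset$ lets the context own $\tilde q$. As $O''$ is a genuine context and $\Delta \sim_{cs} \Theta$, the composite move ``fire $\mathcal{E}_{\tilde q}$, then perform the $\pi$-move on $\tilde q$'' of $O''[\Delta]$ is matched by $O''[\Theta]$ with successors in $\sim_{cs}$; crucially, because the pulled-back context applies $\mathcal{E}_{\tilde q}$ on \emph{both} sides before the measurement, the outcome statistics agree by construction and the earlier mismatch vanishes. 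These matched successors are exactly the $\pi$-successors of $O[\mathcal{E}_{\tilde q}[\Delta]]$ and $O[\mathcal{E}_{\tilde q}[\Theta]]$, and they lie in $\sim_{cs} \subseteq \rel$. Instantiating $\rel$ at the point distributions $\singleton{\conf{\rho, P, \nil}} \sim_{cs} \singleton{\conf{\sigma, Q, \nil}}$ with $\tilde q$ free then gives (1). The single genuine obstacle is thus this interaction between environment measurements and the superoperator, and recognising the pulled-back context as the gadget that forces the two measurement statistics to coincide.
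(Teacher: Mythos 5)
Your proposal is correct, and it splits cleanly: item (2) matches the paper, item (1) takes a genuinely different route. For (2) you use essentially the paper's argument — the same observer gadget that measures the environment qubits and signals the outcome on fresh channels through nested conditionals, with the reduced state read off the barb probabilities via the commutation of partial trace with operations on disjoint qubits (the paper's \autoref{ptrace drops sop}); you merely run it forwards (all measurement statistics agree, hence the quadratic forms $\mel{\psi}{\tr_{\Sigma_P}(\cdot)}{\psi}$ agree, hence the operators agree) where the paper argues by contradiction (distinct reduced operators admit a distinguishing measurement). For (1) the paper gives a one-shot argument: it applies the single context $[\blank] \parallel \sop{E}{\tilde{q}}{.}\,(a!0 \parallel c!\tilde{q})$, uses the fresh barb $a$ to force the matching move to be the corresponding \textsc{OQOp} step, and then strips the residual observer by an informal context-composition contradiction — the continuation $c!\tilde{q}$ existing purely to discharge linearity. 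You instead prove a stronger closure statement (pointwise $\mathcal{E}_{\tilde{q}}$-application preserves $\sim_{cs}$ for arbitrary distributions whose processes and observers do not own $\tilde{q}$) by exhibiting a candidate bisimulation, splitting on whether the probing observer owns $\tilde{q}$: commutation of $\mathcal{E}_{\tilde{q}}$ with every transition when it does not (correctly using trace-preservation so that measurement statistics on disjoint qubits are unaffected), and the pulled-back context $[\blank] \parallel \sop{E}{\tilde{q}}{.}R$ when it does, with the index $\pi$ rather than a fresh barb forcing the matching move. The case split is in fact mandatory, since $\sop{E}{\tilde{q}}{.}R$ is well typed only when $R$ owns $\tilde{q}$, so your two cases are exactly the typable and untypable instances of the pull-back. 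Your version is longer but buys more: it isolates the one place a naive commutation argument fails (environment measurements of $\tilde{q}$) and yields closure for all distributions, not just the stated point configurations. Note that both proofs elide the same bookkeeping step — you extract bare $\longsquiggly_\pi$-moves of $O[\mathcal{E}_{\tilde{q}}[\Delta]]$ from a $\sim_{cs}$-game that only speaks of context-wrapped moves (leaving a trailing $\expar \nil$ to absorb into the relation), while the paper asserts that a context distinguishing the stripped pair would distinguish the decorated one — so you are at the paper's own level of rigor there.
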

\begin{proofsketch}
	For the first point, take any superoperator $\mathcal{E}_{\tilde{q}}$, then there is a context that performs such transformation, so a context-closed relation is necessarily superoperator-closed. 
	For the second one, we proceed by contradiction: if the reduced density operators of two configurations are different, then we can build a context which measures such qubits obtaining a different distribution of outcomes and thus distinguishing the configurations via fresh barbs. 
\end{proofsketch}
Note that this is a useful result for disproving bisimilarity, as e.g., distributions with different partial trace are immediately deemed distinguishable.

A result that helps instead in proving bisimilarity is that the state of discarded qubits can be ignored.
In order to prove this in general, we extend the partial trace as follows.
\begin{definition}
	The partial trace $\ptrace{q}{\conf{\rho, P, R}}$ over $\tilde{q}$ of a configuration 
	is defined as $\conf{\ptrace{q}{\rho}, P', R}$ if $P \equiv P' \parallel \nil_{\tilde{q}}$. 
	We let $\ptrace{q}{\bot} = \bot$, and define the partial trace of distributions by linearity.
\end{definition}
Intuitively, we remove the discarded qubits \emph{together with} the discard processes.
Note that such an operation is defined only on distributions of configurations that discard the same qubits.

\begin{restatable}{proposition}{bisimilaritydiscarded}\label{thm:discarded}
	Let $\Delta, \Theta$ be distributions such that $\ptrace{q}{\Delta}$ and $\ptrace{q}{\Theta}$ are well-defined for a given $\tilde{q}$. If $\ptrace{q}{\Delta} \sim_{cs} \ptrace{q}{\Theta}$ then $\Delta \sim_{cs} \Theta$.
\end{restatable}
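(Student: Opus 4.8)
The plan is to exhibit a constrained saturated bisimulation relating $\Delta$ and $\Theta$ directly. Define
\[
\rel = \{\, (\Delta, \Theta) \mid \ptrace{q}{\Delta},\ \ptrace{q}{\Theta}\ \text{well-defined and}\ \ptrace{q}{\Delta} \sim_{cs} \ptrace{q}{\Theta}\ \text{for some }\tilde{q} \,\}.
\]
Since the hypothesis gives $(\Delta, \Theta) \in \rel$, it suffices to prove that $\rel$ is a cs-bisimulation, whence $\rel \subseteq\ \sim_{cs}$. Notably, no up-to technique is needed here: after a transition I will land on a pair whose traces are again $\sim_{cs}$-related, so $\rel$ is already self-contained.

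The backbone is a \emph{correspondence lemma}: for any distribution $\Delta$ all of whose configurations discard $\tilde{q}$, (a) $\Delta$ and $\ptrace{q}{\Delta}$ satisfy the same barbs; (b) $\Delta \longsquiggly_\pi \Delta'$ implies $\Delta'$ still discards $\tilde{q}$ and $\ptrace{q}{\Delta} \longsquiggly_\pi \ptrace{q}{\Delta'}$ with the same index $\pi$; and (c) conversely, every $\ptrace{q}{\Delta} \longsquiggly_\pi \Xi$ lifts to some $\Delta \longsquiggly_\pi \Delta'$ with $\Delta'$ discarding $\tilde{q}$ and $\ptrace{q}{\Delta'} = \Xi$. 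I would prove this by rule induction on $\longsquiggly_\pi$ for single configurations, then lift to distributions using decomposability of $\longsquiggly_\pi$. The discard process $\nil_{\tilde{q}}$ is inert---it neither sends nor receives, so it fires no rule and never synchronises---hence it persists across every transition, keeping the shape $P \equiv P' \parallel \nil_{\tilde{q}}$; and since it exhibits no barb, $P$ and $P'$ satisfy exactly the same $\downarrow_c$, which yields (a).

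The quantum core of the lemma is that state-changing moves commute with tracing out $\tilde{q}$. The linear type system forbids any observer from owning or receiving the discarded qubits (they are held forever by $\nil_{\tilde{q}}$ and never sent), so every superoperator $\mathcal{E}_{\tilde{r}}$ and measurement $(M_m)_{\tilde{r}}$ fired during a computation acts on qubits $\tilde{r}$ disjoint from $\tilde{q}$; for such disjoint moves
\[
\ptrace{q}{\mathcal{E}_{\tilde{r}}(\rho)} = \mathcal{E}_{\tilde{r}}(\ptrace{q}{\rho}) \qquad\text{and}\qquad \ptrace{q}{(M_m)_{\tilde{r}}(\rho)} = (M_m)_{\tilde{r}}(\ptrace{q}{\rho}),
\]
while the branch weights $\tr(\rho_m)$ are unchanged because tracing out a subsystem preserves the total trace, so both the probabilities and the post-states of \textsc{OQMeas} (and of \textsc{OQOp}) transport across the partial trace. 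These are the standard facts that partial trace commutes with a channel acting on the complementary subsystem. I also record that partial trace commutes with context application, $\ptrace{q}{O[\Delta]} = O[\ptrace{q}{\Delta}]$, since a context only enlarges the observer component and leaves the process---hence $\nil_{\tilde{q}}$---untouched.

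With these pieces the bisimulation clauses follow. For barbs, clause (a) together with $\ptrace{q}{\Delta} \sim_{cs} \ptrace{q}{\Theta}$ gives $\Delta$ and $\Theta$ the same $\downarrow_b^p$. For the transition clause, from $O[\Delta] \longsquiggly_\pi \Delta'$ and $\ptrace{q}{O[\Delta]} = O[\ptrace{q}{\Delta}]$, clause (b) yields $O[\ptrace{q}{\Delta}] \longsquiggly_\pi \ptrace{q}{\Delta'}$; bisimilarity of the traced distributions supplies $O[\ptrace{q}{\Theta}] \longsquiggly_\pi \Xi$ with $\ptrace{q}{\Delta'} \sim_{cs} \Xi$; and clause (c) lifts this to $O[\Theta] \longsquiggly_\pi \Theta'$ with $\ptrace{q}{\Theta'} = \Xi$, so $\ptrace{q}{\Delta'} \sim_{cs} \ptrace{q}{\Theta'}$ and $(\Delta', \Theta') \in \rel$; the symmetric clause is identical. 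I expect the main obstacle to be the correspondence lemma itself---specifically aligning the indices $\pi$ and discharging the \textsc{Congr} case so that the discard is always exposed in the syntactic form the rules require, together with the lifting to distributions, where decomposability of $\longsquiggly_\pi$ is what keeps the two sides' probabilistic branchings in step.
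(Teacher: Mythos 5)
Your proposal is correct and takes essentially the same route as the paper: the paper defines the very relation $\rel = \{(\Delta, \Theta) \mid \ptrace{q}{\Delta} \sim_{cs} \ptrace{q}{\Theta}\}$ and shows it is a cs-bisimulation directly (no up-to technique), resting on the two-way correspondence lemma that $O[\Delta] \longsquiggly_\pi \Delta'$ if and only if $O[\ptrace{q}{\Delta}] \longsquiggly_\pi \ptrace{q}{\Delta'}$, proved by rule induction on single configurations and lifted by linearity. Your quantum core --- that partial trace over $\tilde{q}$ commutes with superoperators and measurements acting on disjoint qubits, preserving branch probabilities, and that $\ptrace{q}{O[\Delta]} = O[\ptrace{q}{\Delta}]$ since $\nil_{\tilde{q}}$ is inert and barb-free --- matches the paper's use of its auxiliary lemma on partial traces verbatim.
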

\begin{proofsketch}
	We show, by induction on $\longsquiggly_\pi$, that the semantics of $\mathcal{C}$ and of $\ptrace{q}{\mathcal{C}}$ are equivalent, and since the partial trace does not affect barbs, the desired property follows.
\end{proofsketch}
Note that, even if this property is given for the process $\nil_{\tilde{q}}$, we can apply it to any ``discard-like'' process thanks to the linearity of $\sim_{cs}$, i.e.\ to any process bisimilar to $\nil_{\tilde q}$.

The capacity to ignore discarded qubit is useful in a lot of proofs, as for the example below.
\begin{example}
	The two distributions below are bisimilar
	\begin{gather*}
		\Delta = \singleton{\conf{\dket{\Phi^+}, \meas[M_{01}]{q_1}{x}.c!q_1 \parallel \nil_{q_2}, \nil}} \text{ and }
		\Theta = \singleton{\conf{\dket{\Phi^+}, \meas[M_\pm]{q_1}{x}.c!q_1 \parallel \nil_{q_2}, \nil}}
	\end{gather*}
	Taken any $O[\blank]$, they evolve in $O[\Delta']$ with
	$\Delta' = \Big(\singleton{\conf{\dket{00}, c!q_1, \nil}} \psum{{1}/{2}} \singleton{\conf{\dket{11}, c!q_1, \nil}}\Big)  \parallel \nil_{q_2}$ and $O[\Theta']$ with
	$\Theta' = \Big(\singleton{\conf{\dket{++}, c!q_1, \nil}}  \psum{{1}/{2}} \singleton{\conf{\dket{--}, c!q_1, \nil}}\Big) \parallel \nil_{q_2}$.
	
	Finally, note that $\tr_{q_2}(\Delta') \sim_{cs} \tr_{q_2}(\Theta')$ holds because $\tr_{q_2}(\Delta') = \singleton{\conf{\dket{0}, c!q_1, \nil}} \psum{{1}/{2}} \singleton{\conf{\dket{1}, c!q_1, \nil}}$ and $\tr_{q_2}(\Theta') = \singleton{\conf{\dket{+}, c!q_1, \nil}}  \psum{{1}/{2}} \singleton{\conf{\dket{-}, c!q_1, \nil}}$, and the two are equated by~\autoref{thm:propertyA}.
\end{example}

Finally, we report a general proof technique.
While proving bisimilarity of two distributions usually requires giving a bisimulation relating the two, 
up-to techniques allows using smaller relations in place of proper bisimulations.
Given a relation $\rel \subseteq \dist{S} \times \dist{S}$, its \emph{convex hull} $Cv(\rel)$ is
the least relation such that $(\sum_{i \in I} \distelem{p_i}{\Delta_i})\;Cv(\rel)\;(\sum_{i \in I} \distelem{p_i}{\Theta_i})$ whenever $\Delta_i\;\rel\;\Theta_i$ for all $i \in I$.
Bisimulations up to $Cv$~\cite{bonchi_power_2017} are then defined as follows.
	\begin{definition}[Bisimulation up to convex hull] \label{def:uptocv}
		A relation $\rel \subseteq \dist{\confbot} \times \dist{\confbot}$ is a
		\emph{cs-bisimulation up to $Cv$} if $\Delta\,\rel\,\Theta$ implies that
		for any context $O[\blank]$ it holds
		\begin{itemize}
			\item $\Delta\!\downarrow_{b}^p$ if and only if $\Theta\!\downarrow_{b}^p$;
			\item whenever $O[\Delta] \longsquiggly_\pi \Delta'$, there exists
			$\Theta'$ such that $O[\Theta] \longsquiggly_\pi \Theta'$ and $\Delta'\;
			Cv(\rel)\;\Theta'$;
			\item whenever $O[\Theta] \longsquiggly_\pi \Theta'$, there exists $\Delta'$
			such that $O[\Delta] \longsquiggly_\pi \Theta'$ and $\Delta'\;Cv(\rel)\;\Theta'$.
		\end{itemize}
	\end{definition}
Giving a bisimulation up to convex hull is a sound proof technique for $\sim_{cs}$.
\begin{restatable}{proposition}{uptocv}\label{thm:uptocssound}
	If $\Delta\;\rel\;\Theta$ with $\rel$ a bisimulation up to convex hull, then $\Delta \sim_{cs} \Theta$.
\end{restatable}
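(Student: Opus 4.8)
The plan is to prove soundness by the standard route for up-to techniques: I will show that the convex-hull closure $Cv(\rel)$ is itself a genuine cs-bisimulation. Since every relation satisfies $\rel \subseteq Cv(\rel)$ (instantiate the defining combination with a singleton index set, $|I| = 1$, $p_1 = 1$), it then follows that $\Delta\;\rel\;\Theta$ implies $\Delta\; Cv(\rel)\;\Theta$, and hence $\Delta \sim_{cs} \Theta$, because $\sim_{cs}$ is the largest cs-bisimulation. The key structural fact that makes this work is that $Cv$ is idempotent: a convex combination of pairs each lying in $Cv(\rel)$ flattens into a single convex combination of $\rel$-related pairs by multiplying outer and inner coefficients, so $Cv(Cv(\rel)) = Cv(\rel)$. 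This is exactly what lets the ``up to $Cv$'' obligation of $\rel$ feed back into a ``progress into $Cv(\rel)$'' obligation for the closed relation.

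So I would fix $\Phi\; Cv(\rel)\;\Psi$, say $\Phi = \sum_{i \in I} p_i \Delta_i$ and $\Psi = \sum_{i \in I} p_i \Theta_i$ with $\Delta_i\;\rel\;\Theta_i$. For the barb condition I use that the barb-$b$ probability of a distribution is linear in the distribution, so the barb probability of $\Phi$ equals $\sum_i p_i$ times those of the $\Delta_i$, and symmetrically for $\Psi$; since each $\Delta_i$ and $\Theta_i$ agree on all barbs, so do $\Phi$ and $\Psi$. For the transition condition I first note that context application distributes over convex combinations, giving $O[\Phi] = \sum_i p_i\, O[\Delta_i]$ and $O[\Psi] = \sum_i p_i\, O[\Theta_i]$. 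Given a move $O[\Phi] \longsquiggly_\pi \Phi'$, I invoke the decomposability of $\longsquiggly_\pi$ (available by the same argument used for \autoref{thm:linearity}) to obtain $\Phi' = \sum_i p_i \Phi'_i$ with $O[\Delta_i] \longsquiggly_\pi \Phi'_i$. Applying the up-to hypothesis to each pair $\Delta_i\;\rel\;\Theta_i$ yields $\Psi'_i$ with $O[\Theta_i] \longsquiggly_\pi \Psi'_i$ and $\Phi'_i\; Cv(\rel)\;\Psi'_i$. Setting $\Psi' = \sum_i p_i \Psi'_i$ and using the linearity of $\longsquiggly_\pi$ gives $O[\Psi] \longsquiggly_\pi \Psi'$, and idempotency of $Cv$ gives $\Phi'\; Cv(\rel)\;\Psi'$, as required. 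The converse direction is symmetric, decomposing $O[\Psi]$ as source instead.

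The main obstacle I anticipate is not conceptual but the bookkeeping around the decomposition step. The paper's decomposability is stated only for binary splits $\psum{p}$, whereas here I need to split a finite convex combination $\sum_{i \in I} p_i\, O[\Delta_i]$ of the source into matching summands of the target $\Phi'$; I would handle this by induction on $|I|$, repeatedly peeling off one component via binary left-decomposability and renormalising the remainder. I also need to be careful that the flattening used for idempotency combines the two index families correctly, so that the products $p_i q_{ij}$ of outer and inner coefficients sum to one and index a genuine family of $\rel$-related pairs, and that context application really is linear on distributions. Both points follow directly from the relevant definitions, but they must be spelled out to make the convex-combination manipulations rigorous.
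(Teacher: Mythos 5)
Your proof is correct, but it is assembled differently from the paper's. You show directly that $Cv(\rel)$ is itself a cs-bisimulation, using extensiveness ($\rel \subseteq Cv(\rel)$), idempotency ($Cv(Cv(\rel)) = Cv(\rel)$), linearity of barbs and of the lifted transitions, and left-decomposability of $\longsquiggly_\pi$; soundness then follows because $\sim_{cs}$ is the largest cs-bisimulation. The paper instead proves the abstract \emph{compatibility} property $Cv(b(\rel)) \subseteq b(Cv(\rel))$ (\autoref{thm:cvcomp}, where $b$ is the functional whose greatest fixed point is $\sim_{cs}$) and then invokes the general theory of compatible functions of Sangiorgi and Pous to conclude. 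The core computations are the same in both routes --- indeed your barb and decompose-then-match steps mirror those inside \autoref{thm:cvcomp} almost verbatim, and both silently need the $n$-ary form of left-decomposability, which your peeling-off induction handles and which the paper also uses without comment. What your direct argument buys is self-containedness: no fixed-point framework is needed. What the compatibility route buys is modularity: compatible functions compose, so \autoref{thm:cvcomp} also licenses the \emph{combined} technique ``up to $Cv$ and up to bisimilarity'' actually used in the proof of \autoref{thm:propertyA}, whereas your direct soundness proof of up-to-$Cv$ alone would not justify that combination without redoing the argument with bisimilarity chained on both sides. As a proof of the stated proposition, though, your version is complete and sound.
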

\begin{proofsketch}
	We need to show that $Cv$ is \emph{compatible}~\cite{sangiorgi_enhancements_2011}, 
	which follows from the linearity of barbs and from the fact that $\longsquiggly_\pi$ is decomposable.
\end{proofsketch}

\section{Constrained Bisimilarity at Work}\label{sec:real-world}

We discuss real-world protocols: quantum teleportation, superdense coding and quantum coin flipping. 
We show how lqCCS models them, and how $\sim_{cs}$ is used for proving their properties.

\subsection{Quantum Teleportation}
The objective of quantum teleportation~\cite{qteleportation} is to allow Alice to send quantum information to Bob %
 without a quantum channel.
Alice and Bob must have each a qubit of an entangled pair $\ket{\Phi^+}$. 
The protocol works as follows: Alice performs a fixed set of unitaries to the qubit to transfer and to their part of the entangled pair; 
then Alice measures the qubits and sends the classical outcome to Bob, which applies different unitaries to their own qubit according to the received information.
In the end, the qubit of Bob will be in the state of Alice's one, and the entangled pair is discarded.
Note that Alice is not required to know the state of the qubit to send.

Consider the following encoding of the protocol where we assume that Alice ($\proc{A}$) and Bob ($\proc{B}$) already share an entangled pair ($q_1,q_2$) (we write $(n)_2$ to stress that $n$ is in binary representation)
\begin{align*}
  \proc{A}    & = \unitary{CNOT}{q_0,q_1}.\unitary{H}{q_0}.\meas[M_{01}]{q_0,q_1}{x}.(m!x \parallel \nil_{q_0,q_1})                       \\
	\proc{B}    & = m?y.\ite{y = (00)_2}{\unitary{I}{q_2}.\mathit{out}!q_2 }
	{( \ite{y = (01)_2}{\unitary{X}{q_2}.\mathit{out}!q_2                                                                                                                   \\&\quad\;\;\,}
		{( \ite{y = (10)_2}{\unitary{Z}{q_2}.\mathit{out}!q_2 }
	{\unitary{ZX}{q_2}.\mathit{out}!q_2})}) }                                                                                                                               \\
	\proc{Tel}  & = (\proc{A} \parallel \proc{B}) \setminus m                                                                                                              
\end{align*}
We let $\Delta = \singleton{\conf{\dket{\psi\Phi^+}, \proc{Tel}, \nil}}$, with $\Theta = \singleton{\conf{\dket{\psi\Phi^+}, \unitary{SWAP}{q_0,q_2}.\tau.\tau.\tau.\tau.(\mathit{out}!q_2 \parallel \nil_{q_0,q_1}, \nil}}$ its specification, for $\kp = \alpha\kz + \beta\ko$, and sketch the proof for $\Delta \sim_{cs} \Theta$ below.

Since there is no qubit in $\dket{\psi\Phi^+}$ apart from the ones in $\Sigma_{\proc{Tel}}$, $R$ is in deadlock for any context $O[\blank] = [\blank] \parallel R$.
Thus, all subsequent transitions are of the kind $O[\Delta] \longsquiggly_\diamond O[\Delta']$ with $\Delta \longsquiggly_\diamond \Delta'$
until a send operation on an unrestricted channel is reached (and the same for $\Theta$).
For simplicity, we will therefore omit the contexts in the first steps
\begin{align*}
  \Delta \longsquiggly_\diamond^3&
  \sum\nolimits_{i = 0}^3 \distelem{\frac{1}{4}}{\sconf{\dket{i} \otimes \dket{\psi_i}, (m!i \parallel \nil_{q_0,q_1} \parallel \proc{B}) \setminus m, \nil}} \\
  \quad\longsquiggly_\diamond^2&\ \Delta' = \sum\nolimits_{i = 0}^3 \distelem{\frac{1}{4}}{\sconf{\dket{i} \otimes \dket{\psi_i}, (\nil_{q_0,q_1} \parallel \mathit{out}!q_2) \setminus m, R}} \\
  \Theta \longsquiggly_\diamond& \sconf{\dket{\Phi^+\psi}, \tau.\tau.\tau.\tau.(\mathit{out}!q_2 \parallel \nil_{q_0,q_1}), \nil} \longsquiggly_\diamond^4 \Theta' = \sconf{\dket{\Phi^+\psi}, \mathit{out}!q_2 \parallel \nil_{q_0,q_1}, R}
\end{align*}
where $\ket{\psi_{0}} = \kp$, $\ket{\psi_{1}} = \beta\kz + \alpha\ko$, $\ket{\psi_{2}} = \alpha\kz - \beta\ko$, $\ket{\psi_{3}} = \beta\kz - \alpha\ko$, 
and where, abusing notation, we use $\ket{0} = \ket{00}, \ket{1} = \ket{01}, \ket{2} = \ket{10}$ and $\ket{3} = \ket{11}$ when speaking of pairs of qubits.
All intermediate steps happen with the same label ($\diamond$) and no barb is expressed, as the channel $m$ is restricted.
Finally, since $out!q_2$ is a deterministic process, it is immediate to prove that $\tr_{q_0,q_1}(O[\Delta']) \sim_{cs} \conf{\dket{\psi},\mathit{out}!q_2,R} = \tr_{q_0,q_1}(O[\Theta'])$ by applying~\autoref{thm:propertyA}.
The bisimilarity of $\Delta'$ and $\Theta'$ then follows from \autoref{thm:discarded}, therefore $\Delta \sim_{cs} \Theta$.

\subsection{Superdense Coding}

We consider a generalization of the superdense coding protocol~\cite{SDC}.
Assume Alice and Bob have each a qubit of a Bell pair $\ket{\Psi^+}$. The protocol allows Alice to communicate a distribution of two-bit integers to Bob by sending their single qubit to Bob. 

The protocol works as follows: Alice chooses a distribution of integers in $[0,3]$ and encode it by performing suitable transformations to their qubit, which is then sent to Bob;
Bob receives the qubit and decodes the distribution by performing CNOT and $\text{H} \otimes \text{I}$ on the pair of qubits (the received qubit and their original one), and then a measurement on the standard basis.
By measuring the qubits, Bob recovers the distribution chosen by Alice, and can use it as they like.

We consider the following instantiation of the protocol, where Bob uses the received value to decide (in an unspecified way) on which channel to send the received qubit (either channel $a$ or $b$)
\begin{align*}
	\proc{A} & = \sop{E}{q_0}.c!q_0                                                        \\
  \proc{B}   & = c?x. \unitary{CNOT}{x, q_1}.\unitary{H}{x}.\meas{x,q_1}{y}.((a!x + b!x) \parallel \nil_{q_1}) \\
	\proc{SDC}   & = \proc{A} \parallel \proc{B} \setminus c
\end{align*}
More in detail, Alice encodes: the point distribution $\bar{0}$ by applying the unitary I, $\bar{1}$ with $X$,
$\bar{2}$ with Z, and
$\bar{3}$ with ZX\@.
In general, they apply a superoperator $\mathcal{E}$ with Kraus decomposition
	\[
    \{ \sqrt{p_0} I, \sqrt{p_1} X, \sqrt{p_2} Z, \sqrt{p_3} ZX \}\quad\text{for some $p_i \in [0,1]$ such that} \sum\nolimits_{i} p_i = 1
	\]

Consider now the following where Rob (in place of Bob) forgets to measure the qubits
\begin{align*}
  \proc{R} = c?x. \unitary{CNOT}{x, q_1}.\unitary{H}{x}.\tau.((a!x + b!x) \parallel \nil_{q_1})
\end{align*}
Ideally, Rob cannot base their decision on the value sent by Alice, hence we expect $\proc{SDC}$ to be distinguishable from the case where $\proc{R}$ is substituted for $Bob$.
Indeed,
$\singleton{\conf{\ketbra{\Psi^+}, \proc{A} \parallel \proc{B} \setminus c}}$ and $\singleton{\conf{\ketbra{\Psi^+}, \proc{A} \parallel \proc{R} \setminus c}}$ are not constrained bisimilar in general.
Consider, for example, $\mathcal{E}$ with Kraus decomposition $\{ \frac{1}{2} \text{I}, \frac{1}{2} \text{X}, \frac{1}{2} \text{Z}, \frac{1}{2} \text{ZX} \}$ (i.e. Alice encodes a fair distribution of all the possible values).

Assume we always take the empty observer $O[\blank] = [\blank]$ whenever we do not specify otherwise, and note that
the following steps are forced
\begin{align*}
	 \singleton{\conf{\ketbra{\Psi^+}, \proc{SDC}}}
	 &\longsquiggly_\diamond \singleton{\conf{\rho, c!q_0 \parallel \proc{B} \setminus c}} \quad \text{, with } \rho = \frac{1}{4}\left(\ketbra{\Phi^+} + \ketbra{\Phi^-} + \ketbra{\Psi^+} + \ketbra{\Psi^-}\right) \\
   & \longsquiggly_\diamond \singleton{\conf{\rho, \unitary{CNOT}{q_0, q_1}.\unitary{H}{q_0}.\meas{q_0,q_1}{y}.((a!q_0 + b!q_0) \parallel \nil_{q_1}) \setminus c}}                                                     \\
	 & \longsquiggly_\diamond^2 \singleton{\conf{\frac{1}{4}I,
			\meas{q_0,q_1}{y}.((a!q_0 + b!q_0) \parallel \nil_{q_1}) \setminus c}}
	\longsquiggly_\diamond \Delta_{\proc{B}}
\end{align*}
where $\Delta_{\proc{B}}$ is defined as
\begin{align*}
	 & \quad \sum\nolimits_{j = 0}^3 \distelem{\frac{1}{4}}{\singleton{\conf{\ketbra{j}, (a!q_0 + b!q_0) \parallel \nil_{q_1} }}}
\end{align*}

Similarly, for Rob we have that the following are forced
\begin{align*}
   &\singleton{\conf{\ketbra{\Psi^+}, \proc{A} \parallel \proc{R} \setminus c}} \longsquiggly_\diamond \singleton{\conf{\rho, c!q_0 \parallel \proc{R} \setminus c}} \\
   & \quad\longsquiggly_\diamond \singleton{\conf{\rho, \unitary{CNOT}{q_0, q_1}.\unitary{H}{q_0}.\tau.((a!q_0 + b!q_0) \parallel \nil_{q_1}) \setminus c}}           \\
	 & \quad\longsquiggly_\diamond^2 \singleton{\conf{\frac{1}{4}I,
	\tau.((a!q_0 + b!q_0) \parallel \nil_{q_1}) \setminus c}}
	 \longsquiggly_\diamond \Delta_{\proc{R}} =
	\singleton{\conf{\frac{1}{4}I,(a!q_0 + b!q_0) \parallel \nil_{q_1}}}
\end{align*}

Take now the following context
\[
	O'[\blank] = [\blank] \parallel a?z.\meas{z}{res}(\ite{res = 0}{success!z}{\mathit{fail}!z} + b?z.\tau.\nil_{z})
\]
and assume $O'[\Delta_{\proc{B}}]$ evolves as
\begin{align*}
	 & \quad \frac{1}{4} \bullet \singleton{\conf{\ketbra{00}, \nil_{q_1}, \nil \parallel \meas{q_0}{res}(\ite{res = 0}{success!z}{\mathit{fail}!z}}} \\
	 & +\frac{1}{4} \bullet \singleton{\conf{\ketbra{01}, \nil_{q_1}, \nil \parallel \meas{q_0}{res}(\ite{res = 0}{success!z}{\mathit{fail}!z}}}      \\
	 & +\frac{1}{4} \bullet \singleton{\conf{\ketbra{10}, \nil_{q_1}, \nil \parallel \tau.\nil_{q_0}}}
	+\frac{1}{4} \bullet \singleton{\conf{\ketbra{11}, \nil_{q_1}, \nil \parallel \tau.\nil_{q_0}}}
\end{align*}
After a reduction, $O'[\Delta_{\proc{B}}]$ expresses the barbs \emph{success} and \emph{fail} with probability $1/2$ and $0$ respectively.

On the contrary, $O'[\Delta_{\proc{R}}]$ may only evolve as either
\begin{align*}
  \singleton{\conf{\frac{1}{4}I, \nil_{q_1}, \nil \parallel \meas{q_0}{\text{res}}(\ite{\text{res} = 0}{\mathit{success}!z}{\mathit{fail}!z}}}\quad\text{ or }\quad
	\singleton{\conf{\frac{1}{4}I, \nil_{q_1}, \nil \parallel \tau.\nil_{q_0}}},
\end{align*}
and, after a reduction, must express both the barbs $success$ and $\mathit{fail}$ with either probability $1/2$ or $0$.

This important result is due to the~\autoref{thm:nondetVSite}, which allows $\proc{B}$ to behave as if any boolean conditional was in place of $+$, thus to send the qubit on $a$ if the outcome of the measurement is strictly lower than $2$, and on $b$ otherwise.
The other proposed behavioural equivalences for addressing the problem of the standard probabilistic bisimilarity deem the two distributions indistinguishable (see~\autoref{sec:rw} for an in depth comparison).

\subsection{Quantum Coin Flipping}
We present now a more complex example, namely the Quantum Coin Flipping (QCF) protocol. 
Suppose Alice and Bob do not trust each other, and want to randomly select a winner between them.
\citet{bb84} propose a protocol in which Alice chooses either the $01$ or $\pm$ basis at random, then generates a sequence of random bits and encodes them into a sequence of qubits in the selected basis (the first element of the basis stands for bit $0$, the second for $1$).
The qubits are then sent to Bob, who 
measures each of them in a random basis ($01$ or $\pm$). Finally, Bob tries to guess the basis chosen by Alice: Bob wins if the guess is correct. 

Bob has no way to find Alice's basis from the received qubits, so the guess will be correct or wrong with equal probability. 
Alice could cheat, lying about their basis. 
To protect Bob, at the end of the protocol, Alice must reveal their basis and the original bit sequence.
Bob then compares the original sequence with the previously stored outcomes of the measurements.
For qubits where Alice's basis coincides with Bob's one, the outcomes must 
coincide with the original bit sequence.

Hereafter, we let $x_i$ be the $i$-th bit of the integer $x$, and resort to some minor extensions of lqCCS\@
\begin{itemize}
\item $\conf{\rho, RandBit(x).P}$ evolves as $\overline{\conf{\rho, P[0/x]}} \psum{{1}/{2}} \overline{\conf{\rho, P[1/x]}}$, and could be implemented with an additional qubit.
\item We assume a mapping $\beta$ from bit to bases with $\beta(0)$ and $\beta(1)$ the $01$ and $\pm$ basis. 
\item We assume a polyadic extension of lqCCS type system and semantics where $c!\tilde{v}$ and $c?\tilde{x}$ allow substituting tuples of values $\tilde{v} = v_1, \ldots ,v_n$ for variables $\tilde{x} = x_1, \ldots ,x_n$.
\end{itemize} 

We formalize QCF as follows, where $n$ is the number of qubits, and the outcome is sent on $a$ and $b$ ($1$ if Bob wins and $0$ otherwise).
We use $b =_{int} b'$ for comparing digits, defined as $(1 - b)(1 - b') + bb'$.
\begin{align*}
  \proc{Alice}       & = Rand(\text{secretvalue}).\proc{Alice}_{\beta(\text{secretvalue})}                                                               \\
  \proc{Alice}_{01}  & = \unitary{H}{\tilde{q}}.M_{01}(\tilde{q} \rhd w).(
	\mathit{AtoB}!\tilde{q} \parallel
	\mathit{guess}?g.(a!(g =_{int} 0)
	 \parallel \mathit{secret}!0 \parallel \mathit{witness}!w
	))                                                                                                       \\
	\proc{Alice}_{\pm} & = I(\tilde{q}).M_\pm(\tilde{q} \rhd w).(
	\mathit{AtoB}!\tilde{q} \parallel
	\mathit{guess}?g.(a!(g =_{int} 1)
	\parallel \mathit{secret}!1 \parallel \mathit{witness}!w
	))                                                                                                       \\
	\proc{Bob}         & = \mathit{AtoB}?\tilde{z}.\Big(
	\Big(\Big(\bigparallel\nolimits_{i = 1}^n \proc{Server}_i \Big) \parallel \proc{Bob}' \Big) \setminus \{\mathit{base}_i\}_{i = 1}^n \setminus \{\text{bit}_i\}_{i = 1}^n
	\Big)                                                                                                       \\
	\proc{Bob}'        & = \mathit{base}_1?b_1\ldots \mathit{base}_n?b_n.
	\mathit{bit}_1?x_1\ldots\mathit{bit}_n?x_n.
	Rand(g).(\mathit{guess}!g \parallel \proc{Bob}'')                                                                  \\
	\proc{Bob}''       & =  \mathit{secret}?g'.\mathit{witness}?w.\Big(
	b!(g =_{int} g') \parallel
	\Big(\bigparallel\nolimits_{i = 1}^n
	\ite{(b_i = g' \land x_i \neq w_i}{\mathit{cheat}!0}{\nil}
	\Big)\Big)                                                                                                       \\
	\proc{Server}_i    & = Rand(b).M_{\beta(b)}(z_i \rhd x).(\mathit{base}_i!b \parallel \mathit{bit}_i!x \parallel \nil_{z_i}) \\
	\proc{QCF} 				& = (\proc{Alice} \parallel \proc{Bob}) \setminus \{\mathit{AtoB, guess, secret, witness}\} 									
\end{align*}
Thanks to cs-bisimilarity, we can analyse three properties of QCF, namely that the outcome is fair, that Bob cannot cheat, and neither Alice can. As reported by~\citet{bb84}, the first two properties hold, but an attack exists allowing Alice to decide the outcome of the protocol.
\paragraph{Fairness}
We show that $\conf{\ketbra{0^n}, \proc{QCF}} \sim_{cs} \conf{\ketbra{0^n}, \proc{FairCoin}}$, with $\proc{FairCoin}$ defined as $\tau^{4n + 5}.Rand(x).(a!x \parallel \tau.\tau.b!x \parallel \nil_{\tilde{q}})$.
As before, it suffices to consider the empty context, and we show the evolution of the protocol for $n = 1$, as the other cases follow the same pattern.
\[
\conf{\ketbra{0}, \proc{QCF}}
\longsquiggly_\diamond \left(\overline{\conf{\ketbra{0}, \proc{Alice}_{01}\parallel \proc{Bob}}} \psum{\frac{1}{2}}
\overline{\conf{\ketbra{0}, \proc{Alice}_{\pm}\parallel \proc{Bob}}}\right) \setminus C
\]
where $C = \{\mathit{AtoB, guess, secret, witness}\}$. We will focus just on the execution of $Alice_{01}$, as the other one is symmetrical. The first actions of Alice are to prepare the qubit at random and send it to Bob. The latter will then measure it in a random basis and record the result. Formally
\begin{align*}
& \conf{\ketbra{0}, \proc{Alice}_{01}\parallel \proc{Bob}}\\
& \longsquiggly_\diamond^3
\sum\nolimits_{j \in \{0, 1\}} \distelem{\frac{1}{2}}{\overline{\conf{\ketbra{j}, \proc{Alice}_{01}'[j/w] \parallel  \proc{Bob}' \parallel \proc{Server}}}}		\\
& \longsquiggly_\diamond^2
\quad \distelem{\frac{1}{2}}{\left(\sum\nolimits_{j \in \{0, 1\}} \distelem{\frac{1}{2}}{\overline{\conf{\ketbra{j}, \proc{Alice}_{01}'[j/w] \parallel \proc{Bob}' \parallel (\mathit{base}!0 \parallel \mathit{bit}!j \parallel \nil_{q}) }}}\right)} \\
& \qquad \ +\distelem{\frac{1}{2}}{\left(\sum\nolimits_{j \in \{0, 1\}} \distelem{\frac{1}{2}}{\overline{\conf{H\ketbra{j}H, \proc{Alice}_{01}'[j/w] \parallel \proc{Bob}' \parallel (\mathit{base}!1 \parallel \mathit{bit}!j \parallel \nil_{q}) }}}\right)} 
\end{align*}																							
where $\proc{Alice}_{01}' = \mathit{guess}?g.(a!(g =_{int} 0) \parallel \mathit{secret}!0 \parallel \mathit{witness}!w)$.
After the measurement (and after synchronising with the server processes) Bob send their random guess of the secret basis to Alice, who reveals the correct one.
Bob checks the consistency of Alice response, and if the protocol is executed correctly the two parties will agree on the outcome: either $0$ or $1$ with equal probability.
\begin{align*}
&
\quad \distelem{\frac{1}{2}}{\left(\sum\nolimits_{j, k \in \{0, 1\}}\distelem{\frac{1}{4}}{\overline{\conf{\ketbra{j},\nil_{q} \parallel \proc{Alice}_{01}'[j/w] \parallel \mathit{guess}!k \parallel \proc{Bob}''[0/b][j/x][k/g] }}}\right)} \\
& +\distelem{\frac{1}{2}}{\left(\sum\nolimits_{j, k \in \{0, 1\}}\distelem{\frac{1}{4}}{\overline{\conf{H\ketbra{j}H,\nil_{q} \parallel \proc{Alice}_{01}'[j/w] \parallel \mathit{guess}!k \parallel \proc{Bob}''[1/b][j/x][k/g] }}}\right)} \\
& \longsquiggly_\diamond^3
{\left(\sum\nolimits_{j, k \in \{0, 1\}} \distelem{\frac{1}{4}}{\overline{\conf{\ketbra{j},\nil_{q} \parallel a!k \parallel b!k }}}\right)} 
\ \psum{\frac{1}{2}} \ {\left(\sum\nolimits_{j, k \in \{0, 1\}} \distelem{\frac{1}{4}}{\overline{\conf{H\ketbra{j}H,\nil_{q} \parallel a!k \parallel b!k }}}\right)} 
\end{align*}
 It is easy to see that in this last step $\proc{QCF}$ expresses the barbs $\downarrow a$ and $\downarrow b$, sending the same values as the specification $\proc{FairCoin}$, and the two are indeed bisimilar.

\paragraph{Dishonest Bob}
In order to cheat, Bob needs to discover
Alice's secret from the sent qubits alone. 
This is impossible, because \autoref{thm:propertyA} deems the initial prefixes of $\proc{Alice}_{01}$ and $\proc{Alice}_\pm$ bisimilar.
\[
	\proc{A}_{01} = \sum\nolimits_{j = 0}^{2^n-1} \distelem{\frac{1}{2^n}}{\conf{\ketbra{j}, AtoB!\tilde{q}}} \quad \sim_{cs}  \quad \sum\nolimits_{j = 0}^{2^n-1} \distelem{\frac{1}{2^n}}{\conf{H^{\otimes n}\ketbra{j}H^{\otimes n}, AtoB!\tilde{q}}} = \proc{A}_\pm
\]
Note that $A_{01} \not\sim_s A_{\pm}$, as shown in~\autoref{ex:broken-nondet}.
	Traditional probabilistic bisimilarity à la~\citet{hennessy_exploring_2012} 
	fails in analysing Quantum Coin Flipping and similar protocols.

\paragraph{Dishonest Alice}

Interestingly, Alice can cheat by using additional qubits entangled with the ones they send to Bob. 
By measuring their entangled qubits in Bob's chosen basis, Alice forges a fake witness for deceiving Bob (in the process, Alice wins and $0$ is sent on $a$ and $b$). 
We call this attacker 
Alison, and show that $\conf{\ketbra{0^{2n}}, (\proc{Alison} \parallel \proc{Bob}) \setminus C} \sim_{cs} \conf{\ketbra{0^{2n}}, \proc{UnfairCoin}}$.
\begin{align*}
	\proc{Alison}  & = Set_{\Phi^+}(q_1, q_1')\ldots Set_{\Phi^+}(q_n, q_n').\left(
	\mathit{AtoB}!\tilde{q} \parallel \proc{Alison}'
	\right)                                                                                                                                                 \\
	\proc{Alison}' & = \mathit{guess}?g.(a!0 \parallel \mathit{secret}!(1 - g) \parallel M_{\beta(1-g)}(q' \rhd x').(\mathit{witness}!x' \parallel \nil_{\tilde{q'}})) \\
	\proc{UnfairCoin}     & = \tau^{5n + 3}.(a!0 \parallel \tau.\tau.\tau.b!0 \parallel \nil_{\tilde{q}} \parallel \nil_{\tilde{q'}})
\end{align*}
The $\proc{UnfairCoin}$ specification always selects Alison as the winner, and never expresses the $\downarrow_{cheat}$ barb. In other words, $(\proc{Alison} \parallel \proc{Bob}) \setminus C \sim_{cs} \proc{UnfairCoin}$ means that Alison is always capable of tricking Bob without being discovered. 

The behaviour of Bob is identical to the previous case.
Indeed, the reduced density operator of the qubits sent by Alison is indistinguishable from the one of the honest Alice. 
But after receiving Bob's guess, Alison can measure their own qubits, which have decayed as the ones of Bob. 
In this way, her fake witness $w'$ will always be correct, as we show for the case $n = 1$
\begin{align*}
  & \conf{\ketbra{00}, (\proc{Alison} \parallel \proc{Bob}) \setminus C}  \\
& \longsquiggly_\diamond^{11}
{\left(\sum\nolimits_{j \in \{0, 1\}} \distelem{\frac{1}{2}}{\overline{\conf{\ketbra{jj},\nil_{q} \parallel a!0 \parallel b!0 }}}\right)} 
\ \psum{\frac{1}{2}} \ {\left(\sum\nolimits_{j \in \{+, -\}} \distelem{\frac{1}{2}}{\overline{\conf{\ketbra{jj},\nil_{q} \parallel a!0 \parallel b!0 }}}\right)}
\end{align*}

\section{Related Works}\label{sec:rw}

We focus on the quantum process calculi most similar to our proposal,
as well as likely the better established and developed, namely QPAlg~\cite{lalire_process_2004,lalire_relations_2006}, CQP~\cite{gay_communicating_2005,gay_types_2006,davidson_formal_2012}, and qCCS~\cite{feng_probabilistic_2007,ying_algebra_2009,feng_bisimulation_2012,feng_symbolic_2014,deng_open_2012,feng_toward_2015-1,deng_bisimulations_2018}\@.
When comparing the proposed behavioural equivalences, we abstract from the ``classical'' details 
and focus on the quantum-related features,
restricting ourselves to the strong version of the bisimulations.
A first difference with lqCCS is that they are mostly based on labelled bisimilarities.
\autoref{tab:examples} summarizes distinctive prototypical processes (in the lqCCS syntax) deemed bisimilar or not according to different approaches.

One of the discrepancies is the visibility of qubits that are neither sent nor discarded.
The first three lines contain processes whose bisimilarity depends on the assumption about the visibility of these unsent qubits.
Our linear type system makes the former assumptions irrelevant.
In the fourth line, the processes send pairs of qubits with the same partial trace if taken separately, even if a pair is entangled and the other is not.
The fifth line compares processes where the state of the sent qubits is represented by the same density operator,
whose bisimilarity is implied by~\autoref{thm:propertyA}.
Finally, the sixth line compares two processes where a qubit is sent non-deterministically over two channels:
for each channel, the state of the sent qubits is represented by the same density operator, but if the chosen channel depends on the outcome of the measurement the two processes can be distinguished.
	Bisimilarities that do not distinguish these two processes cannot satisfy~\autoref{thm:nondetVSite}.

\subsection*{QPAlg}
The Quantum Process Algebra (QPAlg) is an extension of synchronous value-passing
CCS with primitives for unitary transformations and measurements.
As common, quantum operations are silent, and quantum communication updates quantum variables.
They propose a probabilistic branching
bisimilarity, adapted for stateful computations by requiring
bisimilar processes to send the same quantum state, defined as the partial trace of the global quantum
state.
However, this bisimilarity is coarser than prescribed by the theory when states are entangled~\cite{davidson_formal_2012}.
\begin{example}\label{ex:qpalg}
	The processes in the fourth line of~\autoref{tab:examples} are bisimilar in QPAlg, as they send pairs of qubits with the same partial trace
	($tr_{q_0}(\Phi^+) = tr_{q_1}(\Phi^+) = tr_{q_0}(\frac{1}{4}I) = tr_{q_1}(\frac{1}{4}I) = \frac{1}{2}I$).
	Such processes are not bisimilar according to $\sim_{cs}$, as the context of~\autoref{ex:entanglement} discriminates the two.
\end{example}
Finally, behaviourally equivalent processes are not required to behave similarly on unsent qubits,
e.g.\ $H(q).\nil$ and $X(q).\nil$ of the first line of~\autoref{tab:examples}.
In lqCCS, the above processes are not legal, but a similar result holds for $H(q).\nil_{q}$ and $X(q).\nil_{q}$.

\subsection*{CQP}

The calculus of Communicating Quantum Processes (CQP) is inspired by the $\pi$-calculus and it is enriched with qubits declarations (that extend the quantum state) and quantum transformations,
but without guards or match operators, thus lacking the form of classical control used in lqCCS\@.
They introduce an affine type system prescribing that every qubit is sent at most once.
CQP comes with a reduction semantics~\cite{gay_communicating_2005,gay_types_2006} and a labelled one~\cite{davidson_formal_2012}, both based on pure quantum states.
Davidson proposes mixed configurations, i.e.\ configurations where a single process is paired with a probability distribution of classical and quantum states.
Mixed configurations represent non-observable probabilities due to measurements whose result is not communicated yet, and they are treated as a single state by the semantics.
Our proposal generalizes CQP mixed configurations to distributions of configurations with possibly different processes.
This is necessary for extending the approach to processes with boolean guards.

A branching bisimilarity is defined for the labelled semantics of CQP\@.
To avoid the problem of QPAlg with entangled states, bisimilarity requires that the reduced state obtained by collecting \emph{all} the sent qubits coincide.
Moreover, mixed configurations allow relating processes sending indistinguishable quantum states.
The resulting bisimilarity is a congruence for parallel
composition and is capable of equating interesting cases such as the one of the fifth line of~\autoref{tab:examples}.
\begin{example}
  Consider $\conf{\dket{+}, \meas[M_{01}]{q}{x}.c!q}$ and $\conf{\dket{0}, \meas[M_{\pm}]{q}{x}.c!q}$.
	Even though they end up in different quantum states, the two configurations above are bisimilar according to CQP, because they send on channel $c$ qubits with the same partial trace.
	Our proposed bisimilarity replicates this result by resorting to contexts with constrained non-determinism (see~\autoref{ex:zopm}). 
	Indeed, a context receiving the qubit cannot behave differently depending on its state if not through measurement, and when measured, the states of the two qubits coincide.
\end{example}
Our cs-bisimilarity extends the one of CQP in a context-based fashion over standard distributions.
CQP behaves as QPAlg with respect to unsent qubits (see the first line of~\autoref{tab:examples}).

\subsection*{qCCS}

Our process calculus takes its most direct inspiration from qCCS, a synchronous
CCS-style calculus with superoperators and measurements where
syntactic restrictions guarantee each qubit is sent at most once.
A feature of qCCS is the support for recursive processes, which we postpone to future work.
Two different labelled bisimilarities are proposed for qCCS\@: the first is based on standard probabilistic bisimulations; the second one relies on transition consistency and subdistributions.

The probabilistic bisimilarity~\cite{feng_probabilistic_2007,feng_bisimulation_2012} (denoted by $\sim_{p}$ in \autoref{tab:examples}) requires bisimilar processes to send the same names on the same channels and to produce the same quantum state of qubits that are not owned any more.
In addition, bisimulations must be closed under applications of trace-preserving superoperators on not-owned qubits.
Therefore, the 
processes of the first line of~\autoref{tab:examples} are distinguished.
We recover this assumption by sending the qubits (see the third line).

Moreover, \autoref{thm:qccs} shows that cd-bisimilarity replicates the $\sim_p$ requirements over superoperators and not owned qubits.
The probabilistic bisimilarity of qCCS is proved to be a congruence with respect to parallel composition.
Further extensions are the (weak) open bisimilarity proposed in~\citet{deng_open_2012}, proven to be a weak barbed congruence, and a symbolic version of the bisimilarity in~\citet{feng_symbolic_2014},
that relieves from considering all the (universally quantified) quantum states of a configuration when verifying bisimilarity.

Configurations like the ones of the fifth line of~\autoref{tab:examples} are not bisimilar for $\sim_p$, even though they are indistinguishable according to quantum theory.
This discrepancy was signalled in~\citet{kubota_application_noyear} and lead to a new proposal called distribution bisimilarity~\cite{feng_toward_2015-1,deng_bisimulations_2018} (denoted by $\sim_{d}$ in \autoref{tab:examples}), which is directly defined on distributions and is based on transition consistency.
A distribution is called transition consistent if any configuration in its support has exactly the same set of enabled visible actions.
In addition to closure with respect to superoperator application, bisimilar distributions are required to be such that: $(i)$ the weighted sums of the state of not owned qubits in the support coincide; $(ii)$ the possible transitions of one transition consistent distribution
are matched by the other one, and; $(iii)$ if the distributions are not transition consistent, then they must be decomposable in bisimilar transition consistent distributions.
On the one hand, considering distributions as a whole when comparing the quantum states equates processes like the ones of the fifth line of~\autoref{tab:examples}.
On the other hand, transition consistent decompositions recover a weakened version of decomposability, avoiding equating distributions that cannot evolve because the processes in the supports enable different actions only.

The use of transition consistency in $\sim_d$ implicitly constrains non-determinism.
In the last line of~\autoref{tab:examples}, we compare such constraints with the ones we impose on lqCCS\@.
While our $\sim_{cs}$ is capable of distinguishing the two processes, $\sim_d$ cannot.
In fact, the lifting of the \emph{labelled} semantics of qCCS forbids processes from replicating the moves of their refinements (as lqCCS does).
\begin{example}
	Consider the pair of processes of the last line of~\autoref{tab:examples}. They reduce to
	\begin{gather*}
		\Delta = \overline{\conf{\dket{0}, c!q + d!q}} \psum{1/2} \overline{\conf{\dket{1}, c!q + d!q}} \quad
		\Theta = \overline{\conf{\dket{+}, c!q + d!q}} \psum{1/2} \overline{\conf{\dket{-}, c!q + d!q}}
	\end{gather*}
	We show in~\autoref{ex:non-detproc2} that $\Delta \not\sim_{cs} \Theta$, as $\Delta$ can choose to send the qubit over $c$ only when it is set to $\kz$, while $\Theta$ cannot.
	In the distribution bisimulation of~\citet{feng_toward_2015-1}, instead, only the moves that choose the same channel in all the configurations of the support are considered, deeming the two distributions bisimilar.
  This means that $\meas[M_{01}]{q}{x}. (c!q + d!q)$ cannot use the value of $x$ to choose the channel over which to send, as it would be expected, and thus that the constraints over non-determinism are arguably too strong in~\citet{feng_toward_2015-1}.
\end{example}

Finally, \citet{feng_toward_2015-1} acknowledge that \emph{weak} distribution bisimilarity is not a congruence.
Its strong version is not a congruence either: take $\Delta \sim_{d} \Theta$ and $B[\blank]$ of~\autoref{ex:broken-nondet}, it is easy to show that $B[\Delta] \not\sim_{d} B[\Theta]$.
The same also holds for $\sim_{cs}$, which is a congruence with respect to observers but not to parallel composition.
We believe that $\sim_{d}$ actually verifies the indistinguishability property over deterministic processes of~\autoref{thm:propertyA}, 
and
$B[\Delta] \not\sim_{d} B[\Theta]$ shows that the result cannot be extended to general processes.
On a similar note, 
$\sim_{d}$ does not preserve the expressiveness 
of non-deterministic choices based on classical information stated in~\autoref{thm:nondetVSite}.

\section{Conclusions and Future Work}\label{sec:conc}
We presented lqCCS, a quantum process calculus with asynchronous communication and a linear type system guaranteeing that each qubit is sent or discarded \emph{exactly} once.
 The latter lifts the semantics from making arbitrary assumptions about the observability of unsent qubits, which was a discrepancy among related works.

The main result of this work is a novel stateful reduction semantics together with a saturated probabilistic bisimilarity that relies on contexts for distinguishing quantum processes. These choices allowed us to investigate and compare the discriminating capabilities of the bisimilarity against the principles of quantum theory. 
By employing standard contexts, we found that the problems highlighted by~\citet{davidson_formal_2012} and~\citet{kubota_application_noyear} are caused by the interaction between non-determinism and quantum features.
In particular the standard notion of non-determinism subverts a defining feature of quantum theory
by allowing contexts to perform moves based on the possibly unknown quantum state, without 
performing a measurement and thus perturbing it.
We enhanced the semantics of lqCCS and constrained non-determinism by requiring the contexts to perform the same move in all the configurations of a given distribution (when no classical branching is possible). 
The resulting bisimilarity relation is strictly coarser than the unconstrained one.  

We prove two main properties: $(i)$ indistinguishability of quantum states can be lifted to classes of bisimilar distributions of lqCCS configurations; and 
$(ii)$ non-deterministic choices can perform moves according to known classical values, simulating the semantics of boolean guards.
Intuitively, the first guarantees that constraints are indeed sufficient to 
prevent non-determinism from subverting quantum features, while the second 
ensures that constraints are not too restrictive.
Moreover, we showed by counterexample in~\autoref{tab:examples} that no bisimilarity in the literature
satisfies both of them.

Furthermore, we proved that the novel bisimilarity is linear with respect to probabilistic composition, and closed for superoperator application on qubits not appearing in the processes, which are also required to be equal in bisimilar distributions.
Moreover, discarded qubits can be ``traced out'' without affecting bisimilarity. 
An up-to technique is given to aid bisimilarity proofs.
We tested our approach by modeling and analysing three real-world protocols.
Finally, we compared our findings with previously proposed bisimilarities, using simple prototypical cases that highlights dinstinguishability features required by quantum theory.

\paragraph{Discussion}
Our work starts with an example-based analysis of the proposed bisimilarities and their adherence to expected indistinguishability results prescribed by quantum theory.
Through our analysis, we identify some desired properties that we later prove for our proposed bisimilarity (mainly,~\autoref{thm:propertyA} and \autoref{thm:nondetVSite}).
Bisimilarities for quantum processes are difficult to justify and validate, as
we have no touchstone but the prescriptions of quantum mechanics about what can be operationally distinguished.
Our set of examples and the properties they suggest help with this problem, and they are sufficient to tell apart cs-bisimilarity and the previous proposals.

We opted to work with a well established and fairly standard calculus, limiting changes to the ones needed for addressing the problems at hand.
Thus, we left unaltered the semantics of processes, only restricting the behaviour of observers, and we stick with a classical probabilistic approach.

Alternative approaches may be considered.
One could characterise feasible non-deterministic choices in general, constraining them also in processes. 
Unfortunately, the expressivity of processes would be weaker than the ones considered in the related works,  since this approach may result in overly constrained processes, thus making the comparison less direct.
Otherwise, one could look for a more suitable notion of \emph{quantum} distribution that naturally satisfy the desired properties of indistinguishability, similarly to what is done by~\citet{davidson_formal_2012} and \citet{feng_symbolic_2014}.

Both these approaches seem promising follow-ups for our work, that can serve as a preliminary investigation on how process semantics should be changed to accommodate to quantum theory.

\paragraph{Future Work}
One aim of our future work is to extend lqCCS, namely with qubit declaration primitives and recursive processes.
We will also explore weak versions of the constrained bisimilarity, as done by~\citet{feng_probabilistic_2007} and \citet{davidson_formal_2012},
and enhanced proof methodologies, i.e.\ by investigating pruning techniques and by looking for an equivalent labelled bisimulation, following the approach of~\citet{bonchi_general_2014}.
The advantage would be two-fold: to avoid the
universal quantification over contexts and to identify an adequate set
of observable properties.

Moreover, we will further investigate some
of the shortcomings of probabilistic bisimulation with respect to non-determinism.
The solution presented in this work is only one of many possible approaches, where non-determinism is unconstrained in the processes and constrained in the contexts.
An alternative approach is to characterize feasible choices in general, constraining non-determinism also in processes, by
taking into account all the legit reasons for configurations of the same distribution to behave differently. 
We guess this will give a bisimilarity that is a congruence and that still satisfies our desired properties, something which is missing among current proposals.

Finally we will investigate how spurious non-deterministic moves impact on model checking when applied to quantum systems.

\begin{acks}
This study was carried out within the National Centre on HPC, Big Data and Quantum Computing - SPOKE 10 (Quantum Computing) and received funding from the European Union Next-GenerationEU - National Recovery and Resilience Plan (NRRP) – MISSION 4 COMPONENT 2, INVESTMENT N. 1.4 – CUP N. I53C22000690001.
\end{acks}

\bibliography{references}

\clearpage
\appendix

\renewcommand{\com}[1]{}

\section{General Quantum Operations}\label{sec:proofpreliminaries}
In order to prove some quantum-related properties, we need some additional definitions, which will be used in the rest of the appendices. We extend the notion of density operators and superoperators, in order to include also \emph{sub-probability distributions} and  \emph{trace-nonincreasing} superoperators.

A sub-probability distribution on a set $S$ is a function $\Delta : S \rightarrow [0,1]$ such that $\sum_{s\in S} \Delta(s) \leq 1$. 

\begin{definition}[Partial Density Operators]
Given a Hilbert Space $\hilbert$, to each subprobability distribution of states $\Delta$ we associate a partial density operator $\rho = \sum_i \Delta(\psi_i)\dket{\psi_i}$. We indicate with $\pdensity{\hilbert}$ the set of partial density operators of $\hilbert$.
\end{definition}

Notice that density operators and partial density operators are defined in the same way, but the latter can have trace less than one, since they come from sub-probability distributions.

\begin{definition}[Trace-nonincreasing superoperators]
A \emph{superoperator} $\mathcal{E}\colon \pdensity{\hilbert} \rightarrow \pdensity{\hilbert}$ is a function with a Kraus operator sum decomposition $\{E_i\}_i$, with $1 \leq i \leq \dim(\hilbert)$ such
that 
\begin{gather*} 
\sop{E}{\rho} = \sum_i E_i\rho E_i^\dag\ \text{ and }\ \sum_i E_i^\dag E_i \sqsubseteq I_\hilbert 
\end{gather*}
Where $A \sqsubseteq B$ means that $B - A$ is a positive semidefinite matrix (i.e., $\bra{\psi} (B - A) \ket{\psi} \geq 0$ for any $\ket{\psi}$),
and $I_{\hilbert}$ is the identity operator on $\hilbert$. We call $\soset{\hilbert}$ the set of (trace-nonincreasing) superoperators on $\hilbert$, and
$\tsoset{\hilbert} \subseteq \soset{\hilbert}$ the set of all
\emph{trace-preserving} superoperators, i.e.\ superoperators such that for any
$\rho \in \density{\hilbert}$: $\tr(\sop{E}{\rho}) = \tr(\rho)$.
\end{definition}

For a trace-nonincreasing superoperator $\mathcal{E}$ we have that $tr(\sop{E}{\rho}) \leq tr(\rho)$, while the superoperators used in the rest of the paper were \emph{trace-preserving superoperators} ($tr(\sop{E}{\rho}) = tr(\rho)$). Given a measurement in lqCCS $\meas{\tilde{x}}{y}.P$, where the symbol $M$ denotes measurement $\{M_m\}_m$, we define for each outcome a trace-nonincreasing superoperator 
\[\sop[\tilde{x}, m]{M}{\rho} = (M_m \otimes I) \rho (M_m^\dagger \otimes I)\]
where $I$ is the identity matrix on the qubits not in $\tilde{x}$. 

\begin{lemma}\label{ptrace drops sop}
Let $\rho \in \density{\hilbert_{\tilde{p}} \otimes \hilbert_{\tilde{q}}}$, with $\tilde{p} = p_0 \ldots p_{n-1}$ and $\tilde{q} = q_0 \ldots q_{m-1}$.
 Then, for any two superoperator $\mathcal{E}_{\tilde{p}} \in \soset{\hilbert_{\tilde{p}}}$
 and $\mathcal{F}_{\tilde{q}} \in \tsoset{\hilbert_{\tilde{q}}}$
\[
 \ptrace{q}{(\mathcal{E}_{\tilde{p}} \otimes \mathcal{F}_{\tilde{q}}) (\rho)} = \sop[\tilde{p}]{E}{\ptrace{q}{\rho}}
\]
\end{lemma}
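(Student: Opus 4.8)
The plan is to work entirely with Kraus decompositions and reduce the whole identity to the single algebraic fact that $\mathcal{F}_{\tilde q}$ is trace-preserving. First I would fix Kraus operators $\{E_i\}_i$ for $\mathcal{E}_{\tilde p}$ and $\{F_j\}_j$ for $\mathcal{F}_{\tilde q}$. By the definition of the tensor product of superoperators recalled in~\autoref{sec:bg}, the family $\{E_i \otimes F_j\}_{i,j}$ is a Kraus decomposition of $\mathcal{E}_{\tilde p} \otimes \mathcal{F}_{\tilde q}$, so that
\[
(\mathcal{E}_{\tilde p} \otimes \mathcal{F}_{\tilde q})(\rho) = \sum_{i,j} (E_i \otimes F_j)\,\rho\,(E_i^\dag \otimes F_j^\dag).
\]
Crucially, since $\mathcal{F}_{\tilde q}$ is trace-preserving we have $\sum_j F_j^\dag F_j = I_{\hilbert_{\tilde q}}$, whereas $\mathcal{E}_{\tilde p}$ is only assumed trace-nonincreasing; this asymmetry is exactly what the statement relies on. By linearity of the partial trace it then suffices to manipulate each summand and recollect.

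The computation rests on two elementary facts about $\ptrace{q}{\blank}$, both proved by writing $\rho = \sum_{a,b} \rho_{ab} \otimes \ketbra{a}{b}$, where $\{\ket{a}\}$ is an orthonormal basis of $\hilbert_{\tilde q}$ and the $\rho_{ab}$ are operators on $\hilbert_{\tilde p}$, so that $\ptrace{q}{\rho} = \sum_a \rho_{aa}$. The first fact is that operators acting only on $\tilde p$ factor through the partial trace, i.e.\ $\ptrace{q}{(E_i \otimes I)\,Y\,(E_i^\dag \otimes I)} = E_i\,\ptrace{q}{Y}\,E_i^\dag$; this is immediate from the block form, since $E_i \otimes I$ leaves the $\tilde q$ indices untouched. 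The second fact is that the $\tilde q$-side action of $\mathcal{F}_{\tilde q}$ collapses under the trace:
\[
\sum_j \ptrace{q}{(I \otimes F_j)\,\rho\,(I \otimes F_j^\dag)} = \sum_{a,b} \rho_{ab}\,\sum_j \mel{b}{F_j^\dag F_j}{a} = \sum_{a,b} \rho_{ab}\,\braket{b}{a} = \ptrace{q}{\rho},
\]
where the first equality uses the block decomposition together with $\tr(F_j \ketbra{a}{b} F_j^\dag) = \mel{b}{F_j^\dag F_j}{a}$, the second uses $\sum_j F_j^\dag F_j = I_{\hilbert_{\tilde q}}$, and the last collapses $\sum_{a,b}\rho_{ab}\braket{b}{a} = \sum_a \rho_{aa}$.

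To conclude I would factor $E_i \otimes F_j = (E_i \otimes I)(I \otimes F_j)$, apply the first fact to pull $E_i$ and $E_i^\dag$ outside the partial trace, and then sum over $j$ using the second fact:
\[
\ptrace{q}{(\mathcal{E}_{\tilde p} \otimes \mathcal{F}_{\tilde q})(\rho)} = \sum_i E_i \Big(\sum_j \ptrace{q}{(I \otimes F_j)\,\rho\,(I \otimes F_j^\dag)}\Big) E_i^\dag = \sum_i E_i\,\ptrace{q}{\rho}\,E_i^\dag = \sop[\tilde p]{E}{\ptrace{q}{\rho}},
\]
which is the claim. The only genuinely delicate step is the second fact: it is precisely where trace-preservation of $\mathcal{F}_{\tilde q}$ is indispensable, as without $\sum_j F_j^\dag F_j = I$ the inner sum would leave a residual operator on the $\tilde p$ system instead of reproducing $\ptrace{q}{\rho}$, and the identity would fail. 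Everything else is routine bookkeeping with the block decomposition and the bilinearity of the Kraus sums and the partial trace.
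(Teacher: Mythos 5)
Your proof is correct, and it reaches the identity by a related but genuinely distinct route. The paper's proof stays entirely at the superoperator level: it expands $\rho$ over a full product basis as $\sum_{ijkl}\rho_{ij,kl}\ketbra{ij}{kl}$, applies the defining property $\ptrace{q}{A \otimes B} = A\,\tr(B)$ of the partial trace on product operators, and then invokes trace preservation abstractly, via $\tr\big(\mathcal{F}_{\tilde q}(\ketbra{j}{l})\big) = \tr(\ketbra{j}{l})$, never touching Kraus operators. You instead descend to the Kraus level, use the paper's definition of $\mathcal{E}_{\tilde p}\otimes\mathcal{F}_{\tilde q}$ through the decomposition $\{E_i \otimes F_j\}_{i,j}$, keep the $\tilde p$-side data as abstract operator blocks $\rho_{ab}$ rather than basis coefficients, and reduce everything to the completeness relation $\sum_j F_j^\dag F_j = I_{\hilbert_{\tilde q}}$. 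Each version buys something: the paper's is shorter and purely structural, while yours makes explicit exactly where trace preservation enters and why nothing is required of $\mathcal{E}_{\tilde p}$ beyond admitting a Kraus decomposition (it may be merely trace-nonincreasing), an asymmetry the paper uses but leaves implicit. Your route also quietly fills a small elision in the paper's argument: trace preservation is defined in the appendix only on density operators, yet the paper applies it to the non-positive basis operators $\ketbra{j}{l}$; justifying that extension by linearity amounts precisely to the identity $\tr(\mathcal{F}_{\tilde q}(A)) = \tr(A)$ for arbitrary $A$, i.e.\ to $\sum_j F_j^\dag F_j = I$, which is the fact you use directly. All spaces are finite-dimensional and all sums finite, so the interchanges of summation in both computations are unproblematic.
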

\begin{proof}
  The density operator $\rho$ is described in the form $\rho = \sum_{ijkl} \rho_{ij,kl}\ketbra{ij}{kl}$, where $i,k$ are elements of some orthonormal basis of $\hilbert_{\tilde{p}}$ and $j,l$ of $\hilbert_{\tilde{q}}$
  \begin{align*}
    \ptrace{q}{(\mathcal{E}_{\tilde{p}} \otimes \mathcal{F}_{\tilde{p}})\rho} &= \sum_{ijkl}\rho_{ij,kl}\ptrace{q}{\mathcal{E}_{\tilde{p}}(\ketbra{i}{k}) \otimes \mathcal{F}_{\tilde{q}}(\ketbra{j}{l})} \\
    &= \sum_{ijkl}\rho_{ij,kl}\,\mathcal{E}_{\tilde{p}}(\ketbra{i}{k})\tr(\mathcal{F}_{\tilde{q}}\ketbra{j}{l}) \\
    &= \mathcal{E}_{\tilde{p}}\left(\sum_{ijkl}\rho_{ij,kl}\ketbra{i}{k}\tr(\ketbra{j}{l})\right) \\
    &= \mathcal{E}_{\tilde{p}}\left(\sum_{ijkl}\rho_{ij,kl}\ptrace{q}{\ketbra{ij}{kl}}\right) \\
    &= \mathcal{E}_{\tilde{p}}\left(\ptrace{q}{\sum_{ijkl}\rho_{ij,kl}\ketbra{ij}{kl}}\right)
    = \mathcal{E}_{\tilde{p}}(\ptrace{q}{\rho}) \qedhere
  \end{align*}
\end{proof}

\section{Proofs of Section~\ref{sec:lqCCS}}\label{sec:typeappendix}
\uniqueness*
\begin{proof}
	By induction on the derivations $\Sigma \vdash P$ and $\Sigma'
		\vdash P$. It is trivial to show that the derivation rules are
	non-overlapping, i.e.\ two distinct rules cannot be applied for the same
	process, thus the derivation tree must have the same rule applications.
	Only the \textsc{QSend} and \textsc{Par} rules can introduce quantum variables into the context.
	Since the \textsc{QSend} rule is completely determined by
	the syntax of the process, both derivations must introduce the same quantum
	variables in their respective contexts. Furthermore, for the \textsc{Par}
	rule, by induction it must hold that the two possibly smaller contexts of the subprocesses are equal
	between the two derivations, thus the resulting union is equal.
\end{proof}

\begin{lemma}\label{lem:quantumsubst}
	If $\Sigma \cup \{x\} \vdash P$ and $v \not\in \Sigma$ then $\Sigma \cup \{v\} \vdash P[\sfrac{v}{x}]$.
\end{lemma}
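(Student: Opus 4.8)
The plan is to proceed by induction on the derivation of $\Sigma \cup \{x\} \vdash P$ which, by \autoref{thm:uniqueness}, is determined by the syntactic shape of $P$; the induction therefore amounts to a case analysis on the last applied rule. Throughout, $x$ is necessarily a quantum variable (it inhabits a context $\Sigma \cup \{x\} \subseteq \qtype$) and $v \in \qtype$ is a qubit name, so the substitution $[\sfrac{v}{x}]$ only renames quantum names and never touches classical subterms. A useful auxiliary observation, itself provable by a straightforward induction on the typing, is that whenever $\Sigma' \vdash Q$ the free quantum names of $Q$ coincide with $\Sigma'$; in particular a name not listed in the context of a (sub)process cannot occur free in it, so substituting for such a name acts as the identity.

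For the base cases I would treat \textsc{Nil} and \textsc{QSend}, where linearity forces $x$ to be actually consumed. In \textsc{Nil} we have $P = \nil_{\tilde{e}}$ with $\tilde{e}$ enumerating $\Sigma \cup \{x\}$ exactly once; replacing $x$ by $v$ in $\tilde{e}$ yields a tuple enumerating $\Sigma \cup \{v\}$ exactly once, whence $\Sigma \cup \{v\} \vdash \nil_{\tilde{e}[\sfrac{v}{x}]}$. In \textsc{QSend} the context $\{e\}$ must equal $\{x\}$, forcing $e = x$, and then $(c!x)[\sfrac{v}{x}] = c!v$ is typed by $\{v\}$. The rule \textsc{CSend} cannot occur, since its context is empty and therefore cannot contain $x$.

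The prefix cases \textsc{Tau}, \textsc{QOp}, \textsc{QMeas} and \textsc{CRecv} follow directly from the induction hypothesis applied to the unique premise $\Sigma \cup \{x\} \vdash P'$, noting that substitution commutes with the prefix and, in \textsc{QOp} and \textsc{QMeas}, merely rewrites the argument tuple $\tilde{e}$ and the set $E$ by renaming $x$ to $v$, so that the side conditions $E \subseteq \Sigma \cup \{v\}$ and $\tilde{e} \in \tilde{E}$ are preserved. For \textsc{QRecv} the bound quantum variable $x'$ may, by $\alpha$-conversion, be assumed distinct from both $x$ and $v$; the induction hypothesis applied to $(\Sigma \cup \{x'\}) \cup \{x\} \vdash P'$ (legitimate since $v \notin \Sigma \cup \{x'\}$) then yields the claim after re-applying \textsc{QRecv}. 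The remaining operators \textsc{Restrict}, \textsc{Sum} and \textsc{ITE} preserve the context and are handled uniformly by the induction hypothesis on their premises (the guard of \textsc{ITE} is boolean and hence unaffected by $[\sfrac{v}{x}]$).

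The main obstacle is the \textsc{Par} case $P = P_1 \parallel P_2$, typed from $\Sigma_1 \vdash P_1$ and $\Sigma_2 \vdash P_2$ with $\Sigma_1 \cap \Sigma_2 = \emptyset$ and $\Sigma_1 \cup \Sigma_2 = \Sigma \cup \{x\}$. By disjointness $x$ belongs to exactly one component, say $\Sigma_1 = \Sigma_1' \cup \{x\}$ with $x \notin \Sigma_1'$ and $x \notin \Sigma_2$; hence $\Sigma = \Sigma_1' \cup \Sigma_2$ and, since $v \notin \Sigma$, also $v \notin \Sigma_1'$ and $v \notin \Sigma_2$. The induction hypothesis gives $\Sigma_1' \cup \{v\} \vdash P_1[\sfrac{v}{x}]$, while the auxiliary observation applied to $\Sigma_2 \vdash P_2$ (using $x \notin \Sigma_2$) gives $P_2[\sfrac{v}{x}] = P_2$. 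Freshness of $v$ guarantees $(\Sigma_1' \cup \{v\}) \cap \Sigma_2 = \emptyset$, so \textsc{Par} recombines the two judgments into $(\Sigma_1' \cup \{v\}) \cup \Sigma_2 \vdash (P_1 \parallel P_2)[\sfrac{v}{x}]$, and $(\Sigma_1' \cup \{v\}) \cup \Sigma_2 = \Sigma \cup \{v\}$ as required. The delicate points here are precisely tracking which component carries $x$ and verifying that the freshness hypothesis $v \notin \Sigma$ is exactly what reinstates the disjointness side condition of \textsc{Par}.
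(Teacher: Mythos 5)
Your proof is correct and follows essentially the same route as the paper's: induction on the typing derivation, with \textsc{QSend}, \textsc{QOp}/\textsc{QMeas} and especially \textsc{Par} as the key cases, where the freshness hypothesis $v \not\in \Sigma$ is what reinstates the disjointness side condition. The only difference is presentational: you make explicit (via your auxiliary observation that the context determines the free quantum names, so $P_2[\sfrac{v}{x}] = P_2$ when $x \notin \Sigma_2$) and spell out cases such as \textsc{QRecv} with $\alpha$-conversion, which the paper handles implicitly or leaves to the reader.
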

\begin{proof}
	By induction on the derivation of $\Sigma \cup \{x\} \vdash P$.
	Let us analyse the interesting cases: For the \textsc{QOp} rule it must be
	that $P = \sop{E}{\tilde{x}}.Q$ for some process $Q$. By induction
	hypothesis, it holds that $\Sigma \cup \{v\} \vdash Q$, but $v
		\not\in \tilde{x}$ by the hypothesis $v \not\in \Sigma$, thus we can
	apply the \textsc{QOp} rule. The same line of reasoning is valid for the
	\textsc{QMeas} rule. The \textsc{QSend} rule is also guaranteed by the $v
		\not\in \Sigma$ requirement. Finally, the derivation of \textsc{Par}
	imposes that only one of the components, w.l.o.g. $P_i$, contains the
	variable $x$ in its quantum environment $\Sigma_i$. Thus, by induction we
	obtain $\Sigma_i[\sfrac{v}{x}] \vdash P_i$. While for the other
	components, there is no change since they do not contain the variable $x$.
	However, since $v \not\in \Sigma$, we can conclude that all smaller
	environments are still pairwise distinct, thus we can apply the
	\textsc{Par} rule again.
\end{proof}

\typingpreservation*
\begin{proof}

	By induction on the transition relation $\longrightarrow$ rules. Most
  rules follow trivially by induction on the premises, in particular no rules affect $\Sigma_\rho$. For measurements, with
	the \textsc{QMeas} rule, a substitution property for classical values is
	required, however, it is trivial in the standard type system for naturals and
	booleans. For transitions with the \textsc{Congr} rule, it must be that
  $\Sigma_P \vdash Q$ by definition of $\equiv$. By induction $(\Sigma_\rho, \Sigma_P)
    \vdash \Delta$ and, again, by definition of $\equiv$ it must hold that
  $(\Sigma_\rho, \Sigma_P) \vdash \Delta'$. The remaining interesting case is the
	\textsc{Reduce} rule. Since $P = (c!v + R) \parallel ((c?x.P') + Q)$ then if $c$
	is not a quantum channel $\Sigma \vdash P'$. By classical substitution results
	it holds that $\Sigma \vdash P'[\sfrac{v}{x}]$. If $c$ is a quantum channel
	$\Sigma \cup \{x\} \vdash P'$. By \autoref{lem:quantumsubst} it holds that
	$\Sigma \vdash P'[\sfrac{v}{x}]$, because $v$ cannot be in $\Sigma$ due to the
	application of the \textsc{Par} rule for the $\parallel$ operator.
\end{proof}

\section{Proofs of Section~\ref{behaviouralAssessment}}\label{sec:assessmentappendix}
\subsection{Proof of \autoref{thm:probsmallerthanconst}}
To prove \autoref{thm:probsmallerthanconst} we need to introduce tagged processes. The idea is that whenever a distribution $O[\Delta]$ can perform an indexed transition $\longsquiggly_\pi$, then the tagged version $\ptag{O[\Delta]}$ express a barb named $\pi$ with probability one. This allows us to interpret the enhanced semantics as standard probabilistic reductions, and show that constrained saturated bisimilarity does not have greater discriminating power (thanks to indices) with respect to saturated bisimilarity.

\begin{definition}[Tagging operation]
We define the $\ptag{\blank}_k^\pi$ operation, which tags (the observer in a) configuration with barbs in $\{\ell, r, \iota\}^*$. $k$ is the starting string of the barb, and $\pi$ is used when we want to add an additional $\iota$ in front of the barbs starting with $\pi$
\begin{align*}
	\ptag{\conf{\rho, P, R}}^\pi_k &= \conf{\rho, P \parallel \ptag{R}^\pi_k} 
  & & & \ptag{\bot}^\pi_k &= \bot 
  \\
	\ptag{P \parallel Q}^{\ell\pi}_k &= \ptag{P}^\pi_{k\ell} \parallel \ptag{Q}_{kr} 
	& & & \ptag{P \parallel Q}^{r\pi}_k &= \ptag{P}_{k\ell} \parallel \ptag{Q}^{\pi}_{kr} 	 
	\\
	\ptag{P}^\varepsilon_k &= \ptag{P}_{k\barr}     
  & & & \ptag{\nil_{\tilde e}}_k &= \nil_{\tilde e}
  \\
  \ptag{\sop{E}{\tilde{x}}.P}_k &= \sop{E}{\tilde{x}}.\ptag{P}_{k\iota} + k!0 
  & & & \ptag{\meas{\tilde{x}}{y}.P}_k &= \meas{\tilde{x}}{y}.\ptag{P}_{k\iota} + k!0 
  \\
  \ptag{c?x.P}_k &= c?x.\ptag{P}_{k\iota} + k!0 
  & & & \ptag{c!e}_k &= c!e + k!0 
  \\
  \ptag{P + Q}_k &= \ptag{P}_k + \ptag{Q}_k 
  & & & \ptag{\ite{e}{P}{Q}}_k &= \ite{e}{\ptag{P}_k}{\ptag{Q}_k}
\end{align*}
We extend the tagging operation via linearity: $\ptag{\Delta_1 \psum{p} \Delta_2}_k^\pi = \ptag{\Delta_1}_k^\pi \psum{p} \ptag{\Delta_2}_k^\pi$. We will often write $\ptag{\blank}$ instead of $\ptag{\blank}_\varepsilon$.
\end{definition}

\begin{definition}[Tagged transitions]
We define the set of tags a configuration \[T(\iconf) = \{\pi \in \{l, r, \barr\}^* \mid \iconf\downarrow_\pi\}\]
We define two kinds of tagged transition from (tagged) configurations to distributions
	\begin{align*}
	\ptag{\iconf} \rightarrow \Delta : \diamond \qquad &\text{iff } T(\ptag{\iconf}) = T(\ptag{\iconf'}) \ \forall \ptag{\iconf'} \in \lceil \Delta \rceil \\
	\ptag{\iconf} \rightarrow \Delta : \lambda \qquad &\text{iff } \lambda \in T(\ptag{\iconf}) \text{ and } \lambda \not\in T(\ptag{\iconf'}),  T(\ptag{\iconf})\setminus\{\lambda\} \subseteq T(\ptag{\iconf'}) \ \forall \ptag{\iconf'} \in \lceil \Delta \rceil
	\end{align*}
	And we extend this notion to distributions via linearity
	\[
		\ptag{\Delta_1} \psum{p} \ptag{\Delta_2} \rightarrow \Delta'_1 \psum{p} \Delta'_2 : \pi  \text{ if and only if } \ptag{\Delta_i} \rightarrow \Delta_i : \pi \text{ for } i \in {1, 2}
	\]
\end{definition}

We now prove that the indexed transitions of $\conf{\rho, P, R}$ are essentially the same of the tagged transitions of $\ptag{\conf{\rho, P, R}}$.

\begin{lemma}\label{interpret_diamond} 
	For any observer $R$ we have
	\[
	\forall k \ \left( 
	\conf{\rho, P, R} \longsquiggly_\diamond \Delta 	
	\Longleftrightarrow	
	\conf{\rho, P \parallel \ptag{R}_k}  \rightarrow \Delta' : \diamond	\text{ with } \ptag{\Delta}_k \equiv \Delta'
	\right)\]
\end{lemma}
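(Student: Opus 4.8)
The plan is to prove the two implications separately, exploiting the observation that a $\diamond$-labelled transition in the enhanced semantics is produced \emph{only} by the rule \textsc{Process} (possibly below an application of \textsc{Congr}): every other rule carries either a fixed index $\varepsilon$ (\textsc{OQOp}, \textsc{OQMeas}, \textsc{Input}, \textsc{Output}) or a compound index $\ell\cdot\lambda$ or $r\cdot\lambda$ with $\lambda\neq\diamond$ (\textsc{ParL}, \textsc{ParR}). Hence $\conf{\rho, P, R} \longsquiggly_\diamond \Delta$ holds exactly when $\conf{\rho, P} \longrightarrow \Delta_0$ in the standard process semantics and $\Delta$ is obtained from $\Delta_0$ by reattaching the untouched observer $R$ to every configuration of its support. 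On the tagged side, the annotation ${:}\,\diamond$ requires precisely that the standard transition preserve the tag set $T(\blank)$. The proof therefore amounts to matching pure process moves on the left with tag-preserving moves on the right. Throughout I assume, as is standard, that the special channels in $\{\ell, r, \barr\}^*$ introduced by $\ptag{\blank}$ are fresh, i.e.\ disjoint from the channels of $P$ and $R$, so that the barbs counted by $T$ come exclusively from $\ptag{R}_k$.

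For the forward direction, from $\conf{\rho, P, R} \longsquiggly_\diamond \Delta$ I extract through \textsc{Process} a standard transition $\conf{\rho, P} \longrightarrow \Delta_0$. Applying the standard rule \textsc{Par} with $\ptag{R}_k$ in parallel yields $\conf{\rho, P \parallel \ptag{R}_k} \longrightarrow \Delta_0 \parallel \ptag{R}_k$. Since $\ptag{R}_k$ is carried along unchanged, its barbs, and thus $T$, are identical in the source and in every configuration of the resulting support, so this is a $\diamond$ tagged transition. Finally, unfolding the definition of $\ptag{\blank}_k$ shows $\ptag{\Delta}_k \equiv \Delta_0 \parallel \ptag{R}_k$, because tagging replaces each $\conf{\rho_i, P_i, R}$ occurring in $\Delta$ by $\conf{\rho_i, P_i \parallel \ptag{R}_k}$.

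For the backward direction I start from a standard transition $\conf{\rho, P \parallel \ptag{R}_k} \longrightarrow \Delta'$ annotated with $\diamond$, i.e.\ preserving $T$, and case-split on the top-level rule used (\textsc{Par} on the left, \textsc{Par} on the right, or \textsc{Reduce}, modulo \textsc{Congr}). The crucial claim is that every transition in which $\ptag{R}_k$ participates (a move of $\ptag{R}_k$ alone, or a synchronisation between $P$ and $\ptag{R}_k$) strictly shrinks the tag set and hence cannot be $\diamond$. Indeed, by construction each top-level prefix of $\ptag{R}_k$ sits in a summand of the form $\mu.\ptag{\blank}_{k'\barr} + k'!0$, whose companion $k'!0$ contributes the barb $k'$ to $T$; resolving that prefix collapses the sum, deleting $k'$, while the residual $\ptag{\blank}_{k'\barr}$ only re-introduces strictly longer tags prefixed by $k'\barr \neq k'$ (and the parallel clauses of the tagging merely prepend $\ell$ or $r$, keeping the component tags distinct). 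Consequently a $\diamond$ tagged transition must be a pure process move $\conf{\rho, P} \longrightarrow \Delta_0$ with $\Delta' = \Delta_0 \parallel \ptag{R}_k$, and rule \textsc{Process} then delivers $\conf{\rho, P, R} \longsquiggly_\diamond \Delta$ with $\ptag{\Delta}_k \equiv \Delta'$.

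The main obstacle is exactly this structural claim, that no transition touching $\ptag{R}_k$ can leave $T$ invariant, which I would establish by induction on the derivation of the standard transition, tracking how the tag barbs $k'!0$ are distributed over the syntactic structure of $\ptag{R}_k$ and consumed whenever a choice is resolved. A secondary technicality is the handling of \textsc{Congr}: I would first verify that tagging is compatible with the congruences, namely that $R \equiv_o R'$ implies $\ptag{R}_k \equiv \ptag{R'}_k$ (and likewise for the linear lifting to distributions), so that a congruence step can be transported between the two semantics without altering either the reached distribution up to $\equiv$ or the tag set $T$.
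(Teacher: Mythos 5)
Your proof is correct and takes essentially the route the paper intends: the paper dismisses this lemma as following ``easily from the definition of $\diamond$-tagged transitions'', and your elaboration --- only \textsc{Process} can produce the index $\diamond$, any move involving $\ptag{R}_k$ necessarily consumes a summand $k'!0$ and re-introduces only strictly longer tags prefixed by $k'\barr$, hence strictly alters $T$ and cannot be annotated $\diamond$, and $\equiv_o$-steps on observers transport to $\equiv$-steps on their taggings --- is exactly the content needed to justify that claim. Your explicit freshness assumption on the tag channels in $\{\ell, r, \barr\}^*$ is likewise implicit in the paper's construction and is indeed required for $T$ to count barbs coming from $\ptag{R}_k$ alone.
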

\begin{proof}
The proof follows easily from the definition of $\diamond$-tagged transitions.
\end{proof}

\begin{lemma}\label{interpret_fw} %
  For any observer $R$ and index $\lambda \neq \diamond$, we have
  \[
    \forall k \ldotp \conf{\rho, P, R} \longsquiggly_\lambda \Delta \Longrightarrow
    \conf{\rho, P \parallel \ptag{R}_k} \to \ptag{\Delta}^\lambda_k : k\lambda
  \]
\end{lemma}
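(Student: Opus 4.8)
The plan is to argue by rule induction on the derivation of $\conf{\rho, P, R} \longsquiggly_\lambda \Delta$. Since the index $\lambda$ is different from $\diamond$, the rule $\rulename{Process}$ (whose only index is $\diamond$) cannot close the derivation, so the cases to treat are the four base rules $\rulename{OQOp}$, $\rulename{OQMeas}$, $\rulename{Input}$, $\rulename{Output}$ (each with index $\varepsilon$), together with $\rulename{ParL}$, $\rulename{ParR}$ and $\rulename{Congr}$. A fact used throughout is that $T(\cdot)$ only records the \emph{tag} barbs, i.e.\ those on the auxiliary channels in $\{\ell, r, \iota\}^*$; ordinary communication barbs play no role in the $: k\lambda$ condition, which is exactly what lets the base rules $\rulename{Input}$ and $\rulename{Output}$ go through even though they consume a visible output. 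Before the induction I would also record two routine facts: tagging commutes with the substitution of classical values, $\ptag{R[\sfrac{v}{x}]}_k = \ptag{R}_k[\sfrac{v}{x}]$ (tagging only inserts constants $k'!0$ on tag channels and recurses structurally), and it maps $\equiv_o$-congruent observers to $\equiv$-congruent processes, so that it is compatible with $\rulename{Congr}$.

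For the base cases $\lambda = \varepsilon$, so the goal is $\conf{\rho, P \parallel \ptag{R}_k} \to \ptag{\Delta}_{k\iota} : k$. The mechanism is uniform: each active observer prefix is tagged with an extra alternative, as in $\ptag{\sop{E}{\tilde{q}}.R}_k = \sop{E}{\tilde{q}}.\ptag{R}_{k\iota} + k!0$, $\ptag{\meas{\tilde{q}}{y}.R}_k = \meas{\tilde{q}}{y}.\ptag{R}_{k\iota} + k!0$, $\ptag{c?x.R}_k = c?x.\ptag{R}_{k\iota} + k!0$ and $\ptag{c!v}_k = c!v + k!0$. The standard semantics can then fire exactly the same superoperator, measurement or process/observer synchronisation (via $\rulename{QOp}$, $\rulename{QMeas}$ or $\rulename{Reduce}$) on the non-$k!0$ summand. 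The target coincides with $\ptag{\Delta}^\varepsilon_k = \ptag{\Delta}_{k\iota}$ precisely because the subscript-tagging of the consumed prefix has already appended the one $\iota$ that the $\varepsilon$-superscript rule demands; for $\rulename{OQMeas}$ and $\rulename{Input}$ this step uses the substitution fact and linearity of tagging over the resulting distribution. The move is a $k$-tagged transition because the only tag barb of the source is $k$ (from the $+k!0$ summand), it disappears in the target (whose tag barbs all begin with $k\iota$), and no other tag barb is present to be lost. The cases $\rulename{Input}$ and $\rulename{Output}$ require in addition some structural-congruence bookkeeping to match the restriction prefix $\setminus D$ and the sum form of the redex against the $\rulename{Reduce}$ rule.

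For $\rulename{ParL}$ we have $\lambda = \ell\lambda'$ from a premise $\conf{\rho, P, R} \longsquiggly_{\lambda'} \Delta'$, and plain tagging distributes over parallel as $\ptag{R \expar S}_k = \ptag{R}_{k\ell} \parallel \ptag{S}_{kr}$. Invoking the induction hypothesis at starting string $k\ell$ yields $\conf{\rho, P \parallel \ptag{R}_{k\ell}} \to \ptag{\Delta'}^{\lambda'}_{k\ell} : k\ell\lambda'$. I then carry the idle sibling $\ptag{S}_{kr}$ along with the standard $\rulename{Par}$ rule; since every tag barb of $\ptag{S}_{kr}$ begins with $kr$, these barbs are present both before and after the step and are disjoint from $k\ell\lambda'$, so the composite step is again a $k\ell\lambda'$-tagged transition. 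Its target $\ptag{\Delta'}^{\lambda'}_{k\ell} \parallel \ptag{S}_{kr}$ is literally $\ptag{\Delta' \expar S}^{\ell\lambda'}_k$ by the rule $\ptag{P \parallel Q}^{\ell\pi}_k = \ptag{P}^\pi_{k\ell} \parallel \ptag{Q}_{kr}$. The case $\rulename{ParR}$ is symmetric (splitting to the right, prefix $kr$), and $\rulename{Congr}$ follows from the congruence fact above together with the standard $\rulename{Congr}$ rule.

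The step I expect to be the main obstacle is the parallel one, for two intertwined reasons. First, one must check that the unique extra $\iota$ produced by the $\varepsilon$-superscript rule is deposited on exactly the leaf prefix that the transition consumes, so that the superscript-tagged target $\ptag{\Delta' \expar S}^{\ell\lambda'}_k$ really equals the configuration produced by the standard $\rulename{Par}$ rule. Second, one must track the tag barbs carefully enough to see that the $\ell$-versus-$r$ branching keeps the consumed barb $k\lambda$ separated from the barbs of the untouched sibling, so that it is consumed and nothing else is. Phrasing this disjointness once, as an auxiliary lemma about composing a tagged transition in parallel with an observer tagged on a disjoint prefix, is what keeps the inductive step short.
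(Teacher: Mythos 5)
Your proposal is correct and follows essentially the same route as the paper's proof: induction on the derivation of $\longsquiggly_\lambda$, the $+\,k!0$ summand mechanism for the four base rules (with the same substitution fact $\ptag{R[\sfrac{v}{x}]}_k = \ptag{R}_k[\sfrac{v}{x}]$ and the same structural-congruence bookkeeping for $\setminus D$ in \rulename{Input}/\rulename{Output}), and the parallel case discharged by invoking the induction hypothesis at the extended prefix $k\ell$ while checking that the idle sibling's tag barbs, all prefixed by $kr$, are preserved and disjoint from the consumed barb. Your observation that $T(\cdot)$ records only tag barbs, so that consuming an ordinary output in \rulename{Input}/\rulename{Output} is harmless, is exactly the implicit freshness assumption the paper relies on.
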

\begin{proof}
  We will proceed by induction on the transition $\longsquiggly_\lambda$.

  If rule \textsc{OQOp} is valid, then $R \equiv \sop{E}{\tilde{q}}.R_1$, we
  have
  \[
    \iconf = \conf{\rho, P, \sop{E}{\tilde{q}}.R_1} \longsquiggly_\varepsilon
    \sconf{\mathcal{E}_{\tilde{q}}(\rho), P, R_1} = \Delta
  \]
  and
  \[
    \ptag{\iconf}_k =
    \conf{\rho, P \parallel (\sop{E}{\tilde{q}}.\ptag{R_1}_{k\barr} +
    k!0)}
    \to \sconf{\mathcal{E}_{\tilde{q}}(\rho), P \parallel \ptag{R_1}_{k\barr}} =
    \ptag{\Delta}^\varepsilon_k
  \]
  We must further check that:
  $\ptag{\iconf}_k \to \ptag{\Delta}^\varepsilon_k :
  k$. The first condition, $k \in T(\ptag{\iconf}_k)$, is true by $k!0$.
  The second, $k \not\in T(\iconf')$, is true because any new barb expressed by $\ptag{R_1}_{k\barr}$ must start with
  $k\barr$. The third condition, $T(\ptag{\iconf}_k) \setminus \{k\} \subseteq
  T(\ptag{\Delta}^\varepsilon_k)$, is also trivially true since the only syntactic difference between the $\iconf$ and
  $\Delta$ is the barb-expressing $k!0$.

  If rule \textsc{OQMeas} is valid, then $R \equiv \meas{\tilde{q}}{y}.R_1$, we
  have
  \[
    \iconf = \conf{\rho, P, \meas{\tilde{q}}{y}.R_1} \longsquiggly_\varepsilon
    \sum_{i=0}^{|M| - 1}\distelem{p_m}{\sconf{\frac{\rho_m}{p_m}, P, R_1[\sfrac{m}{y}]}} = \Delta
  \]
  and, noting as before that $\ptag{R[\sfrac{v}{x}]}^\lambda_k =
  \ptag{R}^\lambda_k[\sfrac{v}{x}]$
  \[
    \ptag{\iconf}_k
    = \conf{\rho, P \parallel (\meas{\tilde{q}}{y}.\ptag{R_1}_{k\barr} +
    k!0)}
    \to \sum_{i=0}^{|M| - 1}\distelem{p_m}{\sconf{\frac{\rho_m}{p_m}, P \parallel
    (\ptag{R_1}_{k\barr}[\sfrac{m}{y}] + k!0)}} =
    \ptag{\Delta}^\varepsilon_k
  \]
  $\ptag{\iconf}_k \to \ptag{\Delta}^\varepsilon_k :
  k$ holds as for the \textsc{OQOp} noting that each element in
  $\ptag{\Delta}^\varepsilon_k$ has syntactically the same sendings.

  If rule \textsc{Input} is valid, then $R \equiv c?x.R_1 + R_2$, we have
  \[
    \iconf =
    \conf{\rho, (c!v + P_1) \parallel P_2 \setminus D, c?x.R_1 + R_2} 
    \longsquiggly_\varepsilon \sconf{\rho, P_2 \setminus D, R_1[\sfrac{v}{x}]}
    = \Delta
  \]
  and
  \begin{align*}
    \ptag{\iconf}_k
    &= \conf{\rho, ((c!v + P_1) \parallel P_2 \setminus D) \parallel (c?x.\ptag{R_1}_{k\barr} + k!0 + \ptag{R_2}_k)} \\
    &\equiv \conf{\rho, ((c!v + P_1) \parallel P_2 \parallel (c?x.\ptag{R_1}_{k\barr} + k!0 + \ptag{R_2}_k)) \setminus D} \\
    &\to \sconf{\rho, (P_2 \parallel \ptag{R_1}_{k\barr}[\sfrac{v}{x}]) \setminus D} \\
    &\equiv \sconf{\rho, (P_2 \setminus D) \parallel \ptag{R_1[\sfrac{v}{x}]}_{k\barr}}
    = \ptag{\Delta}^\varepsilon_k
  \end{align*}
  $\ptag{\iconf}_k \to \ptag{\Delta}^\varepsilon_k :
  k$ holds trivially as for the \textsc{OQOp}.

  If rule \textsc{Output} is valid, then $R \equiv c!v$, we have
  \[
    \iconf =
    \conf{\rho, (c?x.P_1) + P_2) \parallel P_3 \setminus D, c!v} 
    \longsquiggly_\varepsilon \sconf{\rho, (P_1[\sfrac{v}{x}] \parallel P_3) \setminus D, \nil}
    = \Delta
  \]
  and
  \begin{align*}
    \ptag{\iconf}_k
    &= \conf{\rho, (((c?x.P_1) + P_2) \parallel P_3 \setminus D) \parallel (c!v + k!0)} \\
    &\equiv \conf{\rho, (((c?x.P_1) + P_2) \parallel P_3 \parallel (c!v + k!0)) \setminus D} \\
    &\to \sconf{\rho, (P_1[\sfrac{v}{x}] \parallel P_3 \parallel \nil) \setminus D} \\
    &\equiv \sconf{\rho, (P_1[\sfrac{v}{x}] \parallel P_3 \setminus D) \parallel \nil}
    = \ptag{\Delta}^\varepsilon_k
  \end{align*}
  $\ptag{\iconf}_k \to \ptag{\Delta}^\varepsilon_k :
  k$ holds trivially as for the \textsc{QInput}.

  If rule \textsc{ParL} is valid, then $R \equiv R_1 \parallel R_2$, we have
  \[
    \iconf =
    \conf{\rho, P, R_1 \parallel R_2}
    \longsquiggly_{\ell\lambda} \Theta \parallel R_2
    = \Delta
  \]
  By inductive hypothesis, for any $k$
  \[
    \conf{\rho, P, R_1} \longsquiggly_\lambda \Theta \Longrightarrow \conf{\rho, P \parallel \ptag{R_1}_k} \to \ptag{\Theta}^\lambda_k : k\lambda
  \]
  and
  \begin{align*}
    \ptag{\iconf}_k
    = \conf{\rho, P \parallel \ptag{R_1}_{k\ell} \parallel \ptag{R_2}_{kr}}
    \to \ptag{\Theta}^\lambda_{k\ell} \parallel \ptag{R_2}_{kr}
    = \ptag{\Delta}^{\ell\lambda}_k
  \end{align*}
  $\ptag{\iconf}_k \to \ptag{\Delta}^{\ell\lambda}_k : k\ell\lambda$ holds by inductive hypothesis by setting $k = k\ell$.
  The case for \textsc{ParR} is analogous to the case for \textsc{ParL}.

  Finally, the \textsc{Congr} rule is simply resolved by induction, since its trivial to show that the congruence relation does not change the barbs expressed by a distribution.
\end{proof}

\begin{lemma}\label{interpret_bw} %
	For any observer $R$ and index $\lambda \neq \diamond$, we have
	\[\forall k \ \left( \conf{\rho, P \parallel \ptag{R}_k}  \rightarrow \Delta : k\lambda
	\Longrightarrow	
	\conf{\rho, P, R} \longsquiggly_\lambda \Delta' \text{ with } \ptag{\Delta'}_k^{\lambda} \equiv \Delta
	\right)\]
\end{lemma}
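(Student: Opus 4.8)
The plan is to establish the converse of \autoref{interpret_fw} by the same inductive strategy, now read backwards. I would proceed by induction on the index $\lambda \in \{\ell, r\}^*$, which (as in the forward direction) records how deep into the parallel structure of the observer $R$ the fired leaf action sits. The guiding observation is that $T(\cdot)$ collects only the \emph{immediately available} tag-sends: every leaf of $\ptag{R}_k$ is guarded by a summand $+\,k'!0$ whose channel $k'$ is a fresh string over $\{\ell, r\}$, pairwise distinct across positions, while the continuation tags (carrying $\barr$) stay guarded until the action fires. Hence a standard transition labelled $:\,k\lambda$, which by definition consumes exactly the tag $k\lambda$ and preserves every other top-level tag, pins down a unique leaf of $\ptag{R}_k$ and forces the step to be the tagged image of a single enhanced move.

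For the base case $\lambda = \varepsilon$ the consumed tag is $k$. Here $R$ cannot be a top-level parallel, since the parallel clause of the tagging emits only tags prefixed by $k\ell$ or $kr$, so a bare $k$ would not be available; thus $\ptag{R}_k$ is a single guarded action in a sum with the summand $+\,k!0$, and the standard step must resolve that sum, firing the action and discarding $k!0$. Each shape of the guard matches exactly one enhanced rule of index $\varepsilon$: a quantum operation to \rulename{OQOp}, a measurement to \rulename{OQMeas}, a reception to \rulename{Input}, and a send to \rulename{Output}; in the two communication cases one threads the restriction set $D$ and the synchronising process precisely as in the forward proof. In each case I would then verify the tagging identity $\ptag{\Delta'}^\varepsilon_k \equiv \Delta$, which reduces to $\ptag{R'}^\varepsilon_k = \ptag{R'}_{k\barr}$: the continuation reappears tagged with the extra $\barr$, matching the standard residual.

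For the inductive step $\lambda = \ell\lambda'$ (the case $r\lambda'$ being symmetric), the availability of a tag prefixed by $k\ell$ forces $R \equiv_o R_1 \expar R_2$, so that $\ptag{R}_k = \ptag{R_1}_{k\ell} \parallel \ptag{R_2}_{kr}$. Since $k\ell\lambda'$ is prefixed by $k\ell$ while every tag of $\ptag{R_2}_{kr}$ is prefixed by $kr$, the preservation condition forces the active part to be $P \parallel \ptag{R_1}_{k\ell} = \ptag{\conf{\rho, P, R_1}}_{k\ell}$, with $\ptag{R_2}_{kr}$ a passive \rulename{Par}-context. I would invoke the induction hypothesis with starting string $k\ell$ to obtain $\conf{\rho, P, R_1} \longsquiggly_{\lambda'} \Delta'_1$ with $\ptag{\Delta'_1}^{\lambda'}_{k\ell} \equiv \Delta_1$, lift it through \rulename{ParL} to $\conf{\rho, P, R_1 \expar R_2} \longsquiggly_{\ell\lambda'} \Delta'_1 \expar R_2$, and recompose the tagging via the defining clause $\ptag{P \parallel Q}^{\ell\pi}_k = \ptag{P}^\pi_{k\ell} \parallel \ptag{Q}_{kr}$. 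The \rulename{Congr} case is discharged as in \autoref{interpret_fw}, using that $\equiv$ and $\equiv_o$ leave the set of expressed tags unchanged and commute with the tagging.

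The main obstacle I anticipate is the localisation argument underlying every case: proving that consuming precisely $k\lambda$, and no other immediately available tag, leaves the standard transition no freedom beyond reproducing one enhanced move. This is most delicate in the \rulename{Input}/\rulename{Output} cases, where $P$ synchronises with the tagged observer and where a sum of receptions places several $+\,k!0$ summands sharing the same $k$ on sibling branches. There I must combine the observer well-formedness constraint (no two receptions on the same channel within a sum) with the fact that firing any branch of such a sum consumes the shared tag $k$, while the deeper tags of the discarded branches were never top-level and so their disappearance does not violate the preservation condition $T(\ptag{\iconf})\setminus\{k\lambda\} \subseteq T(\iconf')$.
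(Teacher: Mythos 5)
Your proposal is correct and follows essentially the same route as the paper's proof: the paper inducts on the syntax of $R$, which coincides with your induction on the index $\lambda$ since the index length tracks the parallel nesting of the observer, and your case analysis (a single tagged guarded summand for $\lambda = \varepsilon$ covering \rulename{OQOp}/\rulename{OQMeas}/\rulename{Input}/\rulename{Output}, the \rulename{ParL}/\rulename{ParR} decomposition for $\lambda = \ell\lambda'$ or $r\lambda'$ with the preservation condition ruling out observer--observer synchronisation because that would consume two tags, and invariance of tags under $\equiv_o$ for the congruence and if-then-else cases) matches the paper's argument case for case. The localisation step you flag as the main obstacle --- that consuming exactly the tag $k\lambda$ while preserving all other top-level tags pins the transition to the tagged image of a single enhanced move --- is precisely the paper's central observation too, so there is no gap.
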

\begin{proof}

Note that any transition $\ptag{\mathcal{C}}_k = \conf{\rho, P \parallel \ptag{R}_k}  \rightarrow \Delta : k\lambda$ must involve $R$, because a transition modifying only $P$ would not be tagged as $k\lambda$. We will then proceed on induction on the syntax of $R$.

If $R = \nil_{\tilde{e}}$, there are no $k\lambda$-tagged transitions going out from $\ptag{\conf{\rho, P\parallel \nil_{\tilde{e}}}} =  \conf{\rho, P\parallel \nil_{\tilde{e}}}$.

If $R = \sop{E}{\tilde{x}}.R' $, we have \[\ptag{\mathcal{C}}_k \equiv \conf{\rho, P \parallel \sop{E}{\tilde{x}}.\ptag{R'}_{k\barr} + k!0} \rightarrow \Delta \equiv \singleton{\conf{\sop[\tilde{e}]{E}{\rho}, P \parallel \ptag{R'}_{k\barr}}}~:~k\]
But then 
\begin{align*}
&\conf{\rho, P, \sop{E}{\tilde{x}}.R'} \longsquiggly_\varepsilon \Delta' = \singleton{\conf{\sop[\tilde{e}]{E}{\rho}, P, R'}}, \text{ and }\\
&\ptag{\Delta'}_k^{\varepsilon} = \singleton{\conf{\sop[\tilde{e}]{E}{\rho}, P \parallel \ptag{R'}_{k}^{\varepsilon}}} = \singleton{\conf{\sop[\tilde{e}]{E}{\rho}, P \parallel \ptag{R'}_{k\barr}}} \equiv \Delta.
\end{align*}

If $R = \meas{\tilde{x}}{y}.R'$, we have
\[\conf{\rho, P \parallel \meas{\tilde{x}}{y}.\ptag{R'}_{k\barr} + \varepsilon!0} \rightarrow \Delta \equiv \sum_m \distelem{p_m}{\singleton{\conf{\rho_m, P \parallel \ptag{R'}_{k\barr}[m/y]}}} : \varepsilon\]
for some set of $p_m$ and $\rho_m$. Then the proof is identical to the previous case, noticing that $\ptag{R[v/x]}_k^\lambda = \ptag{R}_k^\lambda[v/x]$ as the tags depends on the position of the processes and not on their content.

If $R = c!e $, then $P$ must be receiving on the unrestricted channel $c$, so we have \[\ptag{\mathcal{C}}_k \equiv \conf{\rho, (((c?x.P') + P'') \parallel Q \parallel (c!v + \varepsilon!0)) \setminus D} \rightarrow \Delta \equiv \singleton{\conf{\rho, ( P'[v/x] \parallel Q )\setminus D}}~:~\varepsilon\]
for some (possibly empty) set of restrictions $D$ and processes $P', P'', Q$. But then \\ $\mathcal{C} \equiv_o \conf{\rho, (((c?x.P') + P'') \parallel Q)  \setminus D, c!v} \longsquiggly_\varepsilon \Delta' = \singleton{\conf{\rho, ( P'[v/x] \parallel Q )\setminus D, \nil}}$ and $\ptag{\Delta'}_k^{\varepsilon} \equiv \Delta$.

If $R$ is a sum of reception, then $P$ must be sending on at least one of the channel in $R$, so we have 
\[\ptag{\mathcal{C}}_k \equiv \conf{\rho, (((c!v + P) \parallel Q \parallel (c?x.\ptag{R'}_{k\barr} + \ptag{T}_k + \varepsilon!0)) \setminus D} \rightarrow \Delta \equiv \singleton{\conf{\rho, ( Q \setminus D )\parallel R'[v/x]}}~:~\varepsilon\]
for some (possibly empty) set of restrictions $D$ and process $Q$, and the proof is identical to the previous case.

If $R = \ite{e}{R_1}{R_2}$, then the only way to have a transition is after applying the structural congruence and obtaining either $R_1$ or $R_2$, from which the desired property is obtained thanks to the inductive hypothesis and the fact that $\ptag{R}_k \equiv \ptag{R'}_k$ if and only if $R \equiv_o R'$.

If $R =  R_1 \expar R_2 $ and $\ptag{\mathcal{C}}_k \equiv \conf{\rho, P \parallel \ptag{R_1}_{kl} \parallel \ptag{R_2}_{kr}} \rightarrow \Delta : k\lambda$, then the transition cannot be a synchronisation between $R_1$ and $R_2$ because all the tags of $\ptag{C}_k$ must be in $\Delta$ except from $k\lambda$, and a synchronisation would destroy two tags, not just one. Suppose then that the transition involves $R_1$, the $R_2$ case is symmetric. From the $\rulename{SemPar}$ rule must be $\conf{\rho, P \parallel \ptag{R_1}_{k\ell}} \rightarrow \Delta_\ell$ with $\Delta = \Delta_\ell \parallel\ptag{R_2}_{kr} $.  This transition is still tagged with $k\lambda$, as the $k\lambda$ tag is present in $\ptag{R_1}_{k\ell} \parallel \ptag{R_2}_{kr}$ but is absent in $\Delta_\ell \parallel\ptag{R_2}_{kr}$, no must be in $\ptag{R_1}_{kl}$. Besides, we know that $k\lambda = k\ell\lambda'$ for some $\lambda'$, as it is easy to verify that all the tags in $\ptag{R_1}_{k\ell}$ start with $k\ell$. Then we can apply the inductive hypothesis 
\begin{gather*}
\conf{\rho, P \parallel \ptag{R_1}_{k\ell}} \rightarrow \Delta_\ell :k\ell\lambda' 
\qquad \Rightarrow  \qquad
\conf{\rho, P, R_1} \longsquiggly_{\lambda'} \Delta_\ell' \text{ with } \ptag{\Delta_\ell'}_{k\ell}^{\lambda'} = \Delta_\ell
\end{gather*}
and derive 
\[\conf{\rho, P, R_1 \expar R_2} \longsquiggly_{\lambda} \Delta' = \Delta_\ell'\parallel R_2 \text{ with } \ptag{\Delta'}_k^{\lambda} = \Delta
\]
since $\lambda = \ell\lambda'$ and $\ptag{\Delta'}_k^{\lambda} = \ptag{\Delta_\ell'\parallel R_2}_k^{\ell\lambda'} = \ptag{\Delta_\ell'}_{k\ell}^{\lambda'} \parallel \ptag{R_2}_{kr}  = \Delta_\ell \parallel \ptag{R_2}_{kr} = \Delta$.
\end{proof}

\begin{lemma}\label{interpret_distr}
For any $\Delta, \Delta'$ that do not contain $\bot$, we have 
	\[
	\forall k \ \left( 
	\Delta \longsquiggly_\pi \Delta' 	
	\Longleftrightarrow	
	\ptag{\Delta}  \rightarrow \Delta'' : \pi	\text{ with } \ptag{\Delta'}_k \equiv \Delta''
	\right)\]
\end{lemma}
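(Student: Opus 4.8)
The plan is to reduce this distribution-level statement to the three single-configuration lemmas just proved, exploiting that both relations involved are linear by construction. On the one hand, $\longsquiggly_\pi$ on distributions is literally $\text{lift}(\longsquiggly_\pi)$; on the other, the tagged transition $\rightarrow\cdot:\pi$ is extended to distributions exactly by the clause $\ptag{\Delta_1}\psum{p}\ptag{\Delta_2}\rightarrow\Delta'_1\psum{p}\Delta'_2:\pi$ iff $\ptag{\Delta_i}\rightarrow\Delta_i:\pi$. Hence any witness on either side decomposes into point distributions of single configurations, on which \autoref{interpret_diamond}, \autoref{interpret_fw} and \autoref{interpret_bw} apply, and can be reassembled by linearity. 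Since $\Delta$ contains no $\bot$, I may write $\Delta=\sum_i p_i\,\overline{\conf{\rho_i,P_i,R_i}}$ with genuine configurations, and note that tagging commutes with convex combination, so $\ptag{\Delta}_k=\sum_i p_i\,\overline{\conf{\rho_i,P_i\parallel\ptag{R_i}_k}}$.

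I would then split on the index. For $\pi=\diamond$, the forward direction decomposes $\Delta\longsquiggly_\diamond\Delta'$ as $\conf{\rho_i,P_i,R_i}\longsquiggly_\diamond\Theta_i$ with $\Delta'=\sum_i p_i\Theta_i$; applying \autoref{interpret_diamond} to each summand gives $\conf{\rho_i,P_i\parallel\ptag{R_i}_k}\rightarrow\Theta'_i:\diamond$ with $\ptag{\Theta_i}_k\equiv\Theta'_i$, and recombining by linearity yields $\ptag{\Delta}_k\rightarrow\sum_i p_i\Theta'_i:\diamond$ with $\sum_i p_i\Theta'_i\equiv\ptag{\Delta'}_k$. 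The backward direction is symmetric, decomposing the tagged transition by its linear definition and using the converse half of \autoref{interpret_diamond}. For $\pi=\lambda\neq\diamond$ the same decomposition is used, but the forward direction invokes \autoref{interpret_fw} to obtain $\conf{\rho_i,P_i\parallel\ptag{R_i}_k}\rightarrow\ptag{\Theta_i}^\lambda_k:k\lambda$, and the backward direction invokes \autoref{interpret_bw}, which consumes a $k\lambda$-tagged transition and returns a $\longsquiggly_\lambda$ move with $\ptag{\Theta_i}^\lambda_k\equiv\Delta''_i$; here the reassembled result carries the superscript, i.e.\ $\ptag{\Delta'}^\lambda_k$, exactly as in the single-configuration statements. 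Throughout I rely on structural congruence being preserved under convex combination, so every $\equiv$-equality lifts from the summands to the whole distribution.

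The main obstacle is bookkeeping rather than conceptual: I must check that the uniformity built into both notions — that \emph{every} configuration in the support performs the \emph{same} indexed (resp.\ tagged) move — is precisely the uniformity captured by the linear closure, so that the two decompositions align index-for-index and the common $\pi$ (resp.\ $k\lambda$) label is shared across all summands. The genuinely delicate point is the role of the no-$\bot$ hypothesis: for $\lambda\neq\diamond$ one has $\bot\longsquiggly_\lambda\overline{\bot}$ by the deadlock convention, yet $\ptag{\bot}_k=\bot$ expresses no barb and therefore admits only a $:\diamond$ tagged transition and never a $:\lambda$ one. A $\bot$ in the support would thus break the $\lambda$-case equivalence, and excluding it is exactly what makes both directions go through uniformly in $\pi$.
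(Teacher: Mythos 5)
Your proof is correct and follows essentially the same route as the paper, whose entire argument is precisely the one you spell out: decompose both $\longsquiggly_\pi$ and the tagged transition $\rightarrow\blank:\pi$ by their linear definitions into point distributions, apply \autoref{interpret_diamond}, \autoref{interpret_fw} and \autoref{interpret_bw} summand-wise, and recombine by linearity. Your closing observation about the no-$\bot$ hypothesis --- that $\bot \longsquiggly_\lambda \singleton{\bot}$ by the deadlock convention while $\ptag{\bot}_k = \bot$ admits only $\diamond$-tagged moves, so a $\bot$ in the support would break the $\lambda$-case --- is a correct and worthwhile detail that the paper's one-line proof leaves implicit.
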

\begin{proof}
The proof follows from \autoref{interpret_diamond}, \autoref{interpret_fw} and \autoref{interpret_bw}, thanks to linearity of $\longsquiggly_\pi$ and $\rightarrow~:~\pi$
\end{proof}

We can now finally prove \autoref{thm:probsmallerthanconst}

\probsmallerthanconst*
\begin{proof}
The direction $\sim_{cs} \not\subseteq \sim_{s}$ is given by \autoref{ex:zopm}.
For $\sim_{s} \subseteq \sim_{cs}$, we define the relation $\rel = \{(\Delta, \Theta) \mid \ptag{\Delta} \sim_{s} \ptag{\Theta}\} $
  And then we prove that $\rel$ is a constrained distribution bisimulation. Notice that, taken $\Delta, \Theta \in \sim_{s}$, when we interpret them as distribution of triples they have a $\nil$ observer, and so $\ptag{\Delta} = \Delta$ and $\ptag{\Theta} = \Theta$. This means that $\sim_{s} \subseteq \rel$, and if $\rel$ is a bisimulation then $\sim_s \subseteq \sim_{cs}$.
Notice also that $\rel$ is context-closed, meaning that if $\Delta, \Theta \in \rel$, then also $O[\Delta], O[\Theta] \in \rel$ for any $O[\blank]$, because $\sim_s$ is context-closed. Finally, $\rel$ is linear and decomposable, as $\ptag{\blank}$ is defined by linearity and $\sim_s$ is linear and decomposable (as proven in~\cite{hennessy_exploring_2012}, proposition 5.8).

Take $\Delta, \Theta \in \rel$. Since $\ptag{\Delta}$ and $\ptag{\Theta}$ are bisimilar, they express the same barbs. But $\ptag{\Delta}$ expresses at least all the barbs of $\Delta$ (and similarly $\ptag{\Theta}$), so also $\Delta$ and $\Theta$ must express the same barbs.

Suppose $\Delta \longsquiggly_\pi \Delta'$, we will prove that there exists a transition $\Theta \longsquiggly_\pi \Theta'$ with $\Delta' \rel \Theta'$, the other direction is symmetric. Let $\Delta' = \Delta'_{\not\bot} \psum{p} \singleton{\bot}$, where $\Delta'_{\not\bot}$ does not contain $\bot$.  Then, by decomposability of $\longsquiggly_\pi$ and of $\rel$, it must be $\Delta = \Delta_{\not\bot} \psum{p} \Delta_\bot$, and also $\Theta = \Theta_{\not\bot} \psum{p} \Theta\bot$. Since $\Delta_\bot \longsquiggly_\pi \bot$, from \autoref{interpret_distr} it follows that there is no $\Delta'$ such that $\ptag{\Delta_\bot} \rightarrow \Delta'~:~\pi$. But since $\ptag{\Delta_\bot} \sim_{s} \ptag{\Theta_\bot}$, using the third property of bisimilarity (from $\Theta$ to $\Delta$) we get that there is no  $\Theta'$ either such that $\ptag{\Theta_\bot} \rightarrow \Theta'~:~\pi$. Then, from \autoref{interpret_distr} it follows that $\Theta_\bot \longsquiggly_\pi \bot$, and of course $(\bot, \bot) \in \rel$.

Consider now $\Delta_{\not\bot}$. From $\Delta_{\not\bot} \longsquiggly_\pi \Delta'_{\not\bot}$, we know that $\ptag{\Delta_{\not\bot}} \rightarrow \ptag{\Delta'_{\not\bot}}^\pi~:~\pi$ tanks to \autoref{interpret_distr}, and so 
$\ptag{\Theta_{\not\bot}} \rightarrow \Theta'$ for a $\Theta' \sim_{s} \ptag{\Delta'_{\not\bot}}^\pi$. But since they are bisimilar, they express the same barbs, and from $\ptag{\Theta_{\not\bot}} \rightarrow \Theta'~:~\pi$ we 
get $\Theta_{\not\bot} \longsquiggly_\pi \Theta''$ with $\ptag{\Theta''}^\pi \equiv \Theta'$.
To sum up, we have 
\[
\begin{matrix}%
  \Delta = \Delta_{\not\bot} \psum{p} \Delta_{\bot} &
  \rel &
	\Theta = \Theta_{\not\bot} \psum{p} \Theta{\bot} \\
  \downsquiggly[\pi] & & \downsquiggly[\pi] \\
  \Delta' = \Delta'_{\not\bot} \psum{p} \bot &
  ? &
	\Theta' = \Theta'' \psum{p} \bot
\end{matrix}
\]
with $\ptag{\Delta'_{\not\bot}}^\pi \sim_{s} \ptag{\Theta''}^\pi$. It's possible to verify that $\ptag{\Delta}^\pi \sim_{s} \ptag{\Theta}^\pi$ implies $\ptag{\Delta} \sim_{s} \ptag{\Theta}$, as there is a bijection $f$ between the free channels of $\ptag{\Delta}$ and of $\ptag{\Delta}^\pi$, and so any context $O[\blank]$ applied on $\ptag{\Delta}$ has the same behaviour of the context $f(O[\blank])$ applied on $\ptag{\Delta}^\pi$, and vice versa. Given that, since $\bot, \bot \in \rel$ and $\rel$ is linear, we can conclude $\Delta' \rel \Theta'$.
\end{proof}

\subsection{Proof of \autoref{thm:linearity} and \autoref{thm:propertyA}}\label{uptoappendix}

Before proving \autoref{thm:linearity} and \autoref{thm:propertyA}, it is convenient to prove the soundness of the bisimilarity up to convex hull technique~\cite{bonchi_power_2017}.
To do so, let us define the function on relations $b$, which is the function of which bisimilarity is the greatest fix point
\begin{footnotesize}
	\[
	b(\rel) \coloneqq \left\{
	(\Delta, \Theta) \biggm| \begin{array}{c}
	\Delta \downarrow_{b}^p \Leftrightarrow \Theta \downarrow_{b}^p \\
	O[\Delta] \longsquiggly_\pi \Delta' \Rightarrow \exists \Theta' \ O[\Theta] \longsquiggly_\pi \Theta' \wedge \Delta'\,\rel\,\Theta'
	\\
	O[\Theta] \longsquiggly_\pi \Theta' \Rightarrow \exists \Delta' \ O[\Delta] \longsquiggly_\pi \Delta' \wedge \Delta'\,\rel\,\Theta'  
	\end{array}
	\right\}
	\]
\end{footnotesize}

and recall the definition of convex hull from~\cite{bonchi_power_2017}
\[
  Cv(\rel) \coloneqq \bigg\{\bigg(\sum_{i \in I} \distelem{p_i}{\Delta_i}, \sum_{i \in I} \distelem{p_i}{\Theta_i}\bigg)\ \bigg|\ \forall i \in I \ldotp \Delta_i\,\rel\,\Theta_i \bigg\}
\]

Observe that both $b$ and $Cv$ are monotone functions on the lattice of relations.

\begin{lemma}[$Cv$ is $b$-compatible]\label{thm:cvcomp}
	We have that $\forall \rel\ldotp Cv(b(\rel)) \subseteq b(Cv(\rel))$.
\end{lemma}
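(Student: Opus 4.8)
The plan is to unfold both sides of the desired inclusion and chase an arbitrary pair through the definitions, relying on three facts that hold by construction: linearity of barbs, linearity of context application (so that $O[\sum_{i \in I} \distelem{p_i}{\Delta_i}] = \sum_{i \in I} \distelem{p_i}{O[\Delta_i]}$), and the fact that the lifted relation $\longsquiggly_\pi$ is both linear and decomposable on distributions. First I would fix an arbitrary $\rel$ and a pair $(\Delta, \Theta) \in Cv(b(\rel))$. By definition of $Cv$ we may write $\Delta = \sum_{i \in I} \distelem{p_i}{\Delta_i}$ and $\Theta = \sum_{i \in I} \distelem{p_i}{\Theta_i}$ with $\Delta_i\, b(\rel)\, \Theta_i$ for every $i \in I$; the goal is to show $(\Delta, \Theta) \in b(Cv(\rel))$, i.e.\ that the three clauses defining $b$ hold with $Cv(\rel)$ as the target relation.

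The barb clause is immediate: since $\Delta_i\, b(\rel)\, \Theta_i$ we have $\Delta_i \downarrow_{b}^{q_i}$ iff $\Theta_i \downarrow_{b}^{q_i}$ for the same $q_i$, and linearity of barbs gives $\Delta \downarrow_{b}^{\sum_i p_i q_i}$ and $\Theta \downarrow_{b}^{\sum_i p_i q_i}$, so $\Delta$ and $\Theta$ satisfy exactly the same distribution barbs. For the transition clause, I would fix a context $O[\blank]$ and assume $O[\Delta] \longsquiggly_\pi \Delta'$. By linearity of context application $O[\Delta] = \sum_{i \in I} \distelem{p_i}{O[\Delta_i]}$, so by decomposability of $\longsquiggly_\pi$ (iterated over the finite set $I$) there are $\Delta_i'$ with $O[\Delta_i] \longsquiggly_\pi \Delta_i'$ and $\Delta' = \sum_{i \in I} \distelem{p_i}{\Delta_i'}$. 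Applying the defining property of $b(\rel)$ to each $i$ (instantiated at the very context $O[\blank]$) yields $\Theta_i'$ with $O[\Theta_i] \longsquiggly_\pi \Theta_i'$ and $\Delta_i'\, \rel\, \Theta_i'$. Finally, linearity of $\longsquiggly_\pi$ recombines these into $O[\Theta] = \sum_{i \in I} \distelem{p_i}{O[\Theta_i]} \longsquiggly_\pi \sum_{i \in I} \distelem{p_i}{\Theta_i'} =: \Theta'$, and by definition of $Cv$ we obtain $\Delta'\, Cv(\rel)\, \Theta'$. The symmetric clause, starting from a move of $O[\Theta]$, is proved identically by swapping the roles of $\Delta$ and $\Theta$.

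The only genuinely delicate point is the decomposition step: I rely on $\longsquiggly_\pi$ being decomposable, so that a single move out of a convex combination splits into moves of the summands carrying the \emph{same} weights $p_i$; this is exactly what lets the convex hull interact cleanly with the transition relation. Since $\longsquiggly_\pi$ on distributions is obtained as a lifting and is decomposable by construction — the same fact already invoked in the proof of \autoref{thm:linearity} — the argument closes. It is worth noting that the indexing of $\longsquiggly_\pi$ does its work precisely here: by forcing every configuration in a support to resolve its non-determinism in the same way, it keeps the weights of the decomposition aligned with those of the convex combination, which is what the decomposability property encapsulates.
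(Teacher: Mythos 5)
Your proof is correct and takes essentially the same route as the paper's: the same convex decomposition of the pair, linearity of barbs for the first clause, and left-decomposability plus linearity of the lifted relation $\longsquiggly_\pi$ to split a move of $O[\Delta]$ across the summands, match each piece via $b(\rel)$, and recombine into a move of $O[\Theta]$ landing in $Cv(\rel)$. The only cosmetic differences are that the paper checks the barb clause by explicit summation, treating $\downarrow_c$ and $\downarrow_\bot$ separately, and justifies decomposability by citing \citet{hennessy_exploring_2012}, where you appeal to the same fact as invoked in the proof of \autoref{thm:linearity}.
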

\begin{proof}
	Assume $(\Delta, \Theta) \in Cv(b(\rel))$. Then it must be $\Delta = \sum_{i \in I}\distelem{p_i} \Delta_i$ and $\Theta = \sum_{i \in I}\distelem{p_i} \Theta_i$ for a certain set of probabilities $\{p_i\}_{i \in I}$, with
	$\Delta_i\,b(\rel)\,\Theta_i$. So we have, for any $i \in I$
	\begin{gather*}
	\sum_{\mathcal{C} \downarrow_c} \Delta_i(\mathcal{C}) = \sum_{\mathcal{C} \downarrow_c} \Theta_i(\mathcal{C}) = q_i \\
	\sum_{\mathcal{C} \downarrow_c} \sum_{i \in I} p_i \Delta_i(\mathcal{C})  =
	\sum_{\mathcal{C} \downarrow_c} \sum_{i \in I} p_i \Theta_i(\mathcal{C}) = \sum_{i \in I} p_i q_i = q
	\end{gather*}
	meaning that $\Delta \downarrow_c^q$ if and only if $\Theta \downarrow_c^q$. The same holds also for $\Delta \downarrow_\bot^q$ and $\Theta \downarrow_\bot^q$, as 
		\begin{gather*}
	\Delta_i(\bot) = \Theta_i(\bot) = q_i \\
	\sum_{i \in I} p_i \Delta_i(\bot)  =
	\sum_{i \in I} p_i \Theta_i(\bot) = \sum_{i \in I} p_i q_i = q
	\end{gather*}
	
	Suppose $O[\Delta] = \sum_{i \in I} \distelem{p_i} O[\Delta_i]  \longsquiggly_\pi \Delta'$.
	From~\cite{hennessy_exploring_2012}, $\longsquiggly_\pi$ is left-decomposable, so it must be $\Delta'
	= \sum_{i \in I} \distelem{p_i} \Delta_i'$ with $O[\Delta_i] \longsquiggly_\pi\Delta_i'$. But
	since $\Delta_i\,b(\rel)\,\Theta_i$ it must be $O[\Theta_i] \longsquiggly_\pi
	\Theta_i'$ with $\Delta_i'\,\rel\,\Theta_i'$ for any $i \in I$, from which it follows that
	$\sum_{i \in I} \distelem{p_i} O[\Theta_i] \longsquiggly_\pi \sum_{i \in I} \distelem{p_i} \Theta_i' = \Theta'$.
	
	In other words, $\Delta$ and $\Theta$ express the same barbs and whenever
	$O[\Delta] \longsquiggly_\pi \Delta'$, there exists a transition $O[\Theta]
	\longsquiggly_\pi \Theta'$ such that $\Delta'\,Cv(\rel)\,\Theta'$ (the symmetrical
	argument is the same). So we can conclude that $(\Delta, \Theta) \in
	b(Cv(\rel))$.
\end{proof}

From~\cite{sangiorgi_enhancements_2011}, we know that if $Cv$ is compatible, then bisimulations up to $Cv$ are a sound proof technique.
Besides, it also allows us to show linearity of $\sim_{cs}$ as a corollary (note that the same proof holds also for $\sim_s$).

\linearity*
\begin{proof}
	We can easily prove that $Cv(\sim_{cs}) \subseteq\ \sim_{cs}$. Since $Cv$ is
	$b$-compatible, we have that $Cv(b(\sim_{cs})) \subseteq b(Cv(\sim_{cs}))$, and
	since $\sim_{cs}$ is the greatest fix point of $b$, we have $Cv(\sim_{cs}) \subseteq
	b(Cv(\sim_{cs}))$, meaning that $Cv(\sim_{cs})$ is a bisimulation, and so
	$Cv(\sim_{cs}) \subseteq\ \sim_{cs}$.
\end{proof}

In order to prove \autoref{thm:propertyA}, we first need a lemma on deterministic distributions.

\begin{lemma}\label{sum of deterministic is deterministic}
If $\Delta, \Theta$ are deterministic, then $\Delta \psum{p} \Theta$ is deterministic for each probability $p$.
\end{lemma}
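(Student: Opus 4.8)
The plan is to exhibit an explicit deterministic set containing $\Delta \psum{p} \Theta$. By hypothesis there are deterministic sets $\mathcal{A} \ni \Delta$ and $\mathcal{B} \ni \Theta$, and I would take
\[
	\mathcal{S} = \left\{\, \Delta' \psum{q} \Theta' \;\middle|\; \Delta' \in \mathcal{A},\ \Theta' \in \mathcal{B},\ q \in [0,1] \,\right\},
\]
which contains $\Delta \psum{p} \Theta$ by construction. It then suffices to show that $\mathcal{S}$ is deterministic.

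The two tools are the decomposability of $\longsquiggly_\pi$ and the linearity of $\sim_{cs}$ established in \autoref{thm:linearity}. First I note that context application is linear, so $O[\Delta' \psum{q} \Theta'] = O[\Delta'] \psum{q} O[\Theta']$ for any context $O[\blank]$. Now fix $\Delta' \psum{q} \Theta' \in \mathcal{S}$ and suppose $O[\Delta' \psum{q} \Theta'] \longsquiggly_\pi \Xi'$ and $O[\Delta' \psum{q} \Theta'] \longsquiggly_\pi \Xi''$. By left-decomposability of $\longsquiggly_\pi$, both targets split along the same weight $q$, giving $\Xi' = \Xi'_1 \psum{q} \Xi'_2$ and $\Xi'' = \Xi''_1 \psum{q} \Xi''_2$ with $O[\Delta'] \longsquiggly_\pi \Xi'_1, \Xi''_1$ and $O[\Theta'] \longsquiggly_\pi \Xi'_2, \Xi''_2$.

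Since $\Delta' \in \mathcal{A}$ and $\mathcal{A}$ is deterministic, the two $\pi$-successors of $O[\Delta']$ satisfy $\Xi'_1 \sim_{cs} \Xi''_1$, and each of them lies in $\mathcal{A}$ (the membership clause of the definition, instantiated with the two transitions taken equal); symmetrically $\Xi'_2 \sim_{cs} \Xi''_2$ with $\Xi'_2, \Xi''_2 \in \mathcal{B}$. Linearity of $\sim_{cs}$ then yields $\Xi' = \Xi'_1 \psum{q} \Xi'_2 \sim_{cs} \Xi''_1 \psum{q} \Xi''_2 = \Xi''$, and $\Xi', \Xi'' \in \mathcal{S}$ by definition of $\mathcal{S}$. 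This is exactly the determinism requirement for $\mathcal{S}$, so $\Delta \psum{p} \Theta$ is deterministic.

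The only delicate point is that left-decomposability must hand back decompositions at the \emph{same} weight $q$ while simultaneously keeping the two summands inside the respective witness sets; both facts follow from the notion of decomposable relation given in Section~\ref{probabilisticBackground} together with the closure clause ``$\Delta', \Delta'' \in \mathcal{A}$'' in the definition of a deterministic set. The degenerate weights $q \in \{0,1\}$ collapse the combination to a single component and are covered by the same argument, so I expect no genuine obstacle beyond the bookkeeping of the decompositions.
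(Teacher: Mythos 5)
Your proof is correct and follows essentially the same route as the paper's: both exhibit a deterministic witness set of convex combinations and close it under transitions by combining decomposability of $\longsquiggly_\pi$, determinism of the two components, and linearity of $\sim_{cs}$ (\autoref{thm:linearity}). The only cosmetic differences are that you range over all weights $q$ and explicit witness sets $\mathcal{A},\mathcal{B}$ where the paper fixes the weight $p$ and takes all deterministic distributions, and that you spell out the context $O[\blank]$ via linearity of context application, a detail the paper's notation elides.
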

\begin{proof}
We will prove that, for any probability $p$
\[
 \mathcal{A} = \left\{  \Delta \psum{p} \Theta \mid \Delta, \Theta \text { are deterministic}\right\}
\]
is a deterministic set.
If $(\Delta \psum{p} \Theta) \longsquiggly_\pi \Xi'$, and $(\Delta \psum{p} \Theta) \longsquiggly_\pi \Xi''$, since $\longsquiggly_\pi$ is decomposable it must be $\Xi' = \Delta' \psum{p} \Theta'$ and $\Xi'' = \Delta'' \psum{p} \Theta''$, with $\Delta \longsquiggly_\pi \Delta'$, $\Delta \longsquiggly_\pi \Delta''$, $\Theta \longsquiggly_\pi \Theta'$ and $\Theta \longsquiggly_\pi \Theta''$. But then, since $\Delta$ and $\Theta$ are deterministic, we have $\Delta' \sim_{cs} \Delta''$, $\Theta' \sim_{cs} \Theta''$ and they are all deterministic, so $\Xi', \Xi'' \in \mathcal{A}$ and $\Xi'\sim_{cs} \Xi''$, for linearity of $\sim_{cs}$.
\end{proof}

\propertyA*
\begin{proof}
For any deterministic $P$, we define
\[
	\rel' = \left\{ \left(\singleton{\conf{\rho \psum{p} \sigma, P, R}} \ , \ \singleton{\conf{\rho, P, R}} \psum{p} \singleton{\conf{\sigma, P, R}}\right) \mid \rho, \sigma, p, R \right\}
\]
  and prove that $\rel = \rel' \cup \{(\bot, \bot)\}$ is a bisimulation up to $Cv$ and up to bisimilarity. That is, we require that if $\Delta \longsquiggly_\pi \Delta'$ then $\Theta \longsquiggly_\pi \Theta'$ with $\Delta' \sim_{cs} Cv(\rel) \sim_{cs} \Theta'$. 
  Since $Cv$ is compatible, bisimulation up to $Cv$ and up to bisimilarity is a valid proof technique~\cite{sangiorgi_enhancements_2011}.

The case $\Delta = \Theta = \bot$ is straightforward. Otherwise, let $\Delta = \singleton{\conf{\nu, P, R}}, \Theta = \singleton{\conf{\rho, P, R}} \psum{p} \singleton{\conf{\sigma, P, R}}$ be in $\rel$, with $\nu = \rho \psum{p} \sigma$. Since they have the same process and observer, they are typed by the same context $\Sigma$. Notice that we do not need to quantify over any context $O[\blank]$, because  $\rel$ is a saturated relation, meaning that if $\Delta, \Theta \in \rel$, then $O[\Delta], O[\Theta] \in \rel$ for any context.

For the first condition, we have that $\Delta\downarrow_c^p$ if and only if $p = 1$ and $P$ or $R$ express the barb $c$. Then, also $\Theta$ expresses the barb $c$ with probability one, and vice versa.

For the second condition, suppose that $\singleton{\conf{\nu, P, R}} \longsquiggly_\pi \Delta'$. We will proceed by induction on $\longsquiggly_\pi$ to prove that $\Theta \longsquiggly_\pi \Theta'$ with $\Delta' Cv(\rel) \Theta'$. The ``classical'' cases, that do not modify the quantum state, are trivial, since if $\Delta \longsquiggly_\pi \Delta'$ then $\Theta$ can go in $\Theta'$ performing the same move in both configurations, and $\Delta' \rel \Theta'$. The only interesting cases are $\textsc{QOp}$ and $\textsc{QMeas}$ for the process, and  $\textsc{OQOp}$ and $\textsc{OQMeas}$ for the observer.

In the $\rulename{QOp}$ case, if 
\[\singleton{\conf{\nu, \sop{E}{\tilde{x}}.P', R}} \longsquiggly_\diamond \singleton{\conf{\sop[\tilde{x}]{E}{\nu}, P', R}}\]
then
\begin{gather*}
\singleton{\conf{\rho, \sop{E}{\tilde{x}}.P', R}} \psum{p} \singleton{\conf{\sigma, \sop{E}{\tilde{x}}.P', R}}\ \longsquiggly_{\diamond}\ \singleton{\conf{\sop[\tilde{x}]{E}{\rho}, P', R}} \psum{p} \singleton{\conf{\sop[\tilde{x}]{E}{\sigma}, P', R}}
\end{gather*}
and $\sop[\tilde{x}]{E}{\rho} \psum{p} \sop[\tilde{x}]{E}{\sigma} = \sop[\tilde{x}]{E}{\rho \psum{p} \sigma} = \sop[\tilde{x}]{E}{\nu}$, thanks to linearity of superoperators. The $\rulename{OQOp}$ case is identical.

In the $\rulename{QMeas}$ case, we have 
\[ \singleton{\conf{\nu, \meas{\tilde{x}}{y}.P', R}} \longsquiggly_\diamond  \Delta' \]
with
\[\Delta' = \sum_m \distelem{tr_m(\nu)}\singleton{\conf{\nu'_m, P'[\sfrac{m}{y}], R}} \qquad \nu'_m = \frac{\sop[m]{M}{\nu}}{tr_m(\nu)}
\]
where $\sop[m]{M}{\nu} = (M_m\otimes I) \nu (M_m\otimes I)^\dagger$ is the trace-nonincreasing superoperator corresponding to outcome $m$ when measuring qubits $\tilde{x}$, and $tr_m{\nu} = tr(\sop[m]{M}{\nu})$ is the probability of said outcome.
Then
\[
\singleton{\conf{\rho, \meas{\tilde{x}}{y}.P', R}} \psum{p} \singleton{\conf{\sigma, \meas{\tilde{x}}{y}.P', R}}
\longsquiggly_\diamond \Theta'_\rho \psum{p} \Theta'_\sigma
\]
with
\begin{align*}
\Theta'_\rho &= \sum_m \distelem{tr_m(\rho)} \singleton{\conf{\rho'_m, P'[\sfrac{m}{y}], R}} \qquad \rho'_m = \frac{\sop[m]{M}{\rho}}{tr_m(\rho)} 
\\
\Theta'_\sigma &= \sum_m \distelem{tr_m(\sigma)} \singleton{\conf{\sigma'_m, P'[\sfrac{m}{y}], R}} \qquad \sigma'_m = \frac{\sop[m]{M}{\sigma}}{tr_m(\sigma)}
\end{align*}
Observe that $tr_m(\nu) = tr(\sop[m]{M}{\rho \psum{p} \sigma})$ is equal to $tr_m(\rho) \psum{p} tr_m(\sigma)$, thanks to linearity of superoperators and trace. So, according to the rules of probability distributions, $\Theta'_\rho \psum{p} \Theta'_\sigma$ can be rewritten as
\[
\sum_m \distelem{tr_m(\nu)} (\singleton{\conf{\rho'_m, P'[\sfrac{m}{y}], R}} \psum{q} \singleton{\conf{\sigma'_m, P'[\sfrac{m}{y}], R}})
\]
with $q = \frac{p\cdot tr_m(\rho)}{tr_m(\rho) \psum{p} tr_m{\sigma}}$. It's easy to show that $\rho'_m \psum{q} \sigma'_m = \nu'_m$, from which it follows that 
\[
	\singleton{\conf{\nu'_m, P'[\sfrac{m}{y}], R}} \ \rel \ \left(\singleton{\conf{\rho'_m, P'[\sfrac{m}{y}], R}} \psum{q} \singleton{\conf{\sigma'_m, P'[\sfrac{m}{y}], R}}\right)
\]
and $\Delta' \ Cv(\rel)  \ \left(\Theta'_\rho \psum{p} \Theta'_\sigma\right)$. 

The $\rulename{OQMeas}$ case is identical, and all the other cases of $\rightarrow$ and $\longsquiggly_\pi$ are trivial, as they do not modify the quantum state.

For the third condition, we must prove that if $\Theta \longsquiggly_\pi \Theta'$, then $\Delta \longsquiggly_\pi \Delta'$ for some $\Delta' \sim_{cs} Cv(\rel) \sim_{cs} \Theta'$. Notice that $\conf{\rho, P, R} \longsquiggly_\pi \bot$ if and only if $\conf{\sigma, P, R} \longsquiggly_\pi \bot$, so we have that either $\Theta \longsquiggly_\pi \bot$ or $\Theta \longsquiggly_\pi \Theta'$ with $\bot \not\in \lceil\Theta' \rceil$. In the first case, also $\Delta \longsquiggly_\pi \bot$, and the condition is satisfied.
For the second case, we say $\Theta \longsquiggly_\pi$ if there exists a $\Theta'$ such that $\Theta \longsquiggly_\pi \Theta'$ and $\bot \not\in \lceil\Theta' \rceil$. We will show that, if $\Theta \longsquiggly_\pi$, there exist $\hat{\Theta}, \hat{\Delta}$ such that $\Theta \longsquiggly_\pi \hat{\Theta}$, $\Delta \longsquiggly_\pi \hat{\Delta}$ and $\hat{\Delta} Cv(\rel) \hat{\Theta}$. From this, since $\Theta$ is deterministic according to \autoref{sum of deterministic is deterministic}, it must be that for any $\Theta \longsquiggly_\pi \Theta'$ we have $\hat{\Delta} \sim_{cs} Cv(\rel) \sim_{cs} \Theta'$.

Assume $ \singleton{\conf{\rho, P, R}} \psum{p} \singleton{\conf{\sigma, P, R}} \longsquiggly_\pi$. To prove the existence of $\hat{\Theta}$ and $\hat{\Delta}$, in the case $\pi = \diamond$, we will proceed by induction on the syntax of $P$; in the case $\pi \neq \diamond$, we will proceed by induction on the syntax of $R$.  

Suppose that $\pi = \diamond$. If $P = \tau.P'$, the only possible transition is when $\hat{\Theta} = \singleton{\conf{\rho, P', R}} \psum{p} \singleton{\conf{\sigma, P', R}}$, and we have the transition $\singleton{\conf{\nu, \tau.P', R}} \longsquiggly_\diamond \singleton{\conf{\nu, P', R}} = 	\hat{\Delta}$, with $\hat{\Delta} \rel 	\hat{\Theta}$.

If $P = \sop{E}{\tilde{x}}.P'$ or $P = \meas{\tilde{x}}{y}.P$, there is only one possible $\hat{\Theta}$ and $\hat{\Delta}$, and the proof proceeds in the same way as seen for the previous direction of bisimulation.

If $P = P_1 + P_2$,  at least one between $P_1$ or $P_2$ is not in deadlock.
Suppose that $P_1$ is not in deadlock, then there exists a transition 
\[
\singleton{\conf{\rho, P_1, R}} \psum{p} \singleton{\conf{\sigma, P_1, R}} \longsquiggly_\diamond \hat{\Theta}
\]
For the inductive hypothesis, we have that $
\singleton{\conf{\nu, P_1, R}} \longsquiggly_\diamond \hat{\Delta} $
with $\hat{\Delta} \ Cv(\rel) \ \hat{\Theta}$.
If $P_1$ is in deadlock, then $P_2$ must not, and the proof is symmetrical.

If $P = P_1 \parallel P_2$, suppose that $P_1$ is not in deadlock, then there exists a transition 
\[
\singleton{\conf{\rho, P_1, R}} \psum{p} \singleton{\conf{\sigma, P_1, R}} \longsquiggly_\diamond \mathring{\Theta}
\]
and a transition $\Theta \longsquiggly_\diamond \hat{\Theta} = \mathring{\Theta} \parallel P_2$.
For the inductive hypothesis, we have that $\singleton{\conf{\nu, P_1, R}} \longsquiggly_\diamond \mathring{\Delta} $
with $\mathring{\Delta} \ Cv(\rel) \ \mathring{\Theta}$, from which it follows that 
\[\hat{\Delta} = \mathring{\Delta} \parallel P_2 \quad Cv(\rel) \quad \mathring{\Theta} \parallel P_2 = \hat{\Theta}\]

If $P_1$ is in deadlock and $P_2$ is not, then the proof is symmetrical. If both $P_1$ and $P_2$ are in deadlock, then there must be a synchronization, i.e. 
$P_1 \parallel P_2 \equiv (c!v + S) \parallel (c?x.P') + Q$. But then there exist the transitions $\Theta \longsquiggly_\diamond \hat{\Theta}$ with
\[
\hat{\Theta} = \singleton{\conf{\rho, P'[\sfrac{v}{x}], R}} \psum{p} \singleton{\conf{\sigma, P'[\sfrac{v}{x}], R}}
\]
and the transition $\Delta \longsquiggly_\diamond \hat{\Delta}$ with
\[
\hat{\Delta} = \singleton{\conf{\nu, P'[\sfrac{v}{x}], R}} 
\]

If $P = P \setminus c$  or $P = \ite{e}{P_1}{P_2}$, the required property follows trivially from the inductive hypothesis.

Suppose that $\pi \neq \diamond$, we proceed by induction on $R$. If $R = \sop{E}{\tilde{x}}.R'$ or $R = \meas{\tilde{x}}{y}.R$, the proof proceeds in the same way as before.

If $R$ is a sum of receptions, then $R \equiv c?x.R_1 + R_2$ and it must be $P
\equiv ((c!v + S) \parallel Q) \setminus D$; if $R = c!e$, then it must be $P
\equiv (((c?x . P') + P'') \parallel Q) \setminus D$. In both cases, there exists
two specific $\hat{\Theta}, \hat{\Delta}$ such that $\hat{\Theta} \ Cv(\rel) \
\hat{\Delta}$, as seen before for the synchronization  between processes.

If $R = R_1 \expar R_2$ and $\Theta \longsquiggly_{\ell \cdot \pi'}$, then the required property follows from the inductive hypothesis, as seen before for process parallelism. If $\Theta \longsquiggly_{r \cdot \pi'}$, the case is symmetrical.

If $R= \ite{e}{R_1}{R_2}$, the required property follows trivially from the inductive hypothesis.
\end{proof}

Consider now the refinement relation $\preceq$, defined as $P' \preceq P$ if $P'$ can be obtained from $P$ by substituting some occurrence of $Q + Q'$ with either $Q$, $Q'$ or $\ite{e}{Q}{Q'}$ for some $e$.
We extend the relation to configurations writing $\conf{\rho', P'} \preceq \conf{\rho, P}$ and $\conf{\rho', P', R'} \preceq \conf{\rho, P, R}$ when $\rho' = \rho$, $P' \preceq P$, and $R' = R$.
Moreover, we lift $\preceq$ to distributions by linearity and imposing $\bot \preceq \Delta$ for any $\Delta$.

\subsection{Proof of \autoref{thm:nondetVSite}}
To prove~\autoref{thm:nondetVSite} we start from some auxiliary lemmas.

\begin{lemma}\label{thm:nondetVSitelm0}
	Let $P' \preceq P$, if $P' \equiv Q'$ then $P \equiv Q$ for some $Q$ such that $Q' \preceq Q$.
\end{lemma}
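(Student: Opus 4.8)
The plan is to prove the statement by reducing general structural congruence to single rewrite steps, thereby avoiding the well-known obstruction that one cannot directly induct on the derivation of an equivalence relation. Since $\equiv$ is the least congruence on processes generated by the axioms of \autoref{struct_cong}, I first read each axiom as bidirectional and let $P' \leftrightarrow Q'$ denote one application of one (oriented) axiom inside a single one-hole process context, so that $\equiv$ is exactly the reflexive--transitive closure of $\leftrightarrow$. The lemma then follows by induction on the length of the $\leftrightarrow$-chain witnessing $P' \equiv Q'$: the empty chain gives $Q = P$; for a chain $P' \leftrightarrow R' \equiv Q'$, the single-step case (below) yields $R$ with $P \equiv R$ and $R' \preceq R$, and the induction hypothesis applied to $R' \preceq R$ and $R' \equiv Q'$ yields $Q$ with $R \equiv Q$ and $Q' \preceq Q$; transitivity of $\equiv$ closes the step. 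Note that no transitivity of $\preceq$ is needed, and symmetry of $\equiv$ is absorbed once the axioms are taken in both orientations.

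It remains to treat the single step: assuming $P' \preceq P$ and $P' \leftrightarrow Q'$, produce $Q$ with $P \equiv Q$ and $Q' \preceq Q$. I would prove this by induction on the derivation of $P' \preceq P$ (\autoref{fig:refinement}), splitting on where the rewritten redex of $\leftrightarrow$ sits. The structure-preserving congruence rules of $\preceq$ --- the ones for $\parallel$, $+$, prefixes $\mu.\cdot$, $\setminus c$, and $\ite{e}{\cdot}{\cdot}$ --- are routine: if the redex lies strictly inside a component, the induction hypothesis on that component gives the rewritten original component, and re-applying the same $\preceq$-rule together with the corresponding $\equiv$-congruence on $P$ rebuilds $Q$. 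The rules that are not structure preserving, namely summand dropping ($P' \preceq P_1 + P_2$ from $P' \preceq P_1$) and conditional introduction ($\ite{e}{P_1'}{P_2'} \preceq P_1 + P_2$), carry a sub-derivation relating $P'$ to a proper subterm of $P$; there the induction hypothesis applies to that subterm and the result is re-wrapped with the same rule. The conditional-elimination axioms \textsc{SCIteT} and \textsc{SCIteF} collapse a refined $\ite{e}{\cdot}{\cdot}$ to one of its branches, which is then reabsorbed by summand dropping on $P$; and \textsc{SCValExpr} is handled using the auxiliary fact, provable by a direct induction, that $\preceq$ is closed under the substitution $[\sfrac{v}{e}]$.

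The genuinely delicate cases are the axioms acting at the top constructor: associativity, commutativity, and the unit laws for $+$ (\textsc{SCSumComm}, \textsc{SCSumAssoc}, \textsc{SCSumNil}), together with their analogues for $\parallel$ and the restriction laws (\textsc{SCParComm}, \textsc{SCParAssoc}, \textsc{SCParNil}, \textsc{SCRestrPar}, \textsc{SCRestrOrd}, \textsc{SCRestrNil}). When $P$ and $P'$ share the top constructor --- i.e.\ the last $\preceq$-step is the matching congruence rule --- I would simply perform the same axiom on $P$ and rebuild the refinement pointwise via the congruence rule of $\preceq$; for instance \textsc{SCSumComm} on $P' = P_1' + P_2'$ with $P_1' \preceq P_1,\ P_2' \preceq P_2$ is answered by $Q = P_2 + P_1$. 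The main obstacle is precisely when the top-level pattern visible in $P'$ is not literally present at the matching node of $P$, because refinement has meanwhile dropped summands or introduced a conditional: then the unit being absorbed (as in \textsc{SCSumNil}/\textsc{SCRestrNil}) or the reassociation pattern need not occur verbatim in $P$. I expect this to require peeling the intervening summand-dropping and conditional-introduction steps off the $\preceq$-derivation by an inner induction, relocating the axiom to the node where $P$ and $P'$ do share the constructor, applying it there, and finally re-deriving $Q' \preceq Q$ by re-inserting the dropped or guarded summands. Getting this correspondence of positions exactly right --- and checking it uniformly across the associative, commutative, and unit variants for both $+$ and $\parallel$ --- is where essentially all the work of the lemma concentrates.
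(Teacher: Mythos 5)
Your reduction of $\equiv$ to single oriented rewrite steps, the outer induction on the length of the chain, and your case analysis for the sum axioms are all sound; in particular your peeling lemma works for $+$ (from $S_1' + S_2' \preceq T$ one can extract $T \equiv T_a + T_b$ with $S_i' \preceq T_a, T_b$, because a dropped summand can simply be re-dropped after the axiom is relocated). But the ``main obstacle'' you flag for the parallel axioms is not merely delicate bookkeeping --- it is fatal, because a summand-dropping step sitting inside the redex creates a sum context that $\equiv$ cannot move out of the way ($\parallel$ does not distribute over $+$), and in fact the statement is \emph{false} there. Concretely, take $P' = a!0 \parallel \nil$ and $P = a!0 \parallel (\nil + d!0)$: then $P' \preceq P$ (drop the summand $d!0$) and $P' \equiv a!0 = Q'$ by \rulename{SCParNil}. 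Inspecting \autoref{fig:refinement}, $a!0 \preceq Q$ can only hold by reflexivity or by repeatedly dropping summands of a top-level sum, so $Q$ must be $a!0$ or a nested sum having $a!0$ as a summand; but every member of the $\equiv$-class of $P$ keeps the output $d!0$ inside a parallel component, so no $Q$ with $P \equiv Q$ and $Q' \preceq Q$ exists. The same phenomenon kills \rulename{SCParAssoc}: $a!0 \parallel (b!0 \parallel c!0) \preceq a!0 \parallel ((b!0 \parallel c!0) + d!0)$, and the regrouping $(a!0 \parallel b!0) \parallel c!0$ has no refinement-preimage in the $\equiv$-class of the right-hand side. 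So the single-step case of your plan cannot be closed for these axioms, no matter how the relocation is organized.

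For comparison, the paper disposes of \autoref{thm:nondetVSitelm0} with the one-liner ``trivial by cases on the rules for $\equiv$'', which implicitly treats only the situation where the $\preceq$-derivation structurally mirrors the rewritten redex; it silently skips exactly the cases you identified, so your diagnosis of where the work concentrates is more perceptive than the paper's own proof --- you only underestimated the obstacle (expecting it hard rather than impossible). The viable repair is not to finish the relocation argument but to weaken the target: carry the composite relation $\mathord{\equiv} \circ \mathord{\preceq}$ (i.e., $Q' \equiv P'' \preceq P$ for some $P''$) through the downstream development instead of $\preceq$ itself. Under that reading the transfer becomes genuinely trivial --- take $Q = P$ and $P'' = P'$ --- and since the operational semantics is closed under \rulename{Congr}, the induction in the paper's Lemma B.3 and the final refinement-simulation theorem survive with $\mathord{\equiv} \circ \mathord{\preceq}$ in place of $\preceq$, at the modest price of redoing the remaining cases of that induction up to congruence.
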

\begin{proof}
	Trivial by cases on the rules for $\equiv$.
\end{proof}

\begin{lemma}\label{thm:nondetVSitelm1}
	Let $P' \preceq P$, if $\conf{\rho, P'} \rightarrow \Delta'$ then $\conf{\rho, P} \rightarrow \Delta$ for some $\Delta$ such that $\Delta' \preceq \Delta$.
\end{lemma}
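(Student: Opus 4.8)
The plan is to proceed by induction on the derivation of the refinement $P' \preceq P$, performing an inner case analysis on the last rule used to derive the transition $\conf{\rho, P'} \rightarrow \Delta'$. The reflexive base case $P = P'$ is immediate: take $\Delta = \Delta'$, which refines itself. For the prefix rules $\mu.P' \preceq \mu.P$ I do not even need the inductive hypothesis, since the transition of $\mu.P'$ produces a residual built from the body $P'$ (for $\mu \in \{\tau, \sop{E}{\tilde e}\}$ the distribution is $\singleton{\conf{\rho', P'}}$ for the appropriate $\rho'$, and for $\mu = \meas{\tilde e}{x}$ it is $\sum_m \distelem{p_m}{\singleton{\conf{\rho_m/p_m, P'[\sfrac{m}{x}]}}}$); the matching transition of $\mu.P$ yields the same distribution with $P$ in place of $P'$, and the premise $P' \preceq P$, closed under classical substitution (a routine auxiliary fact), gives $\Delta' \preceq \Delta$. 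The reception prefix $c?x.P'$ has no autonomous reduction, so $\conf{\rho, c?x.P'} \to \singleton{\bot}$ and the claim holds because $\bot$ refines everything.

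The structural rules are handled uniformly through two auxiliary observations. The first is \emph{summand monotonicity}: for guarded terms, $\conf{\rho, Q_1} \to \Delta$ implies $\conf{\rho, Q_1 + Q_2} \to \Delta$; this follows from \textsc{SCSumComm}/\textsc{SCSumAssoc}, which let us float the firing summand of $Q_1$ to the front so that one of \textsc{Tau}/\textsc{QOp}/\textsc{QMeas} applies to $Q_1 + Q_2$ with the same result (if instead $Q_1$ only deadlocks, any transition of $Q_1 + Q_2$ suffices, since $\bot$ refines everything). Using this, the single-branch rules ($P' \preceq P + Q$, and symmetrically), the sum rule ($P_1' + Q_1' \preceq P_1 + Q_1$), and the two if-then-else rules all go through: the transition of $P'$ is traced back to a transition of a refined sub-component, the inductive hypothesis provides a matching transition of the corresponding original sub-component, and summand monotonicity re-assembles it as a transition of $P$. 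For the if-then-else refinements the transition of $\ite{e}{P'}{Q'}$ is first resolved through \textsc{Congr} using \textsc{SCIteT}/\textsc{SCIteF} to the selected branch (or to $\bot$ when $e$ has a free variable), after which the inductive hypothesis applies to that branch. The restriction rule $P' \setminus c \preceq P \setminus c$ and the left/right \textsc{Par} sub-cases of the parallel rule are direct applications of the inductive hypothesis followed by re-wrapping with $\setminus c$ respectively $\parallel Q_1$, using that $\preceq$ is a congruence for these operators (which is built into the refinement rules).

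The main obstacle is the synchronising sub-case of the parallel rule $P_1' \parallel Q_1' \preceq P_1 \parallel Q_1$, where $\conf{\rho, P_1' \parallel Q_1'} \to \Delta'$ is derived by \textsc{Reduce} with the output offered by one factor and the matching input by the other (the case in which both sides of the synchronisation sit inside a single factor reduces to a \textsc{Par} step and is covered by the inductive hypothesis). Here one must know that refinement never deletes an offered communication but only prunes or guards the alternatives accompanying it. I would isolate this as a separate \emph{offer-preservation} lemma, proved by induction on $\preceq$: if $P' \preceq P$ and $P' \equiv (c!v + R') \parallel M'$ then $P \equiv (c!v + R) \parallel M$ with $M' \preceq M$, and dually if $P' \equiv ((c?x.S') + T') \parallel N'$ then $P \equiv ((c?x.S) + T) \parallel N$ with $S' \preceq S$ and $N' \preceq N$. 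Given this lemma, the two original factors $P_1, Q_1$ still offer the complementary actions, so $\conf{\rho, P_1 \parallel Q_1}$ performs the corresponding \textsc{Reduce} step, and the resulting continuation refines $\Delta'$ by the congruence of $\preceq$ for $\parallel$ together with its closure under the classical substitution $[\sfrac{v}{x}]$. Finally, the \textsc{Congr} cases throughout are discharged with \autoref{thm:nondetVSitelm0}, which guarantees that rewriting $P'$ by $\equiv$ can be mirrored by a rewriting of $P$ preserving $\preceq$, so that structural congruence and refinement interact smoothly.
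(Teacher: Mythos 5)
Your proof is organized as an induction on the derivation of $P' \preceq P$ with an inner \emph{case analysis} (not an inner induction) on the last rule of $\conf{\rho, P'} \rightarrow \Delta'$, and this scheme breaks down precisely at the transition rule \textsc{Congr}. There, the transition of $P'$ comes from $P' \equiv Q'$, $\conf{\rho, Q'} \rightarrow \Delta''$, $\Delta'' \equiv \Delta'$; \autoref{thm:nondetVSitelm0} gives you $Q$ with $P \equiv Q$ and $Q' \preceq Q$, but then you must apply the full statement of the lemma to the new pair $(Q', Q)$ --- and the derivation of $Q' \preceq Q$ supplied by \autoref{thm:nondetVSitelm0} need not be smaller than that of $P' \preceq P$, since structural congruence (commuting parallel components, resolving an $\ite{e}{\blank}{\blank}$, re-associating sums) can rearrange the term so that the refinement derivation is rebuilt from scratch. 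What \emph{does} decrease is the derivation of the transition, which is exactly why the paper runs the induction the other way around: primary induction on the derivation of $\conf{\rho, P'} \rightarrow \Delta'$, discharging \textsc{Congr} by applying the inductive hypothesis to the smaller premise transition together with the refinement pair obtained from \autoref{thm:nondetVSitelm0}. Your argument is repaired by the same inversion (or by a lexicographic induction in which the transition derivation is the dominant component); all of your case content survives the swap, since in each transition case the shape of $P'$ still determines which refinement rules can apply.

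Apart from this structural flaw, your cases are sound, and in two respects more thorough than the paper's own proof: the summand-monotonicity observation (transitions of $Q_1$ lift to $Q_1 + Q_2$ because \textsc{Tau}, \textsc{QOp} and \textsc{QMeas} carry the extra summand $+\,Q$, up to \textsc{SCSumComm}/\textsc{SCSumAssoc}) is the right tool for the single-branch and if-then-else refinement rules, and your offer-preservation lemma supplies the \textsc{Reduce} (synchronisation) case, which the paper's proof silently omits: refinement only prunes or guards alternatives inside sums and never deletes a parallel component or a prefix, so the complementary offers of $P'$ persist in $P$, and the input continuation stays refined after the substitution $[\sfrac{v}{x}]$ by closure of $\preceq$ under classical substitution. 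In short: keep your auxiliary lemmas, but flip the induction to be on the transition derivation.
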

\begin{proof}
	Consider the case $\Delta' \neq \bot$.
	We proceed by induction on the rules for $\conf{\rho, P'} \rightarrow \Delta'$.

	If the rule is \rulename{Tau}, then $P' = \tau.Q_1' + Q_2'$, with $\Delta' = \singleton{\conf{\rho, Q_1'}}$ and can only be the refinement of $P = \tau.Q_1 + Q_2$ for $Q_i' \preceq Q_i$.
	Then $\conf{\rho, P} \rightarrow \singleton{\conf{\rho, Q_1}}$, and the refinement holds by definition.

	The same applies to \rulename{Restrict}, \rulename{QOP} and \rulename{QMeas} by noticing that $P'\setminus c \preceq P$ only if $P = P'' \setminus c$, and that $P' \preceq P$ implies $P'[v/x] \preceq P[v/x]$ for any $v$ and $x$.
	
	Consider \rulename{Par}, then $P' = Q_1' \parallel Q_2'$, with $\Delta' = \Theta' \parallel Q_2'$ for some $\Theta'$ such that $\conf{\rho, Q_1'} \rightarrow \Theta'$.
	Notice that $P$ must be equal $Q_1 \parallel Q_2$ for some $Q_1$ and $Q_2$ such that $Q_i' \preceq Q_i$.
	By induction hypothesis, $\conf{\rho, Q_1} \rightarrow \Theta$ for some $\Theta$ such that $\Theta' \preceq \Theta$, and thus, by \rulename{Par} rule, $\conf{\rho, Q_1 \parallel Q_2} \rightarrow \Theta \parallel Q_2$, and $\Theta' \parallel Q_2' \preceq \Theta \parallel Q_2$ by definition.
	
	Consider \rulename{Congr} and assume $P' \equiv Q'$, $\conf{\rho, Q'} \rightarrow \Theta'$ and $\Theta' \equiv \Delta'$.
	By~\autoref{thm:nondetVSitelm0}, there exists some $Q$ such that $P \equiv Q$ and $Q' \preceq Q$.
	Then, by induction hypothesis, $\conf{\rho, Q} \rightarrow \Theta$ with $\Theta' \preceq \Theta$.
	From $\Theta' \equiv \Delta'$ and $\Theta' \preceq \Theta$, we know by~\autoref{thm:nondetVSitelm0} that $\Theta \equiv \Delta$ for some $\Delta$ such that $\Delta' \preceq \Delta$.
	Then $\conf{\rho, P} \rightarrow \Delta$ by applying \rulename{Congr}.
	
	Finally, consider the case in which $\Delta' = \bot$.
	Then, $\conf{\rho, P} \rightarrow \Delta$ either for $\Delta = \bot$ or for some $\Delta \neq \bot$, and in both cases $\bot \preceq \Delta$.
\end{proof}

\begin{lemma}\label{thm:nondetVSitelm2}
	Let $P' \preceq P$, if $\conf{\rho, P', R} \longsquiggly_\pi \Delta'$ then $\conf{\rho, P, R} \longsquiggly_\pi \Delta$ for some $\Delta$ such that $\Delta' \preceq \Delta$.
\end{lemma}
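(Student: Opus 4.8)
The plan is to prove the lemma by rule induction on the derivation of $\conf{\rho, P', R} \longsquiggly_\pi \Delta'$, exploiting that the refinement $\conf{\rho, P', R} \preceq \conf{\rho, P, R}$ acts only on the process component, leaving the state $\rho$ and the observer $R$ untouched. First I would dispatch the degenerate case $\Delta' = \singleton{\bot}$: since $\bot \preceq \Delta$ for every $\Delta$ and the augmentation convention always supplies some $\pi$-move out of $\conf{\rho, P, R}$ (a genuine one, or the stuck move to $\singleton{\bot}$), any such $\Delta$ suffices. For $\Delta' \neq \singleton{\bot}$ I proceed by cases on the last rule applied.

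The rules \rulename{OQOp} and \rulename{OQMeas} act only on the observer and carry the process along, so the identical rule fires for $\conf{\rho, P, R}$ and $\Delta' \preceq \Delta$ holds pointwise, using $P' \preceq P$ and the fact that superoperators and measurements touch only $\rho$. The rules \rulename{ParL} and \rulename{ParR} decompose the observer $R = R_1 \parallel R_2$; since the process is unchanged in the premise, the induction hypothesis applied to the sub-observer yields the matching move, which I reassemble, concluding $\Delta' \preceq \Delta$ by the parallel-refinement rule. For \rulename{Process} (the only $\diamond$-case) the move comes from a standard process reduction $\conf{\rho, P'} \to \Theta'$; here I invoke \autoref{thm:nondetVSitelm1} to obtain $\conf{\rho, P} \to \Theta$ with $\Theta' \preceq \Theta$, and reattaching $R$ gives the claim. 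The \rulename{Congr} case uses \autoref{thm:nondetVSitelm0}: from $P' \equiv P_1'$ I obtain $P \equiv P_1$ with $P_1' \preceq P_1$, apply the induction hypothesis to the congruent premise, and push the resulting $\equiv_o$ back through refinement; this needs \autoref{thm:nondetVSitelm0} lifted to distributions and to $\equiv_o$, which is routine since both relations are defined by linearity on point distributions.

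The genuinely delicate cases are \rulename{Input} and \rulename{Output}, which are base cases (no induction hypothesis) where the process interacts with the observer. Consider \rulename{Input}, where $P' \equiv ((c!v + P_1) \parallel Q) \setminus D$ offers a send on the unrestricted channel $c$. The key step is a refinement-inversion argument: exploiting the syntactic stratification between guarded $K$-terms and general $P$-terms (a restriction or a parallel can never refine a single summand of a sum), I would show that $((c!v + P_1) \parallel Q) \setminus D \preceq P$ forces $P \equiv ((c!v + A_1) \parallel A_2) \setminus D$ with $Q \preceq A_2$, so that the offer of $c!v$ survives upward through the refinement. The \rulename{Input} rule then fires on $P$, and $Q \setminus D \preceq A_2 \setminus D$ (by the restriction rule) together with the unchanged observer continuation $R_1[\sfrac{v}{x}]$ yields $\Delta' \preceq \Delta$. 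The \rulename{Output} case is symmetric: inversion shows that a reception $c?x.P_1$ offered by $P'$ is matched by a reception $c?x.P_1^*$ in $P$ with $P_1 \preceq P_1^*$, and closure of $\preceq$ under substitution (already used in \autoref{thm:nondetVSitelm1}) gives $P_1[\sfrac{v}{x}] \preceq P_1^*[\sfrac{v}{x}]$.

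I expect this inversion for \rulename{Input} and \rulename{Output} to be the main obstacle: one must verify carefully that a top-level send or receive offered by $P'$ is always offered by the more non-deterministic $P$, which hinges on the observation that the constructs produced by the refinement rules respect the $K$/$P$ grammar (in particular, the non-$K$ shape of a restriction-of-parallel cannot arise by resolving a single branch of a sum, so the sending/receiving summand cannot be ``erased'' going upward). Once this auxiliary inversion on $\preceq$ is in place, all cases close; \autoref{thm:nondetVSite} then follows by lifting the lemma to distributions through the decomposability of $\longsquiggly_\pi$.
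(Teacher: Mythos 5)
Your proof is correct and follows the paper's argument essentially verbatim: rule induction on the derivation of $\longsquiggly_\pi$, dispatching $\Delta' = \singleton{\bot}$ via $\bot \preceq \Delta$, invoking \autoref{thm:nondetVSitelm1} for \rulename{Process} and \autoref{thm:nondetVSitelm0} (lifted through $\equiv$ and $\equiv_o$) for \rulename{Congr}, with \rulename{Input} and \rulename{Output} as the only cases requiring an analysis of the shape of $P$. The refinement-inversion you single out as the delicate step is precisely what the paper asserts without proof (``$P'$ can only be the refinement of $P = (c!v + P_1) \parallel P_2 \setminus D$''), and your justification via the $K$/$P$ grammar stratification --- a restriction or parallel composition can never refine a guarded summand, so the offered send/receive survives upward modulo $\equiv$ --- is sound and in fact more careful than the paper's own treatment.
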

\begin{proof}
	Consider the case in which $\Delta' \neq \bot$.
	We proceed by induction on the rules for $\conf{\rho, P', R} \longsquiggly_\pi \Delta'$.
	
	Rules \rulename{OQOp}, \rulename{OQMeas}, \rulename{ParL} and \rulename{ParL} are trivial, since the rule is applicable for any process $P$.
	
	Consider \rulename{Input}, then $P' = (c!v + P_1') \parallel P_2' \setminus D$ and $R = c?x.R_1 + R_2$ with $c \notin D$.
	Moreover, $\pi = \epsilon$ and $\Delta' = \singleton{\conf{\rho, P_2' \setminus D, R_1[v/x]}}$.
	Notice that $P'$ can only be the refinement of $P = (c!v + P_1) \parallel P_2 \setminus D$ for $P_i' \preceq P_i$.
	Then the result holds by applying \rulename{Input} for obtaining $\conf{\rho, P, R} \longsquiggly_\pi \singleton{\conf{\rho, P_2 \setminus D, R_1[v/x]}}$.
	
	The proof for \rulename{Output} is the same, and follows from the fact that $Q' \preceq Q$ implies $Q'[v/x] \preceq Q[v/x]$ for any $v$ and $x$.
	
	Consider \rulename{process} and assume $\conf{\rho, P'} \rightarrow \Delta'$.
	Then, by~\autoref{thm:nondetVSitelm1}, $\conf{\rho, P} \rightarrow \Delta$ with $\Delta' \preceq \Delta$.
	We can thus apply \rulename{process} to derive $O[\conf{\rho, P}] \longsquiggly_\diamond O[\Delta]$, and $O[\Delta'] \preceq O[\Delta]$ holds by definition.
	
	Consider \rulename{Congr} and assume $P' \equiv Q'$, $R \equiv_O S$, $\conf{\rho, Q', S} \longsquiggly_\pi \Theta'$ and $\Theta' \equiv_O \Delta'$.
	From~\autoref{thm:nondetVSitelm0},  $P \equiv Q$ such that $Q' \preceq Q$, and thus, by induction hypothesis, $\conf{\rho, Q, S} \longsquiggly_\pi \Theta$ with $\Theta' \preceq \Theta$.
	Hence, from~\autoref{thm:nondetVSitelm0} and by definition of $\preceq$, a distribution $\Delta$ exists such that $\Theta \equiv \Delta$ and $\Delta \preceq \Delta'$.
	We can thus apply \rulename{Congr} to derive $\conf{\rho, P, R} \longsquiggly_\pi \Delta$.
	
	Finally, consider $\Delta' = \bot$.
	Then, $\conf{\rho, P, R} \longsquiggly_\pi \Delta$ either for $\Delta = \bot$ or for some $\Delta \neq \bot$, and in both cases $\bot \preceq \Delta$.
\end{proof}

\nondetVSite*
\begin{proof}
	By~\autoref{thm:nondetVSitelm2} and decomposability of $\longsquiggly_{\pi}$.
\end{proof}

\section{Proofs of Section~\ref{propertiesOfBisimilarity}}\label{sec:propertiesappendix}
\thmchinese*
\begin{proof}
 To prove the first point, suppose $\conf{\rho, P, \nil} \sim_{cs} \conf{\sigma, Q, \nil}$, with $\rho, \sigma \in \hilbert_{\tilde{q},\tilde{p}}$ and $\tilde{p} \vdash P$, $\tilde{p} \vdash Q$.  For any superoperator $\mathcal{TS}(\hilbert_{\tilde{q}})$ 
 we can construct a context 
 \[
 O[\blank] = [\blank] \parallel \sop{E}{{\tilde{q}}}.(a!0 \parallel c!\tilde{q})
 \] where $a$ is a fresh channel. We know that $O[\conf{\rho, P, \nil}]$ and $O[\conf{\sigma, Q, \nil}]$ are bisimilar, and $O[\conf{\rho, P, \nil}]$ can evolve in $\conf{\sop[\tilde{q}]{E}{\rho}, P, a!0 \parallel c!\tilde{q} \parallel \nil}$.
  Then $O[\conf{\sigma, Q, \nil}]$ must evolve in $\conf{\sop[\tilde{q}]{E}{\sigma}, Q, a!0 \parallel c!\tilde{q} \parallel \nil}$, because it must match the $\downarrow_a$ barb, and $a$ is fresh. So we have that  
\[\conf{\sop[\tilde{q}]{E}{\rho}, P, a!0 \parallel c!\tilde{q} \parallel \nil} \sim_{cs} 
  \conf{\sop[\tilde{q}]{E}{\sigma}, Q, a!0 \parallel c!\tilde{q} \parallel \nil}
\]
and from this it follows 
\[ \conf{\sop[\tilde{q}]{E}{\rho}, P, \nil} \sim_{cs} 
  \conf{\sop[\tilde{q}]{E}{\sigma}, Q, \nil}
\] simply by contradiction: if there was a context capable of distinguishing $\conf{\sop[\tilde{q}]{E}{\rho}, P, \nil}$ from $\conf{\sop[\tilde{q}]{E}{\rho}, P, \nil}$ then there would be a context able to distinguish also $\conf{\sop[\tilde{q}]{E}{\rho}, P, a!0 \parallel c!\tilde{q} \parallel \nil}$ from $\conf{\sop[\tilde{q}]{E}{\sigma}, Q, a!0 \parallel c!\tilde{q} \parallel \nil}$.

To prove the second point we proceed by contradiction, supposing  $\conf{\rho, P, \nil} \sim_{cs} \conf{\sigma, Q, \nil}$  and $tr_{\tilde{p}}(\rho) \neq tr_{\tilde{p}}(\sigma)$, with $\rho, \sigma \in \hilbert_{\tilde{q}, \tilde{p}}$ and $\tilde{p} \vdash P$, $\tilde{p} \vdash Q$.
 If $\ptrace{p}{\rho} \neq \ptrace{p}{\sigma}$, then there exists a measurement $M_{\tilde{q}} = \{M_1, \ldots ,M_m\}$ that distinguishes them, i.e.\ such that $p_m(\ptrace{p}{\rho}) = tr(M_m \ptrace{p}{\rho} M_m^\dagger) \neq tr(M_m \ptrace{p}{\sigma} M_m^\dagger) = p_m(\ptrace{p}{\sigma})$ for some $m$. But for \autoref{ptrace drops sop}, the same probabilities arise also from the measurement $M_{\tilde{q}\tilde{p}} = \{M_1 \otimes I_{\tilde{p}}, \ldots , M_m \otimes I_{\tilde{p}}\}$, that  can therefore distinguish the whole state $\rho$ from $\sigma$. So, taken the context 
\begin{align*}
O[\blank] =& [\blank] \parallel \meas{\tilde{q}}{x}. \\
&\nil_{\tilde{q}} \parallel \ite{x = 1}{c_1!0}{\ldots} \\
&\qquad\quad\ite{x = m-1}{c_{m-1}!0}{c_m!0}
\end{align*} 
where $c_1 \ldots c_m$ are fresh channels, we have that $O[\conf{\rho, P, \nil}]$ should be bisimilar to $O[\conf{\sigma, Q, \nil}]$. 
But 
\[O[\conf{\rho, P, \nil}] \longsquiggly_r \sum_m \distelem{p_m(\rho)} \conf{\rho_m, P, \nil \parallel c_m!0}\]
and $O[\conf{\sigma, Q, \nil}]$ can perform only the transition \[O[\conf{\sigma, Q, \nil}] \longsquiggly_r \sum_m \distelem{p_m(\sigma)} \conf{\sigma_m, Q, \nil \parallel c_m!0}\] to match the barbs, but we know that $p_m(\rho) \neq p_m(\sigma)$ for at least one $m$.
\end{proof}

In order to prove \autoref{thm:discarded}, we need an additional lemma.

\begin{lemma}\label{ptrace preserves arrow}
Assume that $\Delta$ is a distribution such that $\ptrace{q}{\Delta}$ is well-defined. Then, for $O[\blank]$ we have
\[ O[\Delta] \longsquiggly_\pi \Delta' \text{ if and only if } O[\ptrace{q}{\Delta}] \longsquiggly_\pi \ptrace{q}{\Delta'}
\]
\end{lemma}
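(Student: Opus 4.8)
The plan is to reduce to point distributions and then proceed by rule induction on the enhanced semantics. Since $\longsquiggly_\pi$ is obtained by lifting the transitions on single configurations and the partial trace $\ptrace{q}{\blank}$ is defined by linearity, it suffices to establish the equivalence for a single configuration $\iconf = \conf{\rho, P, R}$ with $\ptrace{q}{\iconf}$ well-defined, and to note that applying the context $O[\blank] = [\blank] \expar R_0$ merely places the fixed observer $R_0$ in parallel, so that $\ptrace{q}{O[\Delta]} = O[\ptrace{q}{\Delta}]$: the discard process $\nil_{\tilde q}$ that is traced out lives in the process component, never in $R_0$, which by typing cannot reference the discarded qubits. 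I would then prove both directions of the biconditional by exhibiting a bijection between the derivations of $\iconf \longsquiggly_\pi \Delta'$ and those of $\ptrace{q}{\iconf} \longsquiggly_\pi \ptrace{q}{\Delta'}$, by induction on the rules of \autoref{ext_semantics}.

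The decisive observation is that the qubits $\tilde q$ are owned by the persistent subprocess $\nil_{\tilde q}$: by linearity of the type system, neither the active process $P'$ (where $P \equiv P' \expar \nil_{\tilde q}$) nor the observer $R$ may mention any qubit in $\tilde q$. Consequently every superoperator and every measurement fired by \rulename{OQOp}, \rulename{OQMeas}, or by \rulename{QOp}/\rulename{QMeas} through \rulename{Process} acts only on qubits $\tilde p$ disjoint from $\tilde q$, i.e.\ it factors as $\mathcal{E}_{\tilde p} = \mathcal{E}'_{\tilde p} \otimes \mathcal{I}_{\tilde q}$ with $\mathcal{I}_{\tilde q}$ trace-preserving. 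Applying \autoref{ptrace drops sop} with $\mathcal{F} = \mathcal{I}_{\tilde q}$ gives $\ptrace{q}{\sop[\tilde p]{E}{\rho}} = \sop[\tilde p]{E}{\ptrace{q}{\rho}}$, so the post-states on the two sides agree; for measurements the outcome probabilities coincide as well, since $p_m = \tr(\rho_m) = \tr(\ptrace{q}{\rho_m})$ and $\ptrace{q}{\rho_m} = (M_m)_{\tilde p}(\ptrace{q}{\rho})$ by the same lemma. The remaining rules (\rulename{Input}, \rulename{Output}, \rulename{ParL}, \rulename{ParR}, \rulename{Congr}, and \rulename{Process} via the non-quantum rules of \autoref{semantics}) leave $\rho$ untouched and keep $\nil_{\tilde q}$ in place, so they commute with $\ptrace{q}{\blank}$ trivially. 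Well-definedness of the partial trace propagates throughout because $\nil_{\tilde q}$ with $\tilde q$ nonempty is never reduced, never split, and never erased by any transition or congruence rule (only $P \expar \nil \equiv P$ removes the empty discard).

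The converse direction is symmetric and poses no extra difficulty: a transition of $\ptrace{q}{\iconf}$ cannot use the erased qubits $\tilde q$, and since every transition of $\iconf$ leaves $\tilde q$ untouched, each move of $\ptrace{q}{\iconf}$ lifts to the unique corresponding move of $\iconf$ obtained by reinstating $\nil_{\tilde q}$ and the factor $\otimes\, \mathcal{I}_{\tilde q}$, with the quantum states again matched by \autoref{ptrace drops sop}. I expect the main obstacle to be precisely the quantum cases: one must justify, via the linearity of the type system, that the erased qubits are inaccessible to the operations being performed, so that the superoperators and measurements factor as $\mathcal{E}'_{\tilde p} \otimes \mathcal{I}_{\tilde q}$ and \autoref{ptrace drops sop} applies; the structural and communication rules are routine bookkeeping. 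The statement for arbitrary $\Delta$ then follows by linearity of $\longsquiggly_\pi$ and of the partial trace.
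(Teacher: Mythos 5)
Your proposal is correct and follows essentially the same route as the paper's proof: reduce to single configurations by linearity of $\longsquiggly_\pi$ and of the partial trace, proceed by rule induction where only the quantum rules (\rulename{QOp}, \rulename{QMeas}, \rulename{OQOp}, \rulename{OQMeas}) are non-trivial, and discharge those via \autoref{ptrace drops sop} with the identity (trace-preserving) factor on the discarded qubits, matching measurement probabilities through the trace. Your explicit appeal to the linear type system to justify that neither the active process nor the observer can touch $\tilde q$, and the explicit identity $\ptrace{q}{O[\Delta]} = O[\ptrace{q}{\Delta}]$, merely spell out steps the paper leaves implicit.
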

\begin{proof}
We will prove that, for any $\iconf \in \confbot$
\[
\singleton{\mathcal{C}}\longsquiggly_\pi \Delta' \text{ if and only if } \ptrace{q}{\singleton{\mathcal{C}}} \longsquiggly_\pi \ptrace{q}{\Delta'}
\]
from which the desired lemma follows easily  by linearity.

We will proceed by induction on $\singleton{\mathcal{C}}\longsquiggly_\pi \Delta'$. First, if $\iconf = \bot$, then the property follows from the definitions of $O[\bot]$ and $\ptrace{q}{\bot}$. Otherwise, notice that $disc(\tilde{q})$ is a deadlock process that cannot take part in any synchronization. The only interesting base cases are $\rulename{QOp}$, $\rulename{QMeas}$, $\rulename{OQOp}$ and $\rulename{OQMeas}$, when the process (or the observer) modifies the quantum state. We will deal only with the process, as the proof for observer is the same. All the other rules of $\rightarrow$ and $\longsquiggly$ are trivial, as they do not modify the quantum state. 

Assume that \[\singleton{\conf{\rho, (\sop{E}{\tilde{x}}.P + Q \parallel \nil_{\tilde{q}}), R}} \rightarrow \singleton{\conf{\sop[\tilde{x}]{E}{\rho}, (P \parallel \nil_{\tilde{q}}), R}}\]
with $\sop[\tilde{x}]{E}{\rho} = (\mathcal{E}\otimes \mathcal{I}_{\tilde{y}} \otimes \mathcal{I}_{\tilde{q}})(\rho)$, where $\tilde{y}$ are all the qubits not in $\tilde{x}$ nor in $\tilde{q}$, and $ \mathcal{I}_{\tilde{y}}$ is the identity superoperator on said qubits.
Then we also have \[\singleton{\conf{\ptrace{q}{\rho},(\sop{E}{\tilde{x}}.P + Q), R}} \rightarrow \singleton{\conf{\sop[\tilde{x}]{E}{\ptrace{q}{\rho}}, P, R}}\]
with $\sop[\tilde{x}]{E}{\ptrace{q}{\rho}} = (\mathcal{E}\otimes \mathcal{I}_{\tilde{y}}) (\ptrace{q}{\rho})$.
  From \autoref{ptrace drops sop}, we have 
  \[(\mathcal{E}\otimes \mathcal{I}_{\tilde{y}}) (\ptrace{q}{\rho}) = \ptrace{q}{(\mathcal{E}\otimes \mathcal{I}_{\tilde{y}} \otimes \mathcal{I}_{\tilde{q}})(\rho)}\]
  and we conclude  
	\[\singleton{\conf{\sop[\tilde{x}]{E}{\ptrace{q}{\rho}}, P, R}} = \ptrace{q}{\singleton{\conf{\sop[\tilde{x}]{E}{\rho}, (P \parallel \nil_{\tilde{q}}), R}}}\]
 The other direction is similar.
	
	The $\rulename{QMeas}$ case is also similar: \autoref{ptrace drops sop} implies $tr\left(\ptrace{q}{\sop[\tilde{x}, m]{M}{\rho}}\right) = tr\left(\sop[\tilde{x}, m]{M}{\ptrace{q}{\rho}}\right)$, so $\Delta'$ and $\ptrace{q}{\Delta'}$ have the same probabilities.
\end{proof}

We can now prove that distribution bisimilarity is closed for additional discarded qubits.

\bisimilaritydiscarded*
\begin{proof}
We need to show that 
\[\rel = \left\{(\Delta, \Theta) \mid  \ptrace{q}{\Delta} \sim_{cs} \ptrace{q}{\Theta}\right\}
\]
is a distribution bisimulation. Notice that if $\Sigma \vdash \Delta, \Theta$, then $\Sigma \setminus \tilde{q} \vdash \ptrace{q}{\Delta}, \ptrace{q}{\Theta}$.

For the first condition, observe that $\Delta \downarrow^p_b$ if and only if $\ptrace{q}{\Delta} \downarrow^p_b$, since the partial trace operation is linear and the process $\nil_{\tilde{q}}$ does not express any barb.

For the second condition, it is sufficient to notice that $\ptrace{q}{O[\Delta]} = O[\ptrace{q}{\Delta}]$ for any context $O[\blank]$. Then, we have that if $O[\Delta] \longsquiggly_\pi \Delta'$, then from \autoref{ptrace preserves arrow} we know that $\ptrace{q}{O[\Delta]} = O[\ptrace{q}{\Delta}] \longsquiggly_\pi \ptrace{q}{\Delta'}$, and then $O[\ptrace{q}{\Theta}] \longsquiggly_\pi \Theta'' \sim_{cs} \ptrace{q}{\Delta'}$ and $O[\Theta] \longsquiggly_\pi \Theta'$ such that $\Theta'' = \ptrace{q}{\Theta'}$. The third condition is symmetric.
\end{proof}

\uptocv*
\begin{proof}
Soundness follows from compatibility, proven in~\autoref{thm:cvcomp}, as shown in~\cite{sangiorgi_enhancements_2011}.
\end{proof}

\end{document}